\documentclass{article}

\usepackage{geometry}
 \geometry{twoside,
  paperwidth=210mm,
  paperheight=297mm,
  textheight=622pt,
  textwidth=480pt,
  centering,
  headheight=50pt,
  headsep=12pt,
  footskip=18pt,
  footnotesep=24pt plus 2pt minus 12pt,
  columnsep=2pc,
 }

\usepackage[utf8]{inputenc}

\usepackage[colorlinks]{hyperref}
\usepackage{mathtools}
\usepackage{amsthm}
\usepackage[capitalise]{cleveref}
\usepackage{comment}

\newtheorem{theorem}{Theorem}
\newtheorem{lemma}[theorem]{Lemma}
\newtheorem{proposition}[theorem]{Proposition}
\newtheorem{corollary}[theorem]{Corollary}

\theoremstyle{definition}
\newtheorem{example}[theorem]{Example}
\newtheorem{definition}[theorem]{Definition}
\newtheorem*{notation*}{Notation}
\newtheorem{remark}[theorem]{Remark}

\newtheorem{conjecture}[theorem]{Conjecture}
\usepackage{amssymb}
\usepackage{commath}
\usepackage{enumitem}
\usepackage{tikz}

\newcommand{\GL}{\mathrm{GL}}
\newcommand{\SL}{\mathrm{SL}}

\newcommand{\id}{\mathrm{id}}

\newcommand{\C}{\mathbb{C}}

\newcommand{\R}{\mathbb{R}}
\newcommand{\Q}{\mathbb{Q}}
\newcommand{\Z}{\mathbb{Z}}
\newcommand{\N}{\mathbb{N}}
\newcommand{\bbP}{\mathbb{P}}
\newcommand{\eps}{\varepsilon}
\DeclareMathOperator{\rk}{rk}

\DeclareMathOperator{\supp}{supp}

\newcommand{\sr}{g}
\newcommand{\SR}{G}

\newcommand{\cubedxdydz}[7]{
    \pgfmathparse{#1+#5}\let\dx\pgfmathresult
    \pgfmathparse{#1+#6}\let\dy\pgfmathresult
    \pgfmathparse{#1+#7}\let\dz\pgfmathresult
    \draw[very thick,fill=#2] (#5,#6,\dz) -- (\dx,#6,\dz) -- (\dx,\dy,\dz) -- (#5,\dy,\dz) -- cycle;
    \draw[very thick,fill=#3] (#5,\dy,\dz) -- (\dx,\dy,\dz) -- (\dx,\dy,#7) -- (#5,\dy,#7) -- cycle;
    \draw[very thick,fill=#4] (\dx,#6,\dz) -- (\dx,#6,#7) -- (\dx,\dy,#7) -- (\dx,\dy,\dz) -- cycle;
}



\title{Weighted slice rank and a minimax correspondence to Strassen's spectra}

\date{}

\usepackage{authblk}

\author[1]{Matthias Christandl}
\author[1]{Vladimir Lysikov}
\affil[1]{\small Department of Mathematical Sciences, University of Copenhagen}
\author[3]{Jeroen Zuiddam}
\affil[3]{\small Korteweg--de Vries Institute for Mathematics, University of Amsterdam}

\begin{document}



\newgeometry{top=1cm}

\maketitle

  \vspace*{-1cm}

\begin{abstract}
Structural and computational understanding of tensors is the driving force behind faster matrix multiplication algorithms, the unraveling of quantum entanglement, and the breakthrough on the cap set problem.
Strassen's asymptotic spectra program (FOCS 1986) characterizes optimal matrix multiplication algorithms 
through monotone functionals.
Our work advances and makes novel connections among two recent developments in the study of tensors, namely
\begin{itemize}
    \item the \emph{slice rank} of tensors, a notion of rank for tensors that emerged from the resolution of the cap set problem (Ann.\,Math.\,2017),
    \item and the \emph{quantum functionals} of tensors (STOC 2018), monotone functionals defined as optimizations over moment polytopes.
\end{itemize} 
More precisely, we introduce an extension of slice rank that we call \emph{weighted slice rank} and 
we develop a minimax correspondence between the asymptotic weighted slice rank and the quantum functionals. Weighted slice rank encapsulates different notions of bipartiteness of quantum entanglement.

The correspondence allows us to give a rank-type characterization of the quantum functionals. 
Moreover, whereas the original definition of the quantum functionals only works over the complex numbers, this new characterization can be extended to all fields.
Thereby, in addition to gaining deeper understanding of Strassen's theory for the complex numbers, we obtain a proposal for quantum functionals over other fields. The finite field case is crucial for combinatorial and algorithmic problems where the field can be optimized over.

\smallskip

\begin{center}
\noindent\textbf{R\'esum\'e}
\end{center}

\noindent La compréhension de la théorie des tenseurs, tant dans ses aspects théoriques que numériques, est une motivation majeure dans le développement de plusieurs domaines, dont les algorithmes de multiplication matricielle rapide, mais aussi dans la description de l'enchevêtrement quantique, et dans la percée sur le problème cap-set.
Le programme du spectre asymptotique de Strassen (FOCS 1986) caractérise les algorithmes optimaux de multiplication matricielle via des fonctionnelles monotones.
Notre travail propose de nouveaux liens entre deux développements récents dans l'étude des tenseurs, à savoir : 
\begin{itemize}
    \item le \emph{rang de tranche} des tenseurs, une notion de rang pour les tenseurs qui a vu le jour lors de la résolution du problème cap-set (Ann.\,Math.\,2017) ;
    \item les \emph{fonctionnelles quantiques} des tenseurs (STOC 2018), des fonctionnelles monotones définies comme des optimisations sur des polytopes.
\end{itemize} 
Plus précisément, nous introduisons une généralisation du rang de tranche que nous baptisons \emph{rang de tranche pondéré}, et nous développons une correspondance min-max entre le rang de tranche pondéré asymptotique et les fonctionnelles quantiques. Le rang de tranche pondéré contient des informations sur le caractère bipartite de l'enchevêtrement quantique.

Cette correspondance nous permet de donner une caractérisation des fonctionnelles quantiques en termes de rangs. La définition originelle des fonctionnelles quantiques a lieu sur le corps des nombres complexes, alors que cette nouvelle caractérisation a un sens sur tout corps. Ceci signifie que non seulement nous comprenons maintenant plus profondément la théorie de Strassen sur le corps des nombres complexes, mais nous avons aussi des candidates pour les fonctionnelles quantiques sur d'autres corps. Le cas des corps finis est crucial lorsque l'on étudie des problèmes combinatoires et algorithmiques où il est possible d'optimiser en variant le corps de définition.

\smallskip

\noindent\textbf{Keywords:} tensors, slice rank, asymptotic spectrum, moment polytopes\\
\textbf{MSC 2020:} 15A69, 15A72, 14L24, 49N15, 68Q17
\end{abstract}

\restoregeometry


\newpage

\section{Introduction}

The study of the structural and computational aspects of tensors (multi-dimensional arrays of numbers over a field) is deeply connected to key research areas in mathematics, physics and computer science. Examples of these are the research on sunflowers and cap sets in combinatorics~\cite{MR3583358,tao}, understanding quantum entanglement~\cite{MR1804183,li2018tripartite}, tensor networks, secant varieties of Segre varieties in algebraic geometry~\cite{landsberg2012tensors}, scaling algorithms~\cite{DBLP:conf/innovations/BurgisserGOWW18, 8555166}, the complexity of matrix multiplication~\cite{burgisser1997algebraic, blaser2013fast}, and the complexity of boolean functions~\cite{raz2013tensor, blaser2014explicit}.
We advance and make novel connections among two recent developments in the study of tensors:
\begin{itemize}
    \item the \emph{slice rank} of tensors~\cite{tao, sawin,  Blasiak-et-al}, which encapsulates the technique that provided the breakthrough result on the cap set problem in combinatorics,
    \item the \emph{quantum functionals} of complex tensors~\cite{DBLP:conf/stoc/ChristandlVZ18}, a family of multiplicative monotones ---similar in spirit and utility to multiplicative monotones in other settings, like the Lovász theta function in graph theory or the entropy function in information theory---which moved forward the research direction initiated by Strassen on the asymptotic\footnote{In this paper, the term ``asymptotic'' refers to considering large powers of tensors, which is motivated by many applications including the complexity of matrix multiplication, understanding quantum entanglement, and the sunflower and cap set problem in combinatorics. The power is taken under the tensor Kronecker product, which naturally generalizes the matrix Kronecker product.} properties of tensors \cite{Strassen:1986:AST, strassen1987relative, strassen1988asymptotic, strassen1991degeneration}.
\end{itemize}  
To start this off, as a central component of this paper, we introduce a new notion of rank for tensors called \emph{weighted slice rank}. This notion generalizes several existing notions of rank for tensors, most notably the slice rank and the non-commutative rank.
In this paper, we:
\begin{itemize}
    \item prove basic properties of the (asymptotic) weighted slice ranks,
    \item prove, via a minimax argument, a correspondence between the asymptotic weighted slice ranks and the quantum functionals, characterizing each as an optimization problem in terms of the other (over $\C$),
    \item as a consequence, obtain a characterization of asymptotic tripartite-to-bipartite entanglement transformation\footnote{In this task, Alice, Bob and Charlie, given many copies of a joint tripartite quantum state, together try to create as much entanglement as possible between Alice and Bob by stochastic local operations and classical communication (SLOCC) \cite{MR1804183}.} \cite{li2018tripartite}. Equivalently we characterize asymptotic non-commutative rank and asymptotic commutative rank (over $\C$),
    \item propose a notion of quantum functionals over other fields than the complex numbers via our minimax correspondence,
    \item and develop the quantum functionals with respect to the min-entropy instead of the entropy, which leads to a connection to the G-stable rank introduced in \cite{derksen2020gstable}.
\end{itemize}

The minimax correspondence between asymptotic weighted slice rank and the quantum functionals (\cref{thm:f-sr-intro}) has a general form which is of independent interest (\cref{sec:correspondence}).

A crucial gap in previous work is that the usual construction of the quantum functionals is not suitable for extension to other fields. The notion of quantum functionals over finite fields and other fields that we propose is obtained by applying our minimax correspondence to the asymptotic weighted slice ranks, which are defined over any field. We conjecture that these new functions indeed retain the properties of the complex quantum functionals. As evidence of this we prove this conjecture for the subclass of tight tensors over arbitrary fields. Finite fields have shown to be crucial in the study of combinatorial problems like the cap set problem and algorithmic problems like matrix multiplication barriers. 

Our results give a partial resolution to the fundamental structural and computational problem of understanding the \emph{asymptotic tensor restriction problem}. This problem is about the interplay of two tensor notions:
\begin{itemize}
    \item reductions between tensors, called restriction\footnote{A tensor is a \emph{restriction} of another tensor if, as multilinear maps, the first can be obtained from the second by composing with linear maps.
    In the application of tensors to the study of matrix multiplication this notion of restriction indeed amounts to a reduction between computational problems. Restriction will be denoted by $S \leq T$.}---a concept similar to reductions between computational problems in complexity theory,
    \item large powers of tensors\footnote{We take powers under the tensor Kronecker product. This product naturally generalizes the Kronecker product for matrices.}---a regime similar to parallel repetition problems, Shannon capacity, or self-reducible computational problems, like matrix multiplication.
\end{itemize}
The asymptotic tensor restriction problem asks whether, given two tensors, a large power of the first tensor is a restriction of a marginally larger power of the second tensor.
While many tensor parameters are known to be NP-hard to compute, there is an intriguing possibility that the asymptotic tensor restriction problem has an efficiently computable answer. (This would have considerable consequences, as, for example, computing the value of the matrix multiplication exponent is a special case.)
The asymptotic weighted slice ranks as well as the quantum functionals provide limits on which asymptotic restrictions between tensors are possible.

In the remainder of this introduction we will discuss in detail the main concepts and results of the paper. The full results, technical lemmas and proofs we will then discuss in the sections that follow after the introduction.

\textbf{Notation.} In the text we often use minimization over expressions involving $m_i^{1/\xi_i}$ with $m_i \geq 1$ or $\frac{h_i}{\xi_i}$ with $h_i \geq 0$ in the situation where some (but not all) of $\xi_i$ may be zero. If~$\xi_i = 0$, then these expressions are treated as $+\infty$, even if $m_i = 1$ or $h_i = 0$.

\subsection{Weighted slice rank}\label{subsec:wsr}

We introduce a new notion of rank for tensors called the \emph{weighted slice rank}.
In this paper a \emph{tensor} is any element in a tensor space $V_1 \otimes V_2 \otimes V_3$ for finite-dimensional vector spaces~$V_i$ over some field~$K$. In coordinates, relative to the standard product basis $e_i \otimes e_j \otimes e_k$ of $V_1 \otimes V_2 \otimes V_3$, a tensor~$T$ is defined by a three-dimensional array $(T_{i,j,k})_{i,j,k}$ of field elements $T_{i,j,k} \in K$, so that we have the expansion~$T = \sum_{i,j,k} T_{i,j,k}\, e_i \otimes e_j \otimes e_k$. Our results also hold for tensors of order higher than three, but for simplicity of the exposition we will here only talk about tensors of order three.

Weighted slice rank generalizes several important notions of rank for tensors.
To define the weighted slice rank we first need the concept of flattenings and slice decomposition of a tensor.

\begin{definition}[Flattenings]
For any tensor $T = \sum_{i,j,k} T_{i,j,k}\, e_i \otimes e_j \otimes e_k \in V_1 \otimes V_2 \otimes V_3$ we define the \emph{flattenings} of $T$ by \begin{align*}
T_{(1)} = \sum_{i,j,k} T_{i,j,k}\,\, e_i \otimes (e_j \otimes e_k) \in V_1 \otimes (V_2 \otimes V_3),\\
T_{(2)} = \sum_{i,j,k} T_{i,j,k}\,\, e_j \otimes (e_i \otimes e_k) \in V_2 \otimes (V_1 \otimes V_3),\\
T_{(3)} = \sum_{i,j,k} T_{i,j,k}\,\, e_k \otimes (e_i \otimes e_j) \in V_3 \otimes (V_1 \otimes V_2).
\end{align*}
\end{definition}

In coordinates, for a tensor $T = (T_{i,j,k})_{i,j,k}$ the three flattenings of $T$ are given by the three matrices $T_{(1)} = (T_{i,j,k})_{i,(j,k)}$, $T_{(2)} = (T_{i,j,k})_{j,(i,k)}$, and $T_{(3)} = (T_{i,j,k})_{k,(i,j)}$. 

\begin{definition}[Slice decompositions]\label{def:slice-decomp}
  For every $i \in [3] \coloneqq \{1,2,3\}$, we call $T \in V_1 \otimes V_2 \otimes V_3$ an \emph{$i$-slice} if the flattening $T_{(i)}$ has rank one as a matrix.
  For every $r_1, r_2, r_3 \in \N$ we say that \emph{$T$ has a slice decomposition of size $(r_1, r_2, r_3)$} if there are tensors~$T_{i,j} \in V_1 \otimes V_2 \otimes V_3$ ($i \in [3]$, $j \in [r_i]$) such that
  \[
    T = \sum_{j = 1}^{r_1} T_{1,j} + \sum_{j = 1}^{r_2} T_{2,j} + \sum_{j = 1}^{r_3} T_{3,j}
  \]
  and such that each tensor $T_{i,j}$ is an $i$-slice.
\end{definition}

In other words, $T$ has a slice decomposition of size $(r_1, r_2, r_3)$ if and only if there are tensors~$S_1, S_2, S_3$ such that $T = S_1 + S_2 + S_3$ and for every $i \in [3]$ the flattening $(S_i)_{(i)}$ has rank at most~$r_i$.

The \emph{slice rank} of a tensor $T$, introduced by Tao \cite{tao, sawin}, is defined as
\[
\min \{r_1 + r_2 + r_3 \mid \textnormal{$T$ has a slice decomposition of size $(r_1, r_2, r_3)$}\}.
\]
We define the \emph{weighted slice rank} by weighing the numbers $r_1, r_2, r_3$ that appear in the definition of slice rank, as follows. It will become clear later why the particular kind of weighing that we use is the appropriate one.

\begin{definition}[Weighted slice rank]
  Let 
  \[
    \Xi = \{(\xi_1, \xi_2, \xi_3) \mid \xi_1, \xi_2, \xi_3 \geq 0, \max \{\xi_1, \xi_2, \xi_3\} = 1\}.
  \]
  For any $\xi \in \Xi$ we define the \emph{$\xi$-weighted slice rank of $T$} as 
  \[
  S_{\xi}(T) \coloneqq \min \{ r_1^{1 /\xi_1} + r_2^{1/\xi_2} + r_3^{1/\xi_3} \mid \text{$T$ has
  a slice decomposition of size $(r_1, r_2, r_3)$}\}.
  \]
  If any of the $\xi_i$ equals $0$, then we require the value of the corresponding number~$r_i$ to be $0$ and we omit the term $r_i^{1/\xi_i}$.
\end{definition}

The weighted slice rank naturally generalizes the following known notions of rank for tensors.
\begin{itemize}
    \item \textbf{Slice rank:} The weighted slice rank with weight $(1,1,1)$,
    \[
    S_{(1,1,1)}(T) = \min \{r_1 + r_2 + r_3 \mid \textnormal{$T$ has a slice decomposition of size $(r_1, r_2, r_3)$}\},
    \]
    equals the ordinary slice rank of~$T$.
    \item \textbf{Flattening rank:} The matrix ranks of the flattenings $T_{(1)}, T_{(2)}, T_{(3)}$ of $T$ are called the flattening ranks. The flattening ranks are equal to the weighted slice rank with all weight on one of the numbers $r_i$,
    \begin{align*}
    S_{(1,0,0)}(T) &= \min \{r_1 \mid \textnormal{$T$ has a slice decomposition of size $(r_1, 0, 0)$}\},\\
    S_{(0,1,0)}(T) &= \min \{r_2 \mid \textnormal{$T$ has a slice decomposition of size $(0, r_2, 0)$}\},\\
    S_{(0,0,1)}(T) &= \min \{r_3 \mid \textnormal{$T$ has a slice decomposition of size $(0, 0, r_3)$}\}.
    \end{align*}
    \item \textbf{Non-commutative ranks:} The non-commutative ranks of a tensor (see \S\ref{sec:asymptotic-noncommutative-ranks} for the definition) are equal to the weighted slice ranks with weight $1$ on two out of the three numbers~$r_i$,
    \begin{align*}
    S_{(1,1,0)}(T) &= \min \{r_1 + r_2 \mid \textnormal{$T$ has a slice decomposition of size $(r_1, r_2, 0)$}\},\\
    S_{(0,1,1)}(T) &= \min \{r_2 + r_3 \mid \textnormal{$T$ has a slice decomposition of size $(0, r_2, r_3)$}\},\\
    S_{(1,0,1)}(T) &= \min \{r_1 + r_3 \mid \textnormal{$T$ has a slice decomposition of size $(r_1, 0, r_3)$}\}.
    \end{align*}
\end{itemize}

Obviously, slice rank is at most the non-commutative ranks, and the non-commutative ranks are in turn upper bounded by the flattening ranks according to the following Hasse diagram:
\begin{center}
\begin{tikzpicture}[scale=0.8]
  \node (a) at (-2,2) {$S_{(0,0,1)}$};
  \node (b) at (0,2) {$S_{(0,1,0)}$};
  \node (c) at (2,2) {$S_{(1,0,0)}$};
  \node (d) at (-2,0) {$S_{(0,1,1)}$};
  \node (e) at (0,0) {$S_{(1,0,1)}$};
  \node (f) at (2,0) {$S_{(1,1,0)}$};
  \node (min) at (0,-2) {$S_{(1,1,1)}$};
  \draw (min) -- (d) -- (a) 
  (b) -- (f)
  (e) -- (min) -- (f) -- (c)
  (d) -- (b);
  \draw[preaction={draw=white, -,line width=6pt}] (a) -- (e) -- (c);
\end{tikzpicture}
\end{center}
Generally, for any weightings $\xi, \xi' \in \Xi$, if the inequality $\xi \geq \xi'$ holds elementwise, then we have that $S_\xi \leq S_{\xi'}$ holds on all tensors.

We conclude that the weighted slice ranks naturally generalize and interpolate between the slice rank, the flattening ranks and the non-commutative ranks. Each of these ranks play their own specialized role in various settings. Weighted slice rank for the other possible weightings likewise play their own role, as we will see.

\subsection{Asymptotic weighted slice rank}\label{subsec:intro-awsr}
We are interested in the behaviour of weighted slice rank under taking large powers. The product under which we take powers is the tensor Kronecker product---the natural generalization of the matrix Kronecker product to tensors. For tensors $S \in V_1 \otimes V_2 \otimes V_3$ and $T \in W_1 \otimes W_2 \otimes W_3$ the tensor Kronecker product is a tensor $S \otimes T \in (V_1 \otimes W_1) \otimes (V_2 \otimes W_3) \otimes (V_3 \otimes W_3)$. In coordinates, if $S \in K^{n_1 \times n_2 \times n_3}$ and $T \in K^{m_1 \times m_2 \times m_3}$ then the coefficients of $S\otimes T$ are given by all pairwise products of coefficients of $S$ and coefficients of $T$, 
\[(S\otimes T)_{(i_1, i_2), (j_1, j_2), (k_1, k_2)} = S_{i_1, i_2, i_3} T_{j_1, j_2, j_3}.
\]
The behaviour of weighted slice rank under taking large powers is captured by the asymptotic weighted slice rank.

\begin{definition}[Asymptotic weighted slice rank]\label{def:awsr}
  For $\xi \in \Xi$ we define the \emph{asymptotic $\xi$-weighted slice rank of $T$} as 
  \[
    \SR_{\xi}(T) \coloneqq \limsup_{n \to \infty} S_{\xi}(T^{\otimes n})^{1/n}.
  \]
  We will frequently make use of the logarithm of~$\SR_\xi(T)$ which we denote by $g_\xi(T)$, that is,
  \[
    \sr_{\xi}(T) \coloneqq \log_2 \SR_{\xi}(T) =   \limsup_{n \to \infty} \frac{1}{n} \log_2 S_{\xi}(T^{\otimes n}).
  \]
\end{definition}
Since $G_\xi$ is defined as a lim sup, $G_\xi$ arguably does not capture all of the asymptotic behaviour of the weighted slice rank. For example, we could also consider the corresponding lim inf. However, it will turn out that in the two settings that we will consider (namely, complex tensors and so-called tight tensors), the lim sup and the lim inf coincide, and may equivalently be replaced by a limit.

The weighted slice rank $S_{(1,0,0)}$ is a flattening rank and is therefore multiplicative, therefore $G_{(1,0,0)}(T) = S_{(1,0,0)}(T)$. The same is true for $G_{(0,1,0)}$ and $G_{(0,0,1)}$. For general choices of $\xi \in \Xi$, however, the weighted slice rank~$S_\xi$ will be different from the asymptotic weighted slice rank $G_\xi$. Interesting special cases of the asymptotic weighted slice rank are the asymptotic slice rank $G_{(1,1,1)}$ and the asymptotic non-commutative ranks $G_{(1,1,0)}$, $G_{(1,0,1)}$ and $G_{(0,1,1)}$.

Our first main result is a dual characterization of the asymptotic weighted slice ranks for complex tensors.
This dual characterization is phrased as an optimization problem over a convex polytope called the moment polytope. For every complex tensor $T$, the moment polytope $\Pi(T)$ is defined as follows. Recall that for any tensor $T \in V_1 \otimes V_2 \otimes V_3$ we denote the flattenings in $(V_2 \otimes V_3) \otimes V_1$, $(V_1 \otimes V_3) \otimes V_2$, $(V_1 \otimes V_2) \otimes V_3$ by $T_{(1)}$, $T_{(2)}$, and $T_{(3)}$, respectively. For a nonzero tensor $T$ let $r(T_{(i)})$ be the probability vector obtained by squaring and normalizing the singular values $\sigma_1(T_{(i)}) \geq \sigma_2(T_{(i)}) \geq \cdots$ of the matrix~$T_{(i)}$, that is, 
\[
r(T_{(i)}) = (\sigma_1(T_{(i)})^2, \sigma_2(T_{(i)})^2, \ldots) / \sum_j \sigma_j(T_{(i)})^2.
\]
The moment polytope of a nonzero tensor $T$ is defined as the set of triples of probability vectors
\[
\Pi(T) = \{ (r(S_{(1)}), r(S_{(2)}), r(S_{(3)})) \mid S \in \overline{G\cdot T}, S \neq 0 \}.
\]
In the above definition the set $\overline{G\cdot T}$ is the orbit closure (closure under equivalently the Zariski or Euclidean topology) of $T$ under the action of the group $G = \GL(V_1) \times \GL(V_2) \times \GL(V_3)$. This is the same as the closure of the set of all tensors $S$ such that $S$ is a restriction of $T$.
For the characterization of the asymptotic weighted slice rank we furthermore need the Shannon entropy of a probability vector $p = (p_1, \ldots, p_n)$, which is defined as $H(p) = -\sum_i p_i\log_2 p_i$.
\begin{theorem}\label{thm:sr-polytope-intro}
  Let $T$ be a tensor over $\C$.
  For every $\xi \in \Xi$,
  the limsup in the definition of $\SR_{\xi}(T)$ is a limit
  and
  \[
     \SR_{\xi}(T) = \max_{p \in \Pi(T)} \min_{i\in [3]}  2^{H(p_i) / \xi_i}.
  \]
\end{theorem}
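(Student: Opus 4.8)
The plan is to prove the two matching bounds
\[
  \max_{p\in\Pi(T)}\min_{i\in[3]}2^{H(p_i)/\xi_i}\;\le\;\liminf_{n\to\infty}S_\xi(T^{\otimes n})^{1/n}\;\le\;\limsup_{n\to\infty}S_\xi(T^{\otimes n})^{1/n}\;\le\;\max_{p\in\Pi(T)}\min_{i\in[3]}2^{H(p_i)/\xi_i},
\]
whose outer ends coincide, so that the limsup is automatically a limit and equals the displayed quantity. Before the two nontrivial bounds it is worth recording a purely ``soft'' reformulation of the right-hand side that makes the link to the quantum functionals transparent: writing $\min_i a_i=\min_\lambda\sum_i\lambda_i a_i$ over probability vectors $\lambda$, using the convexity of the moment polytope $\Pi(T)$ and the concavity of $p\mapsto H(p_i)$, Sion's minimax theorem exchanges $\max_p$ and $\min_\lambda$; renormalizing the weights $\lambda_i/\xi_i$ to a probability vector $\theta$ and invoking the moment-polytope description $F_\theta(T)=\max_{p\in\Pi(T)}2^{\sum_i\theta_i H(p_i)}$ of the complex quantum functionals~\cite{DBLP:conf/stoc/ChristandlVZ18} gives $\max_{p\in\Pi(T)}\min_i 2^{H(p_i)/\xi_i}=\min_\theta F_\theta(T)^{1/\langle\theta,\xi\rangle}$. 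So the theorem is exactly the minimax correspondence with the quantum functionals (\cref{thm:f-sr-intro}, in its abstract form \cref{sec:correspondence}), and it remains to prove the two bounds above.

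For the lower bound, fix $p\in\Pi(T)$; the target is $S_\xi(T^{\otimes n})\ge 2^{-o(n)}\min_i 2^{nH(p_i)/\xi_i}$. A useful preliminary is that the \emph{asymptotic} weighted slice rank is monotone under degeneration, even though $S_\xi$ is not: if $S\in\overline{G\cdot T}$ then $S^{\otimes n}\le T^{\otimes n}\otimes\langle q_n\rangle$ with $q_n=\poly(n)$ (the standard amortization of a degeneration over many copies), and a slice decomposition of size $(r_1,r_2,r_3)$ of $A$ yields one of size $(q r_1,q r_2,q r_3)$ of $A\otimes\langle q\rangle$, so $S_\xi(A\otimes\langle q\rangle)\le\poly(q)\,S_\xi(A)$ and hence $\SR_\xi(S)\le\SR_\xi(T)$. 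Using the structure theory of moment polytopes over $\C$, one next passes to a tight (more generally, a suitably ``balanced'' combinatorial) tensor $S\in\overline{G\cdot T}$ whose coordinate marginals realize $p$, or rational points approximating it, so that $p$ lies, up to $\eps$, in $\Pi(S)$. For such tensors the bound is the Tao--Sawin-type lower bound on slice rank~\cite{tao,sawin} in its weighted form: no slice decomposition of $S^{\otimes n}$ can have all three parts $r_i$ below $2^{nH(p_i)-o(n)}$ --- a quantitative refinement of the fact that the diagonal tensor $\langle N\rangle$ has slice rank exactly $N$ --- and this forces $\sum_i r_i^{1/\xi_i}\ge 2^{-o(n)}\min_i 2^{nH(p_i)/\xi_i}$, i.e.\ $\SR_\xi(S)\ge 2^{-\eps}\min_i 2^{H(p_i)/\xi_i}$. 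Degeneration-monotonicity then gives $\SR_\xi(T)\ge\SR_\xi(S)$, and letting $p$ range over $\Pi(T)$ yields the first inequality.

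For the upper bound one must exhibit good slice decompositions of $T^{\otimes n}$. Choosing a maximizer $p^\ast$ of $\min_i 2^{H(p_i)/\xi_i}$ over $\Pi(T)$ and again appealing to the structure of moment polytopes over $\C$, one brings $T$ into a near-extremal position (a generic change of basis followed by a one-parameter-subgroup degeneration, whose cost over many copies is absorbed as above) in which, up to sub-exponential corrections, $T^{\otimes n}$ is concentrated on the sequences that are $p^\ast$-typical in all three flattenings. On this typical part a polynomial-method argument of Croot--Lev--Pach--Tao type~\cite{tao,sawin}, adapted to the weighted setting, produces a slice decomposition whose part in direction $i$ has size roughly $2^{nH(p^\ast_i)}$; distributing the typical blocks optimally among the three directions gives $S_\xi(T^{\otimes n})\le 2^{o(n)}\max_i 2^{nH(p^\ast_i)/\xi_i}$, the minimum over $i$ appearing precisely because $p^\ast$ balances the three quantities $2^{H(p_i)/\xi_i}$. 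Taking $n$-th roots yields the third inequality.

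The minimax reorganization in the first paragraph is routine once convexity of $\Pi(T)$ and the quantum-functional formula of~\cite{DBLP:conf/stoc/ChristandlVZ18} are in hand. The main obstacle I expect is the pair of ``transfer'' statements connecting the moment polytope $\Pi(T)$ to the combinatorics of large Kronecker powers --- that $T^{\otimes n}$ degenerates to, respectively restricts to (up to sub-exponential error), a combinatorial tensor whose coordinate-marginal entropies are dictated by a prescribed point of $\Pi(T)$ --- together with the two weighted-slice-rank estimates for such combinatorial tensors (the polynomial-method upper bound and the Tao--Sawin lower bound), and arranging for these to meet exactly, so that the liminf and limsup collapse to a common limit. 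The degeneration-monotonicity of $\SR_\xi$ and the elementary behaviour of $S_\xi$ under $\otimes\langle\poly(n)\rangle$ are the glue that renders the sub-exponential losses harmless.
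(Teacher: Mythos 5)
Your overall architecture (two matching bounds squeezing the limsup and liminf) is the right shape, and your opening minimax reformulation is correct but not needed for this theorem. The problem is that both of your ``transfer'' steps --- which you yourself flag as the main obstacle --- are left unproven, and as stated they are not true in general. For the lower bound you assert that for each $p\in\Pi(T)$ one can pass to a tight/combinatorial tensor $S\in\overline{G\cdot T}$ whose marginals realize $p$, and then apply a Tao--Sawin support-entropy bound. But a point of $\Pi(T)$ only records the singular-value spectra of the flattenings of some $S$ in the orbit closure; it carries no combinatorial support structure, and the orbit closure of a general complex tensor need not contain a tight tensor realizing a prescribed $p$. Similarly, your upper bound via $p^\ast$-typical supports after a ``near-extremal positioning'' is the mechanism that naturally yields the support polytope $W(T)$ (as in \cref{subsec:tight}), which contains $\Pi(T)$ but is generally strictly larger, so it would not meet your lower bound at $\Pi(T)$.

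The paper closes exactly this gap with representation theory rather than combinatorics of supports. One decomposes $T^{\otimes k}$ by Schur--Weyl duality into isotypic components $T^{\lambda}$ indexed by partition triples; the representation-theoretic description of the moment polytope (the semigroup property, \cref{thm:orbit-semigroup}) says that the rational points of $\Pi(T)$ are precisely the normalized $\lambda$ with $T^{\lambda}\neq 0$. For the lower bound, a nonzero $T^{\lambda}$ restricts to a nonzero $S_k$-invariant tensor inside $[\lambda_1]\otimes[\lambda_2]\otimes[\lambda_3]$, which is semistable by Kempf's test (\cref{lem:irrsemis}) because Specht modules are irreducible; semistability forces a quantitative lower bound on weighted slice rank (\cref{lem:slice-unstable,cor:slice-semis}), and $\dim[\lambda_i]\approx 2^{kH(\lambda_i/k)}$ converts this into the entropy expression. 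For the upper bound, one simply sums the trivial slice decompositions of the polynomially many components $T^{\lambda}$ (\cref{lem:sr-polytope-upper}); no polynomial method or typicality argument is needed. Fekete's lemma applied to the supermultiplicative quantity $M_{\xi,k}$ then gives the existence of the limit. If you want to salvage your route, you would need to prove your two transfer statements, and I do not see how to do so without essentially reconstructing the Schur--Weyl argument; so as it stands the proposal has a genuine gap at its core.
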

This result extends the result in \cite{DBLP:conf/stoc/ChristandlVZ18} that gave the same dual description only for the asymptotic slice rank (i.e.,~for $\xi = (1,1,1)$).

For tensors over arbitrary fields (i.e.~different from $\C$) we prove a dual description for the asymptotic weighted slice rank for an important subclass of all tensors, namely tensors with a \emph{tight support}\footnote{The \emph{support} of a tensor is the set of all triples of coordinates with nonzero coefficient. We call such a support $S \subseteq [n_1] \times [n_2] \times [n_3]$ \emph{tight} if there are injective maps $u_i : [n_i] \to \Z$ such that for every $s \in S$ it holds that $\sum_i u_i(s_i) = 0$. A tensor in $V_1 \otimes V_2 \otimes V_3$ is called tight if its support is tight for some choice of bases for the vector spaces $V_i$.}. Examples of tensors with a tight support have appeared in the applications to matrix multiplication and combinatorics. In this version of the dual description, the moment polytope~$\Pi(T)$ (which is only available over $\C$) is replaced by the polytope $W(T)$ of triples of marginals of probability vectors on the support of the tensor.

\begin{theorem}\label{thm:sr-tight-intro}
  Let $T$ be a tensor with tight support.
  For every~$\xi \in \Xi$,
  the limsup in the definition of $\SR_{\xi}(T)$ is a limit
  and
  \[
     \SR_{\xi}(T) = \max_{p \in W(T)} \min_{i\in [3]}  2^{H(p_i) / \xi_i}.
  \]
\end{theorem}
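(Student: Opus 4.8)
The plan is to prove the two inequalities separately, using the tight support structure to reduce to a combinatorial optimization over probability distributions on the support.

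\emph{Upper bound.} First I would show $\SR_\xi(T) \le \max_{p \in W(T)} \min_i 2^{H(p_i)/\xi_i}$. The standard tool here is the Salem--Spencer / Croot--Lev--Pach style construction that underlies degenerations of tight tensors: given a tight support $S$ with injective maps $u_i$ witnessing tightness, and given $n$, one restricts $T^{\otimes n}$ to the sub-support indexed by sequences whose empirical distribution (type) is close to a prescribed $p$ on $S$. Summing $u_i$ over coordinates gives a linear form that stratifies the tensor power into ``diagonal'' blocks; within a single type class, one can split the tensor into $i$-slices by fixing, for each of the three directions, a hyperplane selection, and the number of slices needed in direction $i$ is governed by the number of possible length-$n$ sequences with the $i$-th marginal type $p_i$, which is $2^{nH(p_i)(1+o(1))}$ by the standard type-counting estimate. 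Choosing the type $p$ to lie in $W(T)$ and extracting the dominant block (there are only polynomially many types) yields a slice decomposition of $T^{\otimes n}$ of size roughly $(2^{nH(p_1)}, 2^{nH(p_2)}, 2^{nH(p_3)})$ after the usual clean-up (handling the $\xi_i = 0$ cases by arranging the corresponding $r_i$ to vanish, which requires the support to be ``balanced'' in that direction — part of what $W(T)$ encodes). Raising to the $1/n$ power and taking $n \to \infty$ gives the bound, with the $\min_i$ appearing because a single $p$ must simultaneously work in all three directions and $S_\xi$ sums $r_i^{1/\xi_i}$ so the dominant term is $2^{H(p_i)/\xi_i}$ for the worst $i$; and we optimize over $p \in W(T)$ to get the max.

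\emph{Lower bound.} For the reverse inequality I would use a monotone/obstruction argument: produce, for each $p \in W(T)$, a quantity that lower-bounds $S_\xi(T^{\otimes n})^{1/n}$ and converges to $\min_i 2^{H(p_i)/\xi_i}$. The natural candidate is to exploit the fact that the flattening ranks are multiplicative and that, for a tensor with support containing a full ``type class'' of mass $p$, each flattening $T^{\otimes n}_{(i)}$ restricted to that class has rank roughly $2^{nH(p_i)}$; more precisely, one shows that any slice decomposition of size $(r_1, r_2, r_3)$ of $T^{\otimes n}$ must satisfy a pigeonhole constraint forcing $\sum_i r_i^{1/\xi_i}$ (equivalently $\max_i$, up to the $3^{1/\min \xi_i}$ factor that disappears asymptotically) to be at least the size of the appropriate combinatorial structure living over the type class. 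The cleanest route is probably to invoke a known lower bound for slice rank of tight tensors along each ``axis'' and interpolate, or to use the duality already available: apply \cref{thm:sr-polytope-intro} or the underlying machinery of \cite{DBLP:conf/stoc/ChristandlVZ18} after noting that for tight tensors the relevant moment polytope information is captured by $W(T)$ (the marginals of support distributions), since tight tensors degenerate to and from the ``combinatorial'' tensors supported on type classes whose moment polytopes are exactly such marginal polytopes.

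\emph{Main obstacle.} The technically delicate step is the lower bound, specifically showing that one cannot ``cheat'' by using slices in a cheap direction to cover structure that morally requires an expensive direction — i.e., that the combinatorial degeneration in the upper bound is essentially optimal for every weighting $\xi$ simultaneously, including the mixed cases like $\xi = (1,1,0)$ where one direction is forbidden. I expect this requires a careful argument that restricting $T^{\otimes n}$ to a single type class (to kill cross terms) loses nothing asymptotically, combined with the observation that on a single type class the three flattening ranks are pinned down exactly, so any slice decomposition there is forced to respect the entropy profile $p$. Handling the boundary of $\Xi$ (some $\xi_i = 0$) and the passage from $\limsup$ to an honest $\limit$ — which should follow from Fekete-type supermultiplicativity of $-\log S_\xi(T^{\otimes n})$ once both bounds match — will need attention but should be routine given the matching bounds.
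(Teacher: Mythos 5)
Your upper bound is essentially the paper's: decompose $T^{\otimes k}$ into type components $T^w \in [[w_1]]\otimes[[w_2]]\otimes[[w_3]]$, give each nonzero component a trivial slice decomposition in whichever direction $i$ minimizes $(\dim [[w_i]])^{1/\xi_i}$, bound $\dim[[w_i]] \leq 2^{kH(w_i/k)}$, and sum over the polynomially many types. (Tightness plays no role in this direction; the paper's \cref{lem:tight-ub} holds for every tensor.) The limit statement also comes out as you expect, except that Fekete's lemma is applied to $\log M_{\xi,k}(T)$, which is superadditive thanks to the semigroup property of nonzero types and $\binom{k+\ell}{w+v}\geq\binom{k}{w}\binom{\ell}{v}$ --- not to $-\log S_\xi(T^{\otimes n})$, whose super- or submultiplicativity is not directly available.

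The genuine gap is in the lower bound. Your central claim --- that on a single type class the flattening ranks are pinned down, ``so any slice decomposition there is forced to respect the entropy profile $p$'' --- does not follow: large flattening ranks in all three directions do not force large slice rank. The cap set tensor over $\F_3$ is tight, has full flattening ranks in every Kronecker power, yet has exponentially smaller slice rank, so no flattening-rank or pigeonhole argument alone can rule out mixed slice decompositions. Your fallback of invoking \cref{thm:sr-polytope-intro} or the machinery of \cite{DBLP:conf/stoc/ChristandlVZ18} is also unavailable, since those are proved only over $\C$, whereas the theorem is claimed over arbitrary fields. What the paper actually supplies, and what your proposal is missing, is a geometric-invariant-theory step: tightness makes $T^{\otimes k}$ invariant under $(K^{\times})^k \rtimes S_k$, the type spaces are irreducible representations of this group, so by the Kempf-based semistability test (\cref{lem:irrsemis}) every nonzero component $T^w$ is semistable (\cref{lem:sr-tight-semis}); and a semistable tensor admits no slice decomposition with all $r_i < \tfrac13 \dim V_i$ (\cref{lem:slice-unstable}, via an explicit destabilizing one-parameter subgroup). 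Combined with monotonicity under the restriction $T^w \leq T^{\otimes k}$ this gives $S_\xi(T^{\otimes k}) \geq c_\xi \min_i (\dim[[w_i]])^{1/\xi_i}$, which is exactly the constraint on mixed decompositions (including the boundary weightings with some $\xi_i = 0$) that you flag as delicate but do not establish.
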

This result extends the result in \cite{sawin} and \cite{DBLP:conf/stoc/ChristandlVZ18} that gave the same dual description only for the weighting $\xi = (1,1,1)$, albeit for the broader class of tensors with \emph{oblique} support.

\subsection{Quantum functionals}\label{sec:intro-quantum}
The quantum functionals were introduced in \cite{DBLP:conf/stoc/ChristandlVZ18} to study the asymptotic behaviour of complex tensors, advancing a line of research initiated by Strassen in the context of the arithmetic complexity of matrix multiplication.
The quantum functionals are a family of functions parametrized by the probability simplex
$
    \Theta \coloneqq \{ (\theta_1, \theta_2, \theta_3) \in \R^3 \mid \theta_1, \theta_2, \theta_3 \geq 0,\,  \theta_1 + \theta_2 + \theta_3 = 1 \}.
$
Each quantum functional is defined as an optimization problem over the moment polytope that we defined previously.
Let $\theta \in \Theta$.
  For any nonzero complex tensor $T$ we define
  \[
    F_{\theta}(T) = \max_{p \in \Pi(T)} 2^{ \langle \theta, (H(p_1), H(p_2), H(p_3)) \rangle },
  \]
  where, as before, $H$ denotes the Shannon entropy of a probability vector.
  The functions $F_\theta$ are called the \emph{quantum functionals}.
The main theorem on the quantum functionals that was proved in \cite{DBLP:conf/stoc/ChristandlVZ18} is that for any two complex tensors~$S$ and~$T$ the following properties hold:
  \begin{itemize}
  \item
    Monotonicity: $S \leq T \Rightarrow F_\theta(S) \leq F_\theta(T)$,\footnote{Recall that $S \leq T$ means that $S$ is a restriction of $T$, i.e.~that $S$ can be obtained from $T$ by applying linear maps to the tensor legs.}
  \item
    Additivity: $F_{\theta}(S \oplus T) = F_{\theta}(S) + F_{\theta}(T)$,
  \item
    Multiplicativity: $F_{\theta}(S \otimes T) = F_{\theta}(S) \cdot F_{\theta}(T)$,
  \item
    Normalization: $F_{\theta}(I_n) = n$ for any $n \in \N$ where $I_n = \sum_{i=1}^n e_i \otimes e_i \otimes e_i \in \C^n \otimes \C^n \otimes \C^n$ is the unit tensor.
  \end{itemize}
Tensors parameters that satisfy the above four properties were called \emph{universal spectral points} in~\cite{strassen1988asymptotic}.
In particular, the quantum functionals (and universal spectral points in general) have the following useful application: if an asymptotic inequality $S^{\otimes n} \leq T^{\otimes \alpha n + o(n)}$ holds for some rate $\alpha\in\R_{\geq 0}$, then $F_\theta(S) \leq F_\theta(T)^\alpha$ for all $\theta \in \Theta$. Therefore, contrapositively, if there exists a~$\theta$ such that $F_\theta(S) > F_\theta(T)^\alpha$ then the asymptotic inequality $S^{\otimes n} \leq T^{\otimes \alpha n + o(n)}$ cannot hold. (Considering all currently available results and examples, it is possible that $S^{\otimes n} \leq T^{\otimes \alpha n + o(n)}$ holds if and only if $F_\theta(S) \leq F_\theta(T)^\alpha$ for all $\theta \in \Theta$.)

The above properties of the quantum functionals were used to prove barriers for square matrix multiplication in \cite{christandl_et_al:LIPIcs:2019:10848}, which used the quantum functional with the uniform $\theta = (1/3,1/3,1/3)$. These barriers were concurrently obtained in~\cite{alman:LIPIcs:2019:10834} using slice rank in an important line of work \cite{alman_et_al:LIPIcs:2018:8360, 8555139, blasiak2017groups, Blasiak-et-al} improving and extending the earliest barriers in \cite{10.1145/2746539.2746554}. Barriers for rectangular matrix multiplication were obtained in \cite{christandl2020barriers}, which used the quantum functionals with non-uniform $\theta \in \Theta$.

Our second main result is a correspondence between the quantum functionals and the asymptotic weighted slice ranks. This correspondence allows us to compute the asymptotic weighted slice ranks in terms of the quantum functionals and vice versa.

\begin{theorem}\label{thm:f-sr-intro}
  For every complex tensor $T$ and every $\xi \in \Xi$ we have
  \[
  \SR_{\xi}(T) = \min_{\theta \in \Theta(\xi)} {F_{\theta}(T)}^{1/\langle \theta, \xi\rangle}.
  \]
  where $\Theta(\xi) = \{ \theta \in \Theta \mid \theta_i = 0 \text{ if $\xi_i = 0$}\}$.
  
  Conversely, for every complex tensor $T$ and every $\theta \in \Theta$ we have
  \[
  F_{\theta}(T) = \max_{\xi \in \Xi} \SR_{\xi}(T)^{\langle \theta, \xi\rangle}.
  \]
\end{theorem}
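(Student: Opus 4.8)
The plan is to reduce both identities to the description $\SR_\xi(T)=\max_{p\in\Pi(T)}\min_{i\in[3]}2^{H(p_i)/\xi_i}$ from \cref{thm:sr-polytope-intro} together with the definition $F_\theta(T)=\max_{p\in\Pi(T)}2^{\langle\theta,(H(p_1),H(p_2),H(p_3))\rangle}$, and then recognize the claimed formulas, after passing to logarithms, as an instance of convex minimax duality. Assume $T\neq 0$ (otherwise everything is trivial), and write $h(p)\coloneqq(H(p_1),H(p_2),H(p_3))\in\R^3_{\ge 0}$ for $p\in\Pi(T)$. Since $\Pi(T)$ is a nonempty compact convex polytope and the Shannon entropy is concave, the map $p\mapsto\langle\theta,h(p)\rangle$ is concave for every $\theta\in\Theta$. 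Taking $\log_2$, the two target formulas become
\[
\max_{p\in\Pi(T)}\min_{i\in[3]}\frac{h_i(p)}{\xi_i}=\min_{\theta\in\Theta(\xi)}\frac{\max_{p\in\Pi(T)}\langle\theta,h(p)\rangle}{\langle\theta,\xi\rangle},\qquad \max_{p\in\Pi(T)}\langle\theta,h(p)\rangle=\max_{\xi\in\Xi}\Bigl(\langle\theta,\xi\rangle\cdot\max_{p\in\Pi(T)}\min_{i\in[3]}\tfrac{h_i(p)}{\xi_i}\Bigr).
\]

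The engine consists of two elementary pointwise identities, valid for any vector $h\in\R^3_{\ge 0}$. First, for fixed $\xi\in\Xi$,
\[
\min_{i\in[3]}\frac{h_i}{\xi_i}=\min_{\theta\in\Theta(\xi)}\frac{\langle\theta,h\rangle}{\langle\theta,\xi\rangle},
\]
because $\langle\theta,h\rangle/\langle\theta,\xi\rangle$ is the convex combination of the numbers $h_i/\xi_i$ (over $i$ with $\xi_i>0$) with weights $\theta_i\xi_i/\langle\theta,\xi\rangle$, and restricting to $\Theta(\xi)$ lets these weights concentrate on any single coordinate with $\xi_i>0$ while excluding the coordinates where $\xi_i=0$ (on which $h_i/\xi_i=+\infty$ by the convention in the introduction); note $\langle\theta,\xi\rangle\ge\min\{\xi_i\mid\xi_i>0\}>0$ on the compact set $\Theta(\xi)$, since $\sum_{i:\xi_i>0}\theta_i=1$ there. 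Second, for fixed $\theta\in\Theta$,
\[
\langle\theta,h\rangle=\max_{\xi\in\Xi}\ \langle\theta,\xi\rangle\cdot\min_{i\in[3]}\frac{h_i}{\xi_i},
\]
where ``$\ge$'' follows from $h_i\ge(\min_j h_j/\xi_j)\,\xi_i$ for all $i$, and equality (for $h\neq 0$; the case $h=0$ being trivial) is witnessed by $\xi_i=h_i/\max_j h_j\in\Xi$, which makes every ratio $h_i/\xi_i$ equal to $\max_j h_j$ and gives $\langle\theta,\xi\rangle\max_j h_j=\langle\theta,h\rangle$ — a coordinate with $h_i=0$ forces $\xi_i=0$, which is permitted in $\Xi$ and consistent with the conventions.

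Given these, the second formula of the theorem follows immediately: ``$\ge$'' comes from the inequality $\langle\theta,\xi\rangle\min_i h_i(p)/\xi_i\le\langle\theta,h(p)\rangle$ (the ``$\ge$'' direction of the second pointwise identity) after pulling the $p$-independent factor $\langle\theta,\xi\rangle$ through the inner maximum; ``$\le$'' comes from applying the second pointwise identity at a maximizing $p$ and then enlarging $\min_i h_i(p)/\xi_i$ to $\max_p\min_i h_i(p)/\xi_i$. For the first formula, the ``$\le$'' direction is again weak duality, obtained from the ``$\le$'' direction of the first pointwise identity. For the remaining ``$\ge$'' direction, set $\psi(p,\theta)\coloneqq\langle\theta,h(p)\rangle/\langle\theta,\xi\rangle$; by the first pointwise identity the left-hand side equals $\max_{p\in\Pi(T)}\min_{\theta\in\Theta(\xi)}\psi(p,\theta)$ and the right-hand side equals $\min_{\theta\in\Theta(\xi)}\max_{p\in\Pi(T)}\psi(p,\theta)$ (using once more that $\langle\theta,\xi\rangle$ does not depend on $p$), so it suffices to apply Sion's minimax theorem: $\Pi(T)$ and $\Theta(\xi)$ are nonempty compact convex sets, $\psi$ is continuous since the denominator is bounded away from $0$, $p\mapsto\psi(p,\theta)$ is concave (a positive scalar times a concave function), and $\theta\mapsto\psi(p,\theta)$ is a ratio of affine functions with positive denominator on $\Theta(\xi)$, hence quasi-convex. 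Exponentiating base $2$ recovers both displayed identities.

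I expect the only genuinely non-routine point to be the strong-duality (``$\max\min\ge\min\max$'') direction of the first formula, i.e.\ checking that Sion's theorem applies — in particular the quasi-convexity of the linear-fractional map $\theta\mapsto\psi(p,\theta)$ and the positivity of the denominator $\langle\theta,\xi\rangle$, for which the restriction to $\Theta(\xi)$ and the compactness of that face of the simplex are exactly what is needed; everything else reduces to the two pointwise identities. It is natural to isolate this as an abstract lemma in \cref{sec:correspondence}: for a nonempty compact convex set $P$ and concave functions $f_1,f_2,f_3\colon P\to\R_{\ge 0}$, the pair $\xi\mapsto\max_{p\in P}\min_i f_i(p)/\xi_i$ and $\theta\mapsto\max_{p\in P}\sum_i\theta_i f_i(p)$ satisfy the two minimax identities, and the theorem is the special case $P=\Pi(T)$, $f_i(p)=H(p_i)$. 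In particular no convexity of the entropy image $\{h(p)\mid p\in\Pi(T)\}$ is required, since the minimax argument is carried out directly on $\Pi(T)$.
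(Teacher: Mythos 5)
Your proposal is correct and follows essentially the same route as the paper: the paper also reduces both identities to \cref{thm:sr-polytope-intro} plus an abstract minimax correspondence for what it calls dual pairs (\cref{legendrevar} in \cref{sec:correspondence}), proved via the same two pointwise identities (the first via vertex-attainment of quasiconcave minima on $\Theta(\xi)$, the second with the same witness $\xi = h(x)/\max_i h_i(x)$) and the von Neumann quasiconcave--quasiconvex minimax theorem in place of your invocation of Sion. The only cosmetic difference is that you justify the first pointwise identity directly as a convex combination rather than citing the vertex lemma.
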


While the quantum functionals are only defined over the complex numbers, the asymptotic weighted slice ranks are defined over any field. We conjecture that via the correspondence in \cref{thm:f-sr-intro} we can extend the quantum functionals to other fields, as follows.

Define for every $\theta \in \Theta$ the function $F^K_\theta$ on tensors $T$ over a field $K$ via
\[
F^K_{\theta}(T) \coloneqq \max_{\xi \in \Xi} \SR_{\xi}(T)^{\langle \theta, \xi\rangle}.
\]
The following properties follow directly from properties of $G_\xi$ as we will see:
\begin{proposition}\label{prop:mon-norm-intro}
For every field $K$ and every $\theta \in \Theta$, the function $F^K_{\theta}$ is monotone and normalized.
\end{proposition}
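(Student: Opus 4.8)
The plan is to derive both properties of $F^K_\theta$ directly from the corresponding properties of the asymptotic weighted slice ranks $G_\xi$, using the definition $F^K_\theta(T) = \max_{\xi \in \Xi} \SR_\xi(T)^{\langle \theta, \xi \rangle}$. So the first step is to record the needed properties of $\SR_\xi$ over an arbitrary field $K$: monotonicity ($S \leq T \Rightarrow \SR_\xi(S) \leq \SR_\xi(T)$) and the normalization $\SR_\xi(I_n) = n$. Monotonicity of $\SR_\xi$ is immediate from monotonicity of $S_\xi$ under restriction (a slice decomposition of $T^{\otimes m}$ of size $(r_1,r_2,r_3)$ pulls back to one of $S^{\otimes m}$ of size at most $(r_1,r_2,r_3)$, since applying linear maps to the legs sends an $i$-slice to an $i$-slice or to zero) together with the fact that $\limsup$ preserves inequalities. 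The normalization $S_\xi(I_n) = n$ follows because $I_n$ has flattening rank $n$ in each direction, so any slice decomposition of size $(r_1,r_2,r_3)$ must satisfy $r_1 + r_2 + r_3 \geq n$ by subadditivity of matrix rank, and hence $r_1^{1/\xi_1} + r_2^{1/\xi_2} + r_3^{1/\xi_3} \geq r_1 + r_2 + r_3 \geq n$ (using $r_i^{1/\xi_i} \geq r_i$ since $\xi_i \leq 1$ and $r_i \geq 1$ when nonzero); the decomposition $I_n = \sum_{j=1}^n e_j \otimes e_j \otimes e_j$ with all terms being $1$-slices shows $S_\xi(I_n) \leq n^{1/\xi_{i_0}} = n$ for any $i_0$ with $\xi_{i_0} = 1$, and such an $i_0$ exists by definition of $\Xi$. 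Since $I_n^{\otimes m} = I_{n^m}$, we get $\SR_\xi(I_n) = n$ for all $\xi$.

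Given these, monotonicity of $F^K_\theta$ is a one-liner: if $S \leq T$, then for each $\xi \in \Xi$ we have $\SR_\xi(S)^{\langle \theta, \xi \rangle} \leq \SR_\xi(T)^{\langle \theta, \xi \rangle}$ (the exponent $\langle \theta, \xi \rangle$ is nonnegative), so taking the maximum over $\xi$ gives $F^K_\theta(S) \leq F^K_\theta(T)$. For normalization, $F^K_\theta(I_n) = \max_{\xi \in \Xi} n^{\langle \theta, \xi \rangle}$; since $\langle \theta, \xi \rangle \leq \max_i \xi_i = 1$ with equality attainable (e.g.\ $\xi = (1,1,1)$ gives $\langle \theta, \xi \rangle = \theta_1 + \theta_2 + \theta_3 = 1$), the maximum equals $n^1 = n$, provided $n \geq 1$; the case $n = 1$ gives $I_1$, a single slice, with $\SR_\xi(I_1) = 1$, so $F^K_\theta(I_1) = 1$, consistent.

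There is no serious obstacle here — the proposition is deliberately the "easy half" — but the one point requiring care is the handling of the zero-pattern conventions: when some $\xi_i = 0$ the term $r_i^{1/\xi_i}$ is omitted (and $r_i$ forced to $0$), and when $\theta_i$ or $\xi_i$ vanishes the inner products must be read accordingly. I would state explicitly that these conventions are consistent with the inequalities used (in particular $\langle \theta, \xi \rangle \in [0,1]$ always, and the normalization argument only needs one coordinate where both $\theta$ and the chosen $\xi$ are supported, which is automatic for $\xi = (1,1,1)$). Everything else is a direct transfer of the corresponding facts about $S_\xi$ and $\SR_\xi$ proven earlier over general fields.
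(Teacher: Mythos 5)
Your overall route is exactly the paper's: Proposition~\ref{prop:mon-norm} is proved there in two sentences by transferring monotonicity and normalization from $S_\xi$ to $\SR_\xi$ and then to $F^K_\theta$. Your transfer steps (monotonicity of $S_\xi$ under restriction and of the $\limsup$, the fact that $\langle\theta,\xi\rangle\in[0,1]$ with equality at $\xi=(1,1,1)$, and the bookkeeping for zero coordinates) are all correct.

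There is, however, a genuine gap in your justification of the normalization $S_\xi(I_n)=n$. You claim that any slice decomposition of $I_n$ of size $(r_1,r_2,r_3)$ satisfies $r_1+r_2+r_3\ge n$ ``by subadditivity of matrix rank.'' Writing $I_n=S_1+S_2+S_3$ with $\rk (S_i)_{(i)}=r_i$, subadditivity applied to the flattening in direction~$1$ gives only $n=\rk (I_n)_{(1)}\le \rk (S_1)_{(1)}+\rk (S_2)_{(1)}+\rk (S_3)_{(1)}$, and while $\rk (S_1)_{(1)}=r_1$, the ranks $\rk (S_2)_{(1)}$ and $\rk (S_3)_{(1)}$ are \emph{not} bounded by $r_2$ and $r_3$. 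For instance, the tensor $\sum_{i=1}^{n} e_i\otimes e_1\otimes e_i$ is a single $2$-slice (decomposition of size $(0,1,0)$) yet has flattening rank $n$ in direction~$1$, so flattening ranks do not lower-bound slice decompositions in this way. The lower bound $S_{(1,1,1)}(I_n)\ge n$ is Tao's theorem on the slice rank of diagonal tensors, which needs a genuinely different argument; the paper simply cites \cite{tao} for it and then invokes \cref{lem:sr-basic}\ref{lem:all-max} to conclude $S_\xi(I_n)=n$ for every $\xi\in\Xi$. If you replace your subadditivity claim by that citation (or by an actual proof of Tao's lemma), the remainder of your argument, including the multiplicativity $I_n^{\otimes m}=I_{n^m}$ and the final maximization over $\xi$, goes through.
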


We make the following conjecture:
\begin{conjecture}\label{conj:intro}
For every field $K$ and every $\theta \in \Theta$, the function $F^K_{\theta}$ is additive and multiplicative.
\end{conjecture}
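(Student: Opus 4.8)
The plan is to obtain additivity and multiplicativity of $F^K_\theta$ (monotonicity and normalization being already in \cref{prop:mon-norm-intro}) from a dual, entropy-optimization description. Concretely, suppose one can attach to every tensor $T$ over $K$ a polytope $P_K(T)$ of triples of probability vectors with
\[
\SR_\xi(T) = \max_{p \in P_K(T)} \min_{i \in [3]} 2^{H(p_i)/\xi_i} \qquad \text{for all } \xi \in \Xi,
\]
and such that $P_K$ is \emph{monoidal}: $P_K(S \otimes T) \supseteq P_K(S) \times P_K(T)$ (componentwise products of probability vectors), every triple in $P_K(S \otimes T)$ projects — marginalizing each component onto the $S$-indices, resp.\ onto the $T$-indices — into $P_K(S)$, resp.\ $P_K(T)$, and $P_K(S \oplus T)$ is the set of disjoint joins $\lambda p \sqcup (1-\lambda) q$ with $p \in P_K(S)$, $q \in P_K(T)$, $\lambda \in [0,1]$. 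Then, first, inserting the displayed formula into $F^K_\theta(T) = \max_{\xi \in \Xi} \SR_\xi(T)^{\langle \theta, \xi \rangle}$ and applying the minimax correspondence of \cref{sec:correspondence} — which governs precisely the passage between $\max_\xi(\cdots)^{\langle \theta, \xi \rangle}$ and $\max_p 2^{\langle \theta, H(p) \rangle}$ — yields $F^K_\theta(T) = \max_{p \in P_K(T)} 2^{\langle \theta, (H(p_1), H(p_2), H(p_3)) \rangle}$, the shape of a Strassen support functional. Second, from this description additivity and multiplicativity follow by the by-now-standard information-theoretic argument: $F^K_\theta(S \otimes T) \geq F^K_\theta(S) F^K_\theta(T)$ comes from product distributions together with additivity of $H$; the matching upper bound uses subadditivity of $H$ (a marginal of a distribution on a product has entropy at most the sum of the entropies of its two further marginals) together with the monoidal projection property; and additivity is extracted from the elementary identity $\max_{\lambda \in [0,1]} 2^{H(\lambda,\, 1-\lambda)} x^{\lambda} y^{1-\lambda} = x + y$ applied to the disjoint join. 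This is exactly how the complex quantum functionals were treated in \cite{DBLP:conf/stoc/ChristandlVZ18} and the support functionals in \cite{strassen1991degeneration}. Over $\C$ there is nothing further: \cref{thm:f-sr-intro} gives $F^\C_\theta = F_\theta$, realized with $P_\C(T) = \Pi(T)$.

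The first case over other fields in which this plan can be carried out in full — and the evidence we offer for the conjecture — is that of tensors with \emph{tight} support. For such $T$, \cref{thm:sr-tight-intro} supplies exactly the required formula with $P_K(T) = W(T)$, the polytope of triples of marginals of probability distributions on $\supp(T)$. Since $\supp(S \otimes T) = \supp(S) \times \supp(T)$ and $\supp(S \oplus T) = \supp(S) \sqcup \supp(T)$, and since tightness is preserved under both $\otimes$ and $\oplus$ (compose, resp.\ shift apart, the witnessing maps $u_i$), the assignment $W(\cdot)$ is monoidal in the sense above. Thus the argument of the previous paragraph applies verbatim and shows that $F^K_\theta$ is additive and multiplicative on all tight tensors over every field — nothing in it depends on $K$, as only the combinatorics of the support enters.

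For arbitrary tensors over an arbitrary field, the plan would settle the conjecture as soon as a monoidal $P_K(T)$ with the displayed max-min-entropy formula is available. Over $\C$ this is the moment polytope, and it is essential to have such a polytope and not merely the functions $\SR_\xi$ themselves: the proof of additivity and multiplicativity works by replacing ``$\min_i$'' with the linear functional $\langle \theta, \cdot \rangle$ and then using that the marginals of a triple in $P_K(S \otimes T)$ restrict into $P_K(S)$ and $P_K(T)$, so it does not reduce to a direct property of the $\SR_\xi$. The definition of the moment polytope via orbit closures (under $\GL \times \GL \times \GL$), and its behaviour under $\otimes$ and $\oplus$, rest on complex algebraic geometry and on the representation theory behind the Kempf--Ness theorem and the moment map, none of which is available over a finite field, where moreover orbit closures are finite and a direct ``combinatorial moment polytope'' degenerates. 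I therefore expect the main obstacle to be the construction of a field-independent, functorial substitute for the moment polytope — equivalently, a direct proof that $\SR_\xi$ over any field admits a monoidal max-min-entropy description. Natural intermediate targets are to extend \cref{thm:sr-tight-intro} from tight to oblique supports (the case $\xi = (1,1,1)$ being known, by \cite{sawin}) and, more ambitiously, to control $\SR_\xi(T)$ for general $T$ in terms of the tight or oblique restrictions and degenerations contained in its orbit closure; closing the gap between the lower and upper estimates so obtained is where the missing structural input, and hence the difficulty, is concentrated.
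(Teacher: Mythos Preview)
The statement you are addressing is a \emph{conjecture} in the paper, not a proven theorem; there is no proof in the paper to compare against. What the paper offers is evidence: (i) over $\C$, \cref{thm:f-sr-intro} gives $F^\C_\theta = F_\theta$, so the conjecture holds there; (ii) for tight tensors over any field, \cref{thm:f-sr-tight} identifies $F^K_\theta$ with Strassen's support functional $\zeta_\theta$, which is known to be additive and multiplicative on tight tensors. Your proposal does precisely the same: you do not claim a proof of the general statement, you verify the complex and tight cases via the minimax correspondence and the polytopes $\Pi(T)$ and $W(T)$, and you correctly isolate the missing ingredient --- a monoidal, field-independent substitute for the moment polytope realizing $\SR_\xi$ as a max--min entropy --- as the obstruction. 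This is fully in line with the paper's treatment.

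One minor difference in presentation: for the tight case you sketch additivity and multiplicativity directly from the monoidal behaviour of $W(\cdot)$ under $\otimes$ and $\oplus$ on supports, whereas the paper simply identifies $F^K_\theta$ with $\zeta_\theta$ and defers to \cite{strassen1991degeneration} for those properties. Both routes amount to the same argument. Your closing remarks about extending from tight to oblique supports and about controlling $\SR_\xi$ through degenerations in the orbit closure are plausible directions, but they go beyond anything the paper claims or attempts; they are suggestions for future work, not steps in a proof.
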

Thus \cref{conj:intro} (if true), together with \cref{prop:mon-norm-intro}, implies that the $F^K_{\theta}$ are universal spectral points over $K$. We provide two pieces of evidence for \cref{conj:intro}, namely that the conjecture is true for $K = \C$ (since then $F_\theta^\C$ equals the quantum functionals $F_\theta$ for tensors over~$\C$ by \cref{thm:f-sr-intro}) and that the conjecture is true for the subclass of tight tensors over arbitrary fields~$K$.

As another consequence of the dual description in \cref{thm:f-sr-intro} and the super-multiplicativity and super-additivity of the quantum functionals, we will obtain via a general argument the following properties of the asymptotic weighted slice rank.

\begin{corollary}\label{cor:sr-super-intro}
  Over the complex numbers the asymptotic $\xi$-weighted slice rank is:
  \begin{itemize}
  \item Super-multiplicative:
    $\SR_{\xi}(T_1 \otimes T_2) \geq \SR_{\xi}(T_1) \cdot \SR_{\xi}(T_2)$
      \item Super-additive:
    $\SR_{\xi}(T_1 \oplus T_2) \geq \SR_{\xi}(T_1) + \SR_{\xi}(T_2)$.
  \end{itemize}
\end{corollary}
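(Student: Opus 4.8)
The plan is to derive \cref{cor:sr-super-intro} from \cref{thm:f-sr-intro} together with the known super-multiplicativity and super-additivity of the quantum functionals, using the dual (min over $\theta$) description of $\SR_\xi$. Concretely, for the super-multiplicativity, I would fix $\xi \in \Xi$ and two complex tensors $T_1, T_2$, and use the identity $\SR_\xi(T_i) = \min_{\theta \in \Theta(\xi)} F_\theta(T_i)^{1/\langle\theta,\xi\rangle}$. Let $\theta^\ast \in \Theta(\xi)$ be a minimizer for the product tensor $T_1 \otimes T_2$, so that $\SR_\xi(T_1 \otimes T_2) = F_{\theta^\ast}(T_1 \otimes T_2)^{1/\langle\theta^\ast,\xi\rangle}$. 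Since the quantum functionals are multiplicative (in fact I only need super-multiplicativity, $F_{\theta^\ast}(T_1 \otimes T_2) \geq F_{\theta^\ast}(T_1) F_{\theta^\ast}(T_2)$, which holds because $F_\theta$ is a max over the moment polytope and moment polytopes satisfy $\Pi(T_1 \otimes T_2) \supseteq \Pi(T_1) \times \Pi(T_2)$ after the appropriate identification), and since $\langle\theta^\ast,\xi\rangle > 0$ (this is where $\theta^\ast \in \Theta(\xi)$ matters: the support conditions force $\langle\theta^\ast,\xi\rangle$ to be strictly positive, so raising to the power $1/\langle\theta^\ast,\xi\rangle$ preserves the inequality), we get
\[
\SR_\xi(T_1 \otimes T_2) \geq F_{\theta^\ast}(T_1)^{1/\langle\theta^\ast,\xi\rangle} \cdot F_{\theta^\ast}(T_2)^{1/\langle\theta^\ast,\xi\rangle} \geq \SR_\xi(T_1) \cdot \SR_\xi(T_2),
\]
where the last step uses that $\theta^\ast \in \Theta(\xi)$ is a feasible point for the minimization defining $\SR_\xi(T_i)$, hence $F_{\theta^\ast}(T_i)^{1/\langle\theta^\ast,\xi\rangle} \geq \min_{\theta \in \Theta(\xi)} F_\theta(T_i)^{1/\langle\theta,\xi\rangle} = \SR_\xi(T_i)$.

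For the super-additivity the same template applies with $\otimes$ replaced by $\oplus$: pick $\theta^\ast$ a minimizer for $T_1 \oplus T_2$, use super-additivity $F_{\theta^\ast}(T_1 \oplus T_2) \geq F_{\theta^\ast}(T_1) + F_{\theta^\ast}(T_2)$ (again from the moment polytope description, since $\Pi(T_1 \oplus T_2)$ contains both $\Pi(T_1)$ and $\Pi(T_2)$ up to scaling, and the quantum functional picks up the sum), and then use that $x \mapsto x^{1/\langle\theta^\ast,\xi\rangle}$ with exponent $\geq 1$ (note $\langle\theta^\ast,\xi\rangle \leq \max_i \xi_i = 1$) is super-additive on nonnegative reals, i.e.\ $(a+b)^c \geq a^c + b^c$ for $c \geq 1$ and $a,b \geq 0$. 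Combining,
\[
\SR_\xi(T_1 \oplus T_2) = F_{\theta^\ast}(T_1 \oplus T_2)^{1/\langle\theta^\ast,\xi\rangle} \geq \bigl(F_{\theta^\ast}(T_1) + F_{\theta^\ast}(T_2)\bigr)^{1/\langle\theta^\ast,\xi\rangle} \geq F_{\theta^\ast}(T_1)^{1/\langle\theta^\ast,\xi\rangle} + F_{\theta^\ast}(T_2)^{1/\langle\theta^\ast,\xi\rangle} \geq \SR_\xi(T_1) + \SR_\xi(T_2).
\]

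The main obstacle I anticipate is the bookkeeping around degenerate weights: one must make sure that when some $\xi_i = 0$ the minimization is genuinely over $\Theta(\xi)$, that the minimizer $\theta^\ast$ exists (compactness of $\Theta(\xi)$ plus continuity of $\theta \mapsto F_\theta(T)^{1/\langle\theta,\xi\rangle}$ away from the boundary where $\langle\theta,\xi\rangle = 0$ — one should check the infimum is attained in the interior or handle the boundary by a limiting argument), and that $\langle\theta^\ast,\xi\rangle$ is bounded away from $0$ so all the exponentiations are legitimate. A secondary point is to cite the correct provenance of super-multiplicativity/super-additivity of $F_\theta$: these are weaker than the multiplicativity/additivity proved in \cite{DBLP:conf/stoc/ChristandlVZ18} and in fact follow immediately from the fact that $\Pi(T_1) \times \Pi(T_2) \subseteq \Pi(T_1 \otimes T_2)$ and $\Pi(T_1), \Pi(T_2)$ sit inside $\Pi(T_1 \oplus T_2)$ via block-diagonal orbit elements; I would either invoke the main theorem of \cite{DBLP:conf/stoc/ChristandlVZ18} directly or give this one-line moment-polytope argument. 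Everything else is the routine manipulation of the minimax identity in \cref{thm:f-sr-intro}, so the proof is short once the degeneracy is dispatched.
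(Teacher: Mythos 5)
Your proof is correct, and it diverges from the paper's in an interesting way on the super-additivity part. For super-multiplicativity your argument is essentially the paper's: the paper writes it as ``the min of a sum is at least the sum of the mins'' applied to $\min_{\theta\in\Theta(\xi)}\bigl(f_\theta(T_1)+f_\theta(T_2)\bigr)/\langle\theta,\xi\rangle$, which is exactly your choice of a single minimizer $\theta^\ast$ for $T_1\otimes T_2$ followed by feasibility of $\theta^\ast$ for each factor. For super-additivity, however, the paper takes a longer route: it invokes Strassen's identity $x+y=\max_{0\le\eps\le1}x^\eps y^{1-\eps}2^{h(\eps)}$, applies the von Neumann minimax theorem a second time to swap $\min_\theta$ with $\max_\eps$, and only then uses $\langle\theta,\xi\rangle\le 1$ (in the step $\min_\theta (2^{h(\eps)})^{1/\langle\theta,\xi\rangle}\ge 2^{h(\eps)}$). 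You replace all of that with the elementary superadditivity of $x\mapsto x^c$ for $c=1/\langle\theta^\ast,\xi\rangle\ge 1$, which rests on the same observation $\langle\theta^\ast,\xi\rangle\le\max_i\xi_i=1$ but needs no second minimax swap and no entropy identity; this is shorter and, in my view, cleaner for this particular corollary (the paper's machinery is set up in the generality of \cref{conseq1}, where it transfers several properties at once). Your handling of the degenerate weights is also right: restricting to $\Theta(\xi)$ forces $\langle\theta,\xi\rangle\ge\min_{i:\xi_i>0}\xi_i>0$, so the minimizer exists by compactness and continuity and all exponentiations are legitimate. The only place to be careful in the write-up is the provenance of $F_\theta(T_1\oplus T_2)\ge F_\theta(T_1)+F_\theta(T_2)$: your parenthetical moment-polytope sketch for it is vaguer than the one for the tensor product (mere containment of $\Pi(T_1)$ and $\Pi(T_2)$ in $\Pi(T_1\oplus T_2)$ only yields $F_\theta(T_1\oplus T_2)\ge\max\{F_\theta(T_1),F_\theta(T_2)\}$; the sum requires the convex-combination-with-entropy-bonus argument), so you should simply cite the additivity from \cite{DBLP:conf/stoc/ChristandlVZ18}, which is what the paper does.
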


We prove the statement of \cref{cor:sr-super-intro} also for all tight tensors over arbitrary fields via the description in \cref{thm:sr-tight-intro} and the minimax correspondence.

It would be interesting to know if the statement of \cref{cor:sr-super-intro} holds for general tensors over arbitrary fields~$K$.
This would imply that the functions $F_{\theta}^K$ are super-multiplicative and super-additive, and thus prove part of \cref{conj:intro}. Related to this, it would be interesting to have a direct proof of \cref{cor:sr-super-intro} that does not use the super-multiplicativity and super-additivity of the quantum functionals.

In the process of proving the above results we obtain a characterization of an important class of tensors in terms of the asymptotic weighted slice rank. A tensor $T \in V_1 \otimes V_2 \otimes V_3$ is defined to be \emph{semistable}, under the action of the group $\SL(V_1) \times \SL(V_2) \times \SL(V_3)$, if the closure of the orbit $\SL(V_1) \times \SL(V_2) \times \SL(V_3) \cdot T$ does not contain the element 0.

\begin{theorem}\label{cor:ss-charac-intro}
A tensor $T \in V_1 \otimes V_2 \otimes V_3$ with $\dim V_i = m_i$ is semistable under the action of the group $\SL(V_1) \times \SL(V_2) \times \SL(V_3)$ if and only if $\SR_{\xi}(T) = \min_i m_i^{1/\xi_i}$ for every $\xi \in \Xi$.
\end{theorem}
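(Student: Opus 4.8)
The plan is to use the dual description of $\SR_\xi$ over $\C$ (\cref{thm:sr-polytope-intro}) to convert the identity ``$\SR_\xi(T)=\min_i m_i^{1/\xi_i}$ for every $\xi\in\Xi$'' into a statement purely about the moment polytope $\Pi(T)$, and then to recognize that statement as the classical GIT criterion for semistability. Write $u_m=(1/m,\dots,1/m)$ for the length-$m$ uniform probability vector, so that $H(u_m)=\log_2 m$. Two auxiliary facts enter. The first is the elementary bound $\SR_\xi(T)\le\min_i m_i^{1/\xi_i}$: slicing a tensor in $\C^{m_1}\otimes\C^{m_2}\otimes\C^{m_3}$ along leg $i$ gives a slice decomposition supported only in coordinate $i$ with that entry at most $m_i$, so $S_\xi(T)\le m_i^{1/\xi_i}$ whenever $\xi_i>0$; applying this to $T^{\otimes n}\in\C^{m_1^n}\otimes\C^{m_2^n}\otimes\C^{m_3^n}$ and taking $n$-th roots yields the bound, a coordinate with $\xi_i=0$ contributing $+\infty$ under the paper's convention and so not constraining the minimum. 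The second is the moment-polytope criterion for semistability (Kempf--Ness; Ness's stratification of the null cone): a nonzero complex tensor $T$ is semistable under $\SL(V_1)\times\SL(V_2)\times\SL(V_3)$ if and only if $(u_{m_1},u_{m_2},u_{m_3})\in\Pi(T)$. I would recall its proof in a sentence: a semistable $T$ has a nonzero norm-minimal point in its $\SL(V_1)\times\SL(V_2)\times\SL(V_3)$-orbit closure, norm-minimality being equivalent to all three flattenings carrying the uniform normalized spectrum (all marginals maximally mixed); conversely the null cone is closed and $\GL(V_1)\times\GL(V_2)\times\GL(V_3)$-invariant, so if some nonzero $S\in\overline{G\cdot T}$ has maximally mixed marginals, hence lies outside the null cone, then so does $T$.

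For the direction ``semistable $\Rightarrow$ identity'', I would argue that if $T$ is semistable then in particular $T\ne 0$ and $(u_{m_1},u_{m_2},u_{m_3})\in\Pi(T)$, so for every $\xi\in\Xi$ the dual formula of \cref{thm:sr-polytope-intro} gives
\[
\SR_\xi(T)=\max_{p\in\Pi(T)}\min_{i\in[3]}2^{H(p_i)/\xi_i}\;\ge\;\min_{i\in[3]}2^{H(u_{m_i})/\xi_i}=\min_{i\in[3]}m_i^{1/\xi_i},
\]
and, combined with the upper bound above, this is the desired equality.

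For the converse ``identity $\Rightarrow$ semistable'', I would first note that the identity at $\xi=(1,1,1)$ forces $\SR_\xi(T)\ge 1$, hence $T\ne 0$ and $\Pi(T)$ is a nonempty compact convex polytope (the case of all $m_i=1$ being trivial, as then $T$ is a nonzero scalar). Otherwise, set $M\coloneqq\max_j\log_2 m_j>0$ and use the identity only at the single weight $\xi^{\star}\in\Xi$ given by $\xi^{\star}_i\coloneqq(\log_2 m_i)/M$ when $m_i\ge 2$ and $\xi^{\star}_i\coloneqq 0$ when $m_i=1$; then $\max_i\xi^{\star}_i=1$, one computes $m_i^{1/\xi^{\star}_i}=2^M$ whenever $m_i\ge 2$, and hence $\min_i m_i^{1/\xi^{\star}_i}=2^M$. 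Substituting $\xi^{\star}$ into the identity and into \cref{thm:sr-polytope-intro}, then taking base-$2$ logarithms and dividing by $M$, gives
\[
\max_{p\in\Pi(T)}\;\min_{i\,:\,m_i\ge 2}\;\frac{H(p_i)}{\log_2 m_i}\;=\;1 .
\]
Since this maximand is continuous on the compact set $\Pi(T)$, the maximum is attained at some $p^{\star}$; as $H(p_i)\le\log_2 m_i$ always, the value $1$ forces $H(p^{\star}_i)=\log_2 m_i$, hence $p^{\star}_i=u_{m_i}$ by uniqueness of the entropy maximizer, for every $i$ with $m_i\ge 2$, while $p^{\star}_i=u_1$ automatically when $m_i=1$. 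Thus $(u_{m_1},u_{m_2},u_{m_3})=p^{\star}\in\Pi(T)$, and the second auxiliary fact yields semistability. (Note only the identity at the single weight $\xi^{\star}$ is used.)

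The main obstacle is precisely the moment-polytope characterization of semistability: it is the one step that requires genuine GIT / Kempf--Ness input, and it is also the reason the theorem is confined to $\C$, since no such polytope is available over other fields. The remaining ingredients — the slicing bound, the substitution of uniform marginals into the dual formula, the compactness/continuity extremum argument, and the uniqueness of the entropy maximizer — are routine, the only care needed being the bookkeeping of the $+\infty$-convention for weights with a vanishing coordinate.
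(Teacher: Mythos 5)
Your proposal is correct, but it proves the theorem by a genuinely different route than the paper. The paper never invokes the Kempf--Ness moment-polytope criterion for semistability. Instead it proves a field-independent statement (\cref{thm:asymp-semis}): all Kronecker powers $T^{\otimes k}$ are semistable if and only if $\SR_\xi(T)=\min_i m_i^{1/\xi_i}$ for all $\xi\in\Xi$. One direction uses \cref{lem:slice-unstable} (a slice decomposition with all $r_i<\tfrac13\dim V_i$ yields a destabilizing one-parameter subgroup, hence semistability forces the lower bound of \cref{cor:slice-semis} on every power); the converse uses the quantitative instability bound of Blasiak et al.\ to show that an unstable power forces $\SR_\xi(T)<\min_i m_i^{1/\xi_i}$ at the specific weight $\xi_i=\log_m m_i$ --- essentially the same $\xi^\star$ you single out. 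The complex case then follows by citing that over $\C$ semistability of $T$ is equivalent to semistability of all its powers. Your argument instead takes \cref{thm:sr-polytope-intro} as given and reduces everything to the statement that $T$ is semistable iff the triple of uniform distributions lies in $\Pi(T)$; this is a standard Kempf--Ness fact, correctly stated and sketched, and there is no circularity since the paper's proof of \cref{thm:sr-polytope-intro} does not use \cref{cor:ss-charac}. The trade-off: the paper's route also delivers the arbitrary-field version and avoids any appeal to the moment map, at the cost of importing the slice-rank/instability estimate and the powers-vs-first-power equivalence; your route is shorter once the dual description is in hand, but leans on a GIT input the paper deliberately does not develop (it only ever uses the ``easy'' sufficient condition for semistability via Kempf's test). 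Your observation that the single weight $\xi^\star$ already suffices for the converse is a nice refinement that is implicit, but not stated, in the paper's proof of \cref{lem:asymp-max-implies-semis}.
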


We also prove a version of \cref{cor:ss-charac-intro} for tensors over arbitrary fields, with the condition that~$T$ is semistable replaced by the condition that all powers of $T$ are semistable.

\subsection{Asymptotic non-commutative ranks}\label{sec:asymptotic-noncommutative-ranks}
A tensor $T \in K^{m_1 \times m_2 \times m_3}$ can be seen as a linear map from $(K^{m_3})^*$ to $K^{m_1 \times m_2}$ or, equivalently, a matrix in $K[x_1, \dots, x_{m_3}]^{m_1 \times m_2}$, each entry of which is a linear form in variables $X = (x_1, \dots, x_{m_3})$.
Two notions of rank for such matrices are defined in~\cite{fortin-reutenauer}: the \emph{commutative rank} $\textrm{crk}(T)$, which is the usual rank of a matrix of linear forms over the field $K(X)$, and the \emph{non-commutative rank}, which is the minimal $r$ such that a matrix of linear forms can be decomposed as a product $A B$ with $A \in K[X]^{m_1 \times r}$ and $B \in K[X]^{r \times m_2}$.\footnote{This is actually the definition of \emph{inner rank} in~\cite{fortin-reutenauer}. Non-commutative rank is equal to inner rank over the ring $K\!\left<X\right>$ of non-commutative polynomials, but for matrices of linear forms the definition with $K[X]$ is equivalent, since no polynomials of degree higher than $1$ are involved.}
The weighted slice rank $S_{(1,1,0)}$ of a tensor $T$ is equal to the non-commutative rank of the corresponding matrix of linear forms~\cite[Thm.~1]{fortin-reutenauer}. Similarly, the weighted slice ranks $S_{(1,0,1)}$ and $S_{(0,1,1)}$ of a tensor $T$ are equal to the non-commutative rank of $T$ seen as a matrix in $K[x_1, \ldots, x_{m_2}]^{m_1 \times m_3}$ and $K[x_1, \ldots, x_{m_1}]^{m_2 \times m_3}$, respectively.\footnote{We may also think of there being three non-commutative ranks, namely one for every flattening of the tensor into a matrix.}

Since commutative rank is super-multiplicative~\cite{li2018tripartite}, the limit $\lim_{n \to \infty} \mathrm{crk}(T^{\otimes n})^{1/n}$ exists by Fekete's lemma. Additionally, it is known that $\mathrm{crk}(T) \leq S_{(1,1,0)}(T) \leq 2\mathrm{crk}(T)$~\cite[Cor.2]{fortin-reutenauer}, so asymptotically
\[
  \SR_{(1,1,0)}(T) = \lim_{n \to \infty} S_{(1,1,0)}(T^{\otimes n})^{1/n} = \lim_{n \to \infty} \mathrm{crk}(T^{\otimes n})^{1/n}.
\]

As a special case of \cref{thm:sr-polytope-intro} and \cref{thm:f-sr-intro} we obtain new dual descriptions of the asymptotic non-commutative ranks in terms of the quantum functionals:
\begin{corollary} For every complex tensor $T$ we have that
  \begin{align*}
    \SR_{(1,1,0)}(T) &= \max_{p \in \Pi(T)} \min \{ 2^{H(p_1)},\, 2^{H(p_2)}\} = \min_{\substack{\theta_1, \theta_2 \geq 0 \\ \theta_1 + \theta_2 = 1}} F_{(\theta_1, \theta_2, 0)}(T)^\frac{1}{\theta_1 + \theta_2}
  \end{align*}
\end{corollary}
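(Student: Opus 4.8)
The plan is to derive this corollary by specializing the two main theorems already proved in the excerpt to the weighting $\xi = (1,1,0) \in \Xi$ and the restriction of $\theta$ to the face where $\theta_3 = 0$.

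First I would obtain the left-hand equality directly from \cref{thm:sr-polytope-intro}. Plugging $\xi = (1,1,0)$ into the formula $\SR_\xi(T) = \max_{p \in \Pi(T)} \min_{i\in[3]} 2^{H(p_i)/\xi_i}$, the convention stated in the Notation paragraph (and restated after the definition of weighted slice rank) says that the term for $i = 3$, where $\xi_3 = 0$, is treated as $+\infty$. Hence the minimum over $i \in [3]$ collapses to the minimum over $i \in \{1,2\}$ with $\xi_1 = \xi_2 = 1$, giving
\[
\SR_{(1,1,0)}(T) = \max_{p \in \Pi(T)} \min\{2^{H(p_1)},\, 2^{H(p_2)}\}.
\]

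Next I would obtain the right-hand equality from \cref{thm:f-sr-intro}. For $\xi = (1,1,0)$ we have $\Theta(\xi) = \{\theta \in \Theta \mid \theta_3 = 0\}$, i.e.~$\theta = (\theta_1,\theta_2,0)$ with $\theta_1,\theta_2 \geq 0$ and $\theta_1 + \theta_2 = 1$; and the inner product $\langle \theta, \xi\rangle = \theta_1\xi_1 + \theta_2\xi_2 + \theta_3\xi_3 = \theta_1 + \theta_2$. Substituting into $\SR_\xi(T) = \min_{\theta \in \Theta(\xi)} F_\theta(T)^{1/\langle\theta,\xi\rangle}$ yields exactly
\[
\SR_{(1,1,0)}(T) = \min_{\substack{\theta_1,\theta_2 \geq 0 \\ \theta_1 + \theta_2 = 1}} F_{(\theta_1,\theta_2,0)}(T)^{\frac{1}{\theta_1+\theta_2}}.
\]
Combining the two displays gives the claimed chain of equalities. (One may also note that since $\theta_1 + \theta_2 = 1$ on this face, the exponent $\tfrac{1}{\theta_1+\theta_2}$ is identically $1$, so the right-hand side is simply $\min_{\theta_1+\theta_2=1} F_{(\theta_1,\theta_2,0)}(T)$; but I would keep the normalization factor written out to match the general template of \cref{thm:f-sr-intro}.)

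There is essentially no obstacle here: the corollary is a pure specialization, and the only point requiring a word of care is the handling of the $\xi_3 = 0$ coordinate, which is governed by the convention already fixed in the paper (zero weight forces the corresponding $r_i$ to vanish and the term $r_i^{1/\xi_i}$ to be dropped, equivalently contributing $+\infty$ inside the min). I would state explicitly that this is why only two of the three entropy terms survive in both descriptions. If one wanted to additionally connect this to commutative rank, one could append the observation from the preceding subsection that $\SR_{(1,1,0)}(T) = \lim_n \mathrm{crk}(T^{\otimes n})^{1/n}$, but that is not part of the displayed statement and I would leave it as a remark rather than fold it into the proof.
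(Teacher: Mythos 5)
Your proposal is correct and matches the paper exactly: the corollary is stated there as an immediate specialization of \cref{thm:sr-polytope-intro} and \cref{thm:f-sr-intro} to $\xi=(1,1,0)$, with no further argument given. Your careful handling of the $\xi_3=0$ convention and of the restriction to $\Theta(\xi)=\{\theta\in\Theta\mid\theta_3=0\}$ is precisely the bookkeeping the paper leaves implicit.
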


Non-commutative rank has received much interest in recent years \cite{DBLP:journals/corr/IvanyosQS15b, doi:10.1080/03081087.2017.1337058}.
In quantum information theory, the study of asymptotic non-commutative rank is motivated by the fact that commutative rank is precisely the same as tripartite-to-bipartite entanglement transformations under Stochastic Local Operations and Classical Communication (SLOCC) \cite{PhysRevA.81.052310}, and formulas for asymptotic non-commutative rank have previously been obtained for special cases \cite{li2018tripartite}, \cite[\S 3.4]{jensen2019tensors}.

\subsection{Min-entropy quantum functionals}\label{subsec:intro-min-entropy}
Finally, we explore a new variation on the quantum functionals by replacing the Shannon entropy~$H$ with a smaller notion of entropy called the min-entropy.
The min-entropy of a probability vector $p = (p_1, \ldots, p_n)$ is defined as $H_\infty(p) = -\log_2 \max_i p_i$. Thus the min-entropy is the number of bits required to describe the largest coefficient in a probability vector. It is the smallest among all of the Rényi entropies and in particular satisfies $H_\infty(p) \leq H(p)$ for every $p$.
For $\theta \in \Theta$, we define the \emph{min-entropy quantum functional} as
\[
F_{\theta,\infty}(T) \coloneqq \max_{p \in \Pi(T)} 2^{\langle \theta, (H_\infty(p_1), H_\infty(p_2), H_\infty(p_3))\rangle}.
\]
It follows directly from $H_\infty(p) \leq H(p)$ that $F_{\theta,\infty}(T) \leq F_\theta(T)$. Expanding the singular value definition of the moment polytope $\Pi(T)$ we have
\[
F_{\theta,\infty}(T) = \sup_{S \in G\cdot T} \prod_{i=1}^3 \Biggl( \frac{\sum_j \sigma_j(S_{(i)})^2}{\sigma_1(S_{(i)})^2} \Biggr)^{\!\!\theta_i},
\]
where, as before, $\sigma_1(S_{(i)}) \geq \sigma_2(S_{(i)}) \geq \cdots$ denote the ordered singular values of the matrix $S_{(i)}$. In other words, we may write $F_{\theta,\infty}(T)$ in terms of the Frobenius norm $\norm[0]{\cdot}_F$ and the spectral norm~$\norm[0]{\cdot}_2$ of matrices as
\begin{equation}\label{eq:charac}
F_{\theta,\infty}(T) = \sup_{S \in G\cdot T} \prod_{i=1}^3 \Biggl( \frac{ \norm[0]{S_{(i)}}_F^2}{\norm[0]{S_{(i)}}_2^2} \Biggr)^{\!\!\theta_i}.
\end{equation}
We prove the following basic properties of the min-entropy quantum functionals.

\begin{theorem}\label{th:min-entropy-basic-intro}
The min-entropy quantum functionals are monotone, normalized and super-mul\-ti\-plica\-tive.
\end{theorem}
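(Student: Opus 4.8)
The plan is to prove each of the three properties using whichever of the two equivalent descriptions of $F_{\theta,\infty}$ given above is more convenient, and to reduce everything to two elementary facts about one matrix invariant: for a nonzero matrix $M$ one has $\|M\|_F^2/\|M\|_2^2 \le \rk(M)$, and this ratio is multiplicative under Kronecker products, $\|M\otimes N\|_F^2/\|M\otimes N\|_2^2 = (\|M\|_F^2/\|M\|_2^2)(\|N\|_F^2/\|N\|_2^2)$, since $\|M\otimes N\|_F=\|M\|_F\|N\|_F$ and $\|M\otimes N\|_2=\|M\|_2\|N\|_2$.

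\emph{Monotonicity and normalization.} For these I would use the moment-polytope form $F_{\theta,\infty}(T)=\max_{p\in\Pi(T)}2^{\langle\theta,(H_\infty(p_1),H_\infty(p_2),H_\infty(p_3))\rangle}$. If $S\le T$ then $S$ lies in $\overline{G\cdot T}$ (as noted in the excerpt, this orbit closure is the closure of the set of restrictions of $T$); since $\overline{G\cdot T}$ is closed and $G$-invariant it contains $\overline{G\cdot S}$, hence $\Pi(S)\subseteq\Pi(T)$, and maximizing over the larger set gives $F_{\theta,\infty}(S)\le F_{\theta,\infty}(T)$. For $F_{\theta,\infty}(I_n)=n$ I would argue two inequalities. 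Every $S\in\overline{G\cdot I_n}$ has all three flattening ranks at most $n$, because flattening rank is $G$-invariant and lower semicontinuous and hence cannot exceed its value $n$ on the orbit; therefore every $(p_1,p_2,p_3)\in\Pi(I_n)$ has each $p_i$ supported on at most $n$ entries, so $\max_j p_{i,j}\ge 1/n$, so $H_\infty(p_i)\le\log_2 n$, and using $\theta_1+\theta_2+\theta_3=1$ the objective is at most $2^{\log_2 n}=n$. Conversely, the flattenings of $I_n$ itself have all $n$ singular values equal, so $r((I_n)_{(i)})$ is uniform on $n$ points and the triple of uniform vectors lies in $\Pi(I_n)$, achieving $2^{\log_2 n}=n$. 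Hence $F_{\theta,\infty}(I_n)=n$.

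\emph{Super-multiplicativity.} Here I would use the norm form $F_{\theta,\infty}(T)=\sup_{S\in G\cdot T}\prod_{i=1}^3(\|S_{(i)}\|_F^2/\|S_{(i)}\|_2^2)^{\theta_i}$. The point is that $\{A\otimes B : A\in G\cdot S,\ B\in G\cdot T\}\subseteq G\cdot(S\otimes T)$: if $A=(g_1,g_2,g_3)\cdot S$ and $B=(h_1,h_2,h_3)\cdot T$, then $A\otimes B=(g_1\otimes h_1,g_2\otimes h_2,g_3\otimes h_3)\cdot(S\otimes T)$, where each $g_i\otimes h_i$ is an invertible map on the $i$-th leg of $S\otimes T$. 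Moreover the $i$-th flattening of $A\otimes B$ equals, after a reindexing of rows and columns that leaves the singular values unchanged, the Kronecker product $A_{(i)}\otimes B_{(i)}$. By multiplicativity of $\|\cdot\|_F^2/\|\cdot\|_2^2$ under Kronecker products, the objective evaluated at $A\otimes B$ is the product of the objectives at $A$ and at $B$; taking the supremum over $A\in G\cdot S$ and $B\in G\cdot T$ (a supremum of a product of nonnegative functions of independent arguments, hence the product of the suprema) yields $F_{\theta,\infty}(S\otimes T)\ge F_{\theta,\infty}(S)\cdot F_{\theta,\infty}(T)$.

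\emph{Main obstacle.} None of the three steps is deep; the one place that requires a little care is the equivalence between the supremum over the orbit $G\cdot T$ and the maximum over the compact polytope $\Pi(T)$, which rests on the scale-invariance and continuity of $S\mapsto\prod_i(\|S_{(i)}\|_F^2/\|S_{(i)}\|_2^2)^{\theta_i}$ on the locus of nonzero tensors together with density of the orbit in its closure. It is also worth stressing the asymmetry of the result: unlike the Shannon-entropy quantum functionals, $F_{\theta,\infty}$ is only shown to be super-multiplicative, and the reverse inequality $F_{\theta,\infty}(S\otimes T)\le F_{\theta,\infty}(S)F_{\theta,\infty}(T)$, which for the Shannon version uses the deeper structure of Strassen's theory, is neither obvious nor claimed here.
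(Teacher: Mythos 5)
Your proof is correct and follows essentially the same route as the paper: monotonicity via $\Pi(S)\subseteq\Pi(T)$, normalization via the uniform singular values of the flattenings of $I_n$, and super-multiplicativity via the norm formula together with the multiplicativity of the Frobenius and spectral norms under Kronecker products. The only (inessential) difference is that for the upper bound $F_{\theta,\infty}(I_n)\le n$ the paper simply invokes $F_{\theta,\infty}\le F_\theta$ and the known normalization of the Shannon quantum functional, whereas you argue directly from the flattening ranks of tensors in the orbit closure.
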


Next we prove a minimax correspondence to a second family of functions, just like we do in \cref{thm:f-sr-intro} for the quantum functionals. We define for $\xi \in \Xi$ the function
\[
    G_{\xi, \infty}(T) = \max_{p \in \Pi(T)} \min_{i\in [3]}  2^{H_\infty(p_i) / \xi_i}.
\]

\begin{theorem}\label{thm:m-f-sr-intro}
For every $\xi \in \Xi$ and every complex tensor $T$ we have
\[
    \SR_{\xi, \infty}(T) = \min_{\theta \in \Theta} {F_{\theta, \infty}(T)}^{1/\langle \theta, \xi\rangle}.
\]
For every $\theta \in \Theta$ and every tensor $T$ we have
\[
    F_{\theta, \infty}(T) = \max_{\xi \in \Xi} \SR_{\xi, \infty}(T)^{\langle \theta, \xi\rangle}.
\]
\end{theorem}

The function $\SR_{\xi, \infty}(T)$ is similar to the G-stable rank that was introduced in \cite{derksen2020gstable}, but subtly different, as our kind of weighing by $\xi$ is different.

From \cref{th:min-entropy-basic-intro} and \cref{thm:m-f-sr-intro} we prove using a general argument that:

\begin{corollary}\label{cor:m-sr-super-intro}
  For every $\xi \in \Xi$ the function $\SR_{\xi, \infty}$ is monotone, normalized and super-mul\-ti\-plica\-tive.
\end{corollary}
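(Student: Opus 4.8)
The plan is to read off all three properties from the minimax identity
$\SR_{\xi,\infty}(T) = \min_{\theta\in\Theta} F_{\theta,\infty}(T)^{1/\langle\theta,\xi\rangle}$ of \cref{thm:m-f-sr-intro}, combined with the facts from \cref{th:min-entropy-basic-intro} that each $F_{\theta,\infty}$ is monotone, normalized and super-multiplicative. This is the very same deduction that yields \cref{cor:sr-super-intro} from the corresponding properties of the quantum functionals $F_\theta$, and it can be isolated once and for all as a formal lemma in \cref{sec:correspondence}: if a family $(F_\theta)_{\theta\in\Theta}$ of tensor parameters is monotone, normalized and super-multiplicative, then so is the parameter $\xi\mapsto \min_{\theta\in\Theta} F_\theta^{1/\langle\theta,\xi\rangle}$. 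I would state that lemma and then simply instantiate it with $F_\theta = F_{\theta,\infty}$, using \cref{thm:m-f-sr-intro} to identify the resulting parameter with $\SR_{\xi,\infty}$.

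Spelling out the three verifications: \emph{Monotonicity.} If $S\le T$ then $F_{\theta,\infty}(S)\le F_{\theta,\infty}(T)$ for every $\theta$; since $x\mapsto x^{1/c}$ is nondecreasing on $x\ge 1$ for each $c=\langle\theta,\xi\rangle>0$, we get $F_{\theta,\infty}(S)^{1/\langle\theta,\xi\rangle}\le F_{\theta,\infty}(T)^{1/\langle\theta,\xi\rangle}$ termwise, and taking the minimum over $\theta$ preserves this. \emph{Normalization.} Since $F_{\theta,\infty}(I_n)=n$, we have $\SR_{\xi,\infty}(I_n)=\min_{\theta}n^{1/\langle\theta,\xi\rangle}$; as $n\ge 1$ this is attained where $\langle\theta,\xi\rangle=\sum_i\theta_i\xi_i$ is largest, and over the simplex $\Theta$ this maximum is $\max_i\xi_i=1$, attained at a vertex supported on a coordinate with $\xi_i=1$, giving $\SR_{\xi,\infty}(I_n)=n^{1/1}=n$. \emph{Super-multiplicativity.} From $F_{\theta,\infty}(T_1\otimes T_2)\ge F_{\theta,\infty}(T_1)F_{\theta,\infty}(T_2)$, raising to the power $1/\langle\theta,\xi\rangle$ (which distributes over products) and bounding each factor below by the corresponding minimum over $\theta$ yields $F_{\theta,\infty}(T_1\otimes T_2)^{1/\langle\theta,\xi\rangle}\ge \SR_{\xi,\infty}(T_1)\,\SR_{\xi,\infty}(T_2)$ for every $\theta$; taking the minimum on the left side gives the claim.

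The only point demanding care is the degenerate weights $\xi$ with some $\xi_i=0$: then $\langle\theta,\xi\rangle$ vanishes for some $\theta\in\Theta$, and by the convention fixed in the introduction the corresponding term $F_{\theta,\infty}(T)^{1/\langle\theta,\xi\rangle}$ is read as $+\infty$, so the minimum is effectively over the $T$-independent set $\{\theta\in\Theta : \langle\theta,\xi\rangle>0\}$. One checks that this restricted index set is stable under all the manipulations above and still contains a maximizer of $\langle\theta,\xi\rangle$ (it does, since $\max_i\xi_i=1>0$). Apart from this bookkeeping the argument is purely formal and I do not expect a genuine obstacle; the one mild subtlety, exactly as for \cref{cor:sr-super-intro}, is that the $\min$ only propagates the one-sided super-multiplicative inequality and does not by itself give an exact product formula, which is why no matching upper bound (and, since $F_{\theta,\infty}$ is not known to be additive, no super-additivity) is asserted.
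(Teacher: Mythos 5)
Your proposal is correct and follows the paper's own route exactly: the paper derives this corollary by combining \cref{th:min-entropy-basic-intro} with the minimax identity of \cref{thm:m-f-sr-intro} via the general transfer principle of \cref{conseq1}, which is precisely the ``formal lemma in \cref{sec:correspondence}'' you propose to isolate (your three spelled-out verifications match items (i), (v) and (vi) of that theorem). The bookkeeping for weights with some $\xi_i=0$ is handled in the paper by restricting the minimum to $\Theta(\xi)$, which plays the same role as your restricted index set.
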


\subsection{Organization of the paper}

This concludes the introduction. In the remaining sections we discuss the results and technical lemmas in detail and provide the proofs for all claims. In \cref{sec:weighted-slice-rank} we discuss the basic properties of the weighted slice rank. In \cref{sec:asymp-weighted-slice-rank} we characterize the asymptotic weighted slice rank for arbitrary complex tensors and for tight tensors over arbitrary fields. In \cref{sec:correspondence} we discuss a general correspondence for what we call dual pair families, of which the quantum functionals and asymptotic weighted slice ranks are an example. In \cref{sec:quantum-functionals} we discuss the correspondence between the quantum functionals and the asymptotic weighted slice ranks, and we discuss the min-entropy quantum functionals.

\section{Weighted slice rank}\label{sec:weighted-slice-rank}
We introduced in \cref{subsec:wsr} for $\xi \in \Xi = \{(\xi_1, \xi_2, \xi_3) \mid \xi_1, \xi_2, \xi_3 \geq 0, \max \{\xi_1, \xi_2, \xi_3\} = 1\}$ the $\xi$-weighted slice rank of a tensor $T$ as the minimum value of $r_1^{1/\xi_1} + r_2^{1/\xi_2} + r_3^{\xi_3}$ such that there is a slice decomposition of $T$ of size $(r_1, r_2, r_3)$.
In this section we will discuss its most basic properties.

\subsection{Alternative characterization}
We begin with an alternative characterization of the notion of a slice decomposition that we defined in \cref{def:slice-decomp}, the straightforward proof of which we leave to the reader.

\begin{lemma}\label{lem:supports}
  A tensor $T \in V_1 \otimes V_2 \otimes V_3$
  has a slice decomposition of size $(r_1, r_2, r_3)$ if and only if there are subspaces $W_i \subseteq V_i$ of dimension $\dim(W_i) = r_i$ such that 
  \[
  T \in W_1 \otimes V_2 \otimes V_3 + V_1 \otimes W_2 \otimes V_3 + V_1 \otimes V_2 \otimes W_3.
  \]
\end{lemma}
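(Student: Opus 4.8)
\textbf{Proof plan for \cref{lem:supports}.}

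The plan is to prove both implications by directly translating between a slice decomposition and the subspace-membership condition. The key observation is that for a fixed $i \in [3]$, a tensor $S$ is a sum of $i$-slices with $S_{(i)}$ of rank $r_i$ if and only if $S$ lives in $W_i \otimes \bigotimes_{j \neq i} V_j$ (up to reordering of tensor factors) for some subspace $W_i \subseteq V_i$ of dimension $r_i$; this is just the statement that the column space of the flattening matrix $S_{(i)} \in V_i \otimes (V_{j} \otimes V_{k})$ has dimension $r_i$, together with the fact that a rank-one $i$-slice is exactly an element of the form $v \otimes w$ with $v \in V_i$, $w \in V_j \otimes V_k$.

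For the forward direction, suppose $T = \sum_{i=1}^{3}\sum_{j=1}^{r_i} T_{i,j}$ with each $T_{i,j}$ an $i$-slice, and set $S_i \coloneqq \sum_{j=1}^{r_i} T_{i,j}$. Since each $T_{i,j}$ has flattening $(T_{i,j})_{(i)}$ of rank one, write $(T_{i,j})_{(i)} = v_{i,j} \otimes w_{i,j}$ and let $W_i \coloneqq \operatorname{span}\{v_{i,1}, \ldots, v_{i,r_i}\} \subseteq V_i$, which has dimension at most $r_i$. Then $S_i \in W_i \otimes V_j \otimes V_k$ (in the appropriate ordering), so $T = S_1 + S_2 + S_3$ lies in the stated sum of subspaces; if $\dim W_i < r_i$ we enlarge $W_i$ arbitrarily to dimension exactly $r_i$, which only makes the containing space larger.

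For the reverse direction, suppose $T \in W_1 \otimes V_2 \otimes V_3 + V_1 \otimes W_2 \otimes V_3 + V_1 \otimes V_2 \otimes W_3$ with $\dim W_i = r_i$. Then by definition of the sum of subspaces there exist $S_1 \in W_1 \otimes V_2 \otimes V_3$, $S_2 \in V_1 \otimes W_2 \otimes V_3$, $S_3 \in V_1 \otimes V_2 \otimes W_3$ with $T = S_1 + S_2 + S_3$. Fixing a basis $b_{i,1}, \ldots, b_{i,r_i}$ of $W_i$, expand $S_i$ along this basis in the $i$-th tensor factor as $S_i = \sum_{j=1}^{r_i} b_{i,j} \otimes c_{i,j}$ for suitable $c_{i,j} \in \bigotimes_{\ell \neq i} V_\ell$; each summand $T_{i,j} \coloneqq b_{i,j} \otimes c_{i,j}$ (reinterpreted as an element of $V_1 \otimes V_2 \otimes V_3$) then has a rank-one $i$-flattening, i.e.\ is an $i$-slice. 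Hence $T = \sum_{i=1}^{3}\sum_{j=1}^{r_i} T_{i,j}$ is a slice decomposition of size $(r_1, r_2, r_3)$.

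I do not expect a genuine obstacle here — the statement is essentially a reformulation of the definition, and the authors themselves describe the proof as straightforward. The only mild subtlety worth stating carefully is the bookkeeping of tensor-factor reorderings when passing between $S_i$ as an element of $V_1 \otimes V_2 \otimes V_3$ and its flattening $S_i \in V_i \otimes (V_j \otimes V_k)$, and the observation that allowing $\dim W_i$ to be smaller than $r_i$ in the forward direction is harmless since the condition is monotone under enlarging the $W_i$.
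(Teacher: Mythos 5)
Your proof is correct and is precisely the straightforward argument the paper has in mind --- the authors explicitly leave this proof to the reader, and your translation between rank-one flattenings and membership in $W_i \otimes V_j \otimes V_k$ is the intended one. The only pedantic point you silently pass over (and which the paper also ignores) is that in the reverse direction some $c_{i,j}$ could vanish, giving fewer than $r_i$ genuine $i$-slices; this is harmless since a nonzero slice can always be split in two, or a pair $v\otimes c$, $-v\otimes c$ inserted, to pad the count back up to $r_i$.
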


In other words, a tensor $T \in V_1 \otimes V_2 \otimes V_3$ has a slice decomposition of size $(r_1, r_2, r_3)$ if and only if there is a choice of bases for the $V_i$ such that the support of $T$ has a $2 \times 2 \times 2$ block structure of the form
\begin{center}
    \begin{tikzpicture}[line join=round,line cap=round,scale=0.70]
            \colorlet{B}{blue!50}
            \cubedxdydz{1}{B}{B}{B}{0}{2}{1}
    
            \cubedxdydz{1}{B}{B}{B}{1}{1}{1}
            \cubedxdydz{1}{B}{B}{B}{1}{2}{1}
    
            \cubedxdydz{1}{B}{B}{B}{0}{1}{2}
            \cubedxdydz{1}{B}{B}{B}{1}{1}{2}

            \cubedxdydz{1}{B}{B}{B}{1}{2}{2}
    \end{tikzpicture}
\end{center}
where the missing block has dimensions $(\dim(V_1) - r_1) \times (\dim(V_2) - r_2) \times (\dim(V_3) - r_3)$. Thus a slice decomposition corresponds to a block triangularization for tensors.

As a direct consequence of \cref{lem:supports} we have the following alternative characterization of weighted slice rank:
\begin{lemma}
  For every tensor $T \in V_1 \otimes V_2 \otimes V_3$ and $\xi \in \Xi$ the $\xi$-weighted slice rank $\SR_\xi(T)$ is the minimum over $r_1^{1/\xi_1} + r_2^{1/\xi_2} + r_3^{1/\xi_3}$ such that there are subspaces $W_i \subseteq V_i$ such that $\dim(W_i) = r_i$ and 
  \[
  T \in W_1 \otimes V_2 \otimes V_3 + V_1 \otimes W_2 \otimes V_3 + V_1 \otimes V_2 \otimes W_3.
  \]
\end{lemma}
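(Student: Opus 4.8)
The plan is to deduce the claim immediately from \cref{lem:supports} and the definition of the weighted slice rank, so that essentially no new argument is needed. Recall that the $\xi$-weighted slice rank $S_\xi(T)$ is, by definition, the minimum of $r_1^{1/\xi_1} + r_2^{1/\xi_2} + r_3^{1/\xi_3}$ over all triples $(r_1, r_2, r_3) \in \N^3$ for which $T$ admits a slice decomposition of size $(r_1, r_2, r_3)$, subject to the convention that $r_i = 0$ (and the $i$-th term is dropped) whenever $\xi_i = 0$. Thus it suffices to show that a triple $(r_1, r_2, r_3)$ is feasible for this minimization if and only if it is feasible for the minimization in the statement.

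This equivalence is precisely what \cref{lem:supports} supplies: $T$ has a slice decomposition of size $(r_1, r_2, r_3)$ exactly when there exist subspaces $W_i \subseteq V_i$ with $\dim(W_i) = r_i$ satisfying
\[
T \in W_1 \otimes V_2 \otimes V_3 + V_1 \otimes W_2 \otimes V_3 + V_1 \otimes V_2 \otimes W_3.
\]
Since the objective $r_1^{1/\xi_1} + r_2^{1/\xi_2} + r_3^{1/\xi_3}$ and the $\xi_i = 0$ convention are identical in both formulations, the two sets of feasible triples coincide, and hence the two minima are equal. (Both are attained, since $V_i$ is finite-dimensional and only finitely many triples $(r_1, r_2, r_3)$ can occur.)

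I expect no genuine obstacle, since the only substantive ingredient, \cref{lem:supports}, is invoked as a black box (and the text itself leaves its straightforward proof to the reader). The single point worth a clarifying sentence is that \cref{lem:supports} fixes $\dim(W_i)$ to be exactly $r_i$ rather than at most $r_i$; this is harmless for a minimization, and in any case one recovers the exact-dimension version by enlarging a too-small $W_i$ or by padding a slice decomposition with zero slices. The whole proof can be written in two or three lines.
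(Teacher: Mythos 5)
Your proposal is correct and matches the paper exactly: the paper states this lemma as a direct consequence of \cref{lem:supports} combined with the definition of weighted slice rank, offering no further argument. Your clarifying remark about $\dim(W_i)$ being exactly $r_i$ versus at most $r_i$ is a reasonable extra precaution but not needed beyond what you already say.
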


\subsection{Basic properties}\label{subsec:basic-prop}
There are three basic properties of slice decompositions which we will discuss now, followed by a discussion of basic properties of the weighted slice ranks. 
Recall that we defined the restriction preorder $\leq$ on tensors by saying that $T' \leq T$ if and only if there are (not necessarily invertible) matrices $A_i$ such that $T' = (A_1 \otimes A_2 \otimes A_3)\cdot T$.
\begin{lemma}\label{lem:decomp} Let $T$ and $T'$ be tensors.
  \begin{enumerate}[label=\upshape(\roman*)]
      \item\label{lem:decompmon} If $T' \leq T$ and $T$ has a slice decomposition of size $(r_1, r_2, r_3)$, then there are nonnegative integers $r'_i \leq r_i$ such that $T'$ has a slice decomposition of size $(r'_1, r'_2, r'_3)$.
      \item\label{lem:decompsubadd} If $T$ has a slice decomposition of size $(r_1, r_2, r_3)$ and $T'$ has a slice decomposition of size $(r_1', r_2', r_3')$, then
  $T + T'$ has a slice decomposition of size $(r_1 + r_1', r_2 + r_2', r_3 + r_3')$.
      \item\label{lem:decompub}   Every tensor $T \in V_1 \otimes V_2 \otimes V_3$ has a slice decomposition of size $(m_1, 0, 0)$, of size $(0, m_2, 0)$, and of size $(0, 0, m_3)$, where $m_i = \dim(V_i)$.
  \end{enumerate}
\end{lemma}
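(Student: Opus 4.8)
The plan is to prove the three items using the subspace reformulation of slice decompositions from \cref{lem:supports} throughout, since that turns each claim into a one-line observation about images of linear maps and sums of subspaces; I would dispose of them roughly in increasing order of (very mild) difficulty. For the upper bound, item~\ref{lem:decompub}, I would simply take $W_1 = V_1$ (so $\dim W_1 = d_1$) and $W_2 = W_3 = 0$; then $T \in W_1 \otimes V_2 \otimes V_3 + V_1 \otimes W_2 \otimes V_3 + V_1 \otimes V_2 \otimes W_3$ trivially, which by \cref{lem:supports} is a slice decomposition of size $(d_1,0,0)$, and symmetrically one gets sizes $(0,d_2,0)$ and $(0,0,d_3)$. (Without \cref{lem:supports} this is equally immediate: $T = \sum_{i=1}^{d_1} e_i \otimes \bigl(\sum_{j,k} T_{i,j,k}\, e_j \otimes e_k\bigr)$ exhibits $T$ as a sum of $d_1$ tensors each of whose first flattening has rank at most one, i.e.\ of $d_1$ many $1$-slices, and similarly for the other two legs.)

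For subadditivity, item~\ref{lem:decompsubadd}, given slice decompositions $T = \sum_j T_{1,j} + \sum_j T_{2,j} + \sum_j T_{3,j}$ of size $(r_1,r_2,r_3)$ and $T' = \sum_j T'_{1,j} + \sum_j T'_{2,j} + \sum_j T'_{3,j}$ of size $(r'_1,r'_2,r'_3)$, I would just add the two expressions, obtaining $T+T'$ written, for each $i \in [3]$, as a sum of $r_i + r'_i$ tensors that are $i$-slices; by definition this is a slice decomposition of size $(r_1+r'_1, r_2+r'_2, r_3+r'_3)$. (In the language of \cref{lem:supports}: if $T$ lies in $W_1 \otimes V_2 \otimes V_3 + \dots$ and $T'$ in $W'_1 \otimes V_2 \otimes V_3 + \dots$, then $T+T'$ lies in $(W_1+W'_1)\otimes V_2 \otimes V_3 + \dots$, and $\dim(W_i+W'_i) \le r_i + r'_i$.) For monotonicity, item~\ref{lem:decompmon}, suppose $T' = (A_1 \otimes A_2 \otimes A_3)\cdot T$ and choose, via \cref{lem:supports}, subspaces $W_i \subseteq V_i$ with $\dim W_i = r_i$ and $T \in W_1 \otimes V_2 \otimes V_3 + V_1 \otimes W_2 \otimes V_3 + V_1 \otimes V_2 \otimes W_3$. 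Applying $A_1 \otimes A_2 \otimes A_3$ and using that a linear map sends a sum of subspaces into the sum of their images,
\[
T' \in A_1(W_1)\otimes A_2(V_2)\otimes A_3(V_3) + A_1(V_1)\otimes A_2(W_2)\otimes A_3(V_3) + A_1(V_1)\otimes A_2(V_2)\otimes A_3(W_3),
\]
so, enlarging the ambient factors to the full target spaces $V'_i$ of the $A_i$, we get $T' \in W'_1 \otimes V'_2 \otimes V'_3 + V'_1 \otimes W'_2 \otimes V'_3 + V'_1 \otimes V'_2 \otimes W'_3$ with $W'_i := A_i(W_i)$; setting $r'_i := \dim W'_i = \dim A_i(W_i) \le \dim W_i = r_i$ and invoking \cref{lem:supports} once more yields the claimed decomposition of $T'$ of size $(r'_1,r'_2,r'_3)$.

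Everything here is routine, and I do not expect any genuine obstacle. The one point deserving a sentence of care is in item~\ref{lem:decompmon}: applying the linear maps $A_j$ to an $i$-slice can only decrease the rank of its $i$-th flattening, so $i$-slices are sent to $i$-slices or to $0$, and correspondingly the sizes can only drop — i.e.\ one genuinely gets $r'_i \le r_i$ rather than merely $r'_i = r_i$. (If one prefers to argue directly with slices rather than with \cref{lem:supports}, this is exactly the observation needed: apply $A_1 \otimes A_2 \otimes A_3$ to a slice decomposition of $T$ and discard the slices that become zero.) An alternative, purely "slice-side" route to all three items is to note that images of $i$-slices under restriction are $i$-slices (or $0$), concatenations of slice decompositions are slice decompositions, and $T$ itself is a sum of $d_1$ many $1$-slices; I would present the \cref{lem:supports}-based version as the cleaner one.
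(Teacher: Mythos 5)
Your proposal is correct and matches the paper's argument: item~\ref{lem:decompsubadd} by adding the two decompositions, item~\ref{lem:decompub} via \cref{lem:supports} (take $W_1=V_1$, $W_2=W_3=0$), and item~\ref{lem:decompmon} by the observation — which you state explicitly — that a restriction sends an $i$-slice to an $i$-slice or to zero. The paper phrases \ref{lem:decompmon} directly at the level of slices while you route it through the subspace reformulation, but these are the same one-line argument in two equivalent languages.
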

\begin{proof}
\ref{lem:decompmon} If $S$ is an $i$-slice in the slice decomposition of $T$, and $S' \leq S$, then $S'$ is either an $i$-slice or zero.

\ref{lem:decompsubadd} This follows from adding the two slice decompositions.

\ref{lem:decompub} This follows from the characterization of slice decompositions in \cref{lem:supports}.
\end{proof}

\begin{lemma}\label{lem:sr-basic} Let $T$ and $T'$ be tensors and let $\xi \in \Xi$.
  \begin{enumerate}[label=\upshape(\roman*)]
      \item\label{lem:srmon} If $T' \leq T$, then $S_{\xi}(T') \leq S_{\xi}(T)$.
      \item\label{lem:ub} For every tensor $T \in V_1 \otimes V_2 \otimes V_3$ we have $S_{\xi}(T) \leq \min_i m_i^{1/\xi_i}$ where $m_i = \dim(V_i)$.
      \item\label{lem:srmon-xi} If $\xi \leq \xi'$ elementwise, then $S_{\xi}(T) \geq S_{\xi'}(T)$.
      \item\label{lem:all-max} If a tensor $T \in K^{m \times m \times m}$ has slice rank $S_{(1,1,1)}(T)$ equal to $m$, then $S_{\xi}(T) = m$ for all $\xi \in \Xi$.
  \end{enumerate}
\end{lemma}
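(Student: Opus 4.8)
The plan is to deduce each of the four statements from the corresponding structural facts about slice decompositions collected in \cref{lem:decomp}, combined with two elementary observations about the map $t \mapsto t^{1/\xi_i}$: it is nondecreasing on $[0,\infty)$, and since every coordinate of $\xi \in \Xi$ is at most $1$, it dominates the identity on the nonnegative integers, i.e.\ $t^{1/\xi_i} \ge t$ for all $t \in \N$. The only bookkeeping point to keep in mind throughout is the convention that $r_i$ is required to be $0$, and its term omitted, as soon as $\xi_i = 0$; every step below respects this.

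For part (i), I would start from a slice decomposition of $T$ of size $(r_1,r_2,r_3)$ that attains $S_\xi(T)$, invoke \cref{lem:decomp}\ref{lem:decompmon} to obtain a slice decomposition of $T'$ of some size $(r'_1,r'_2,r'_3)$ with $r'_i \le r_i$, observe that $r_i = 0$ forces $r'_i = 0$ so this decomposition is admissible for $S_\xi(T')$, and conclude $S_\xi(T') \le \sum_{i : \xi_i > 0} (r'_i)^{1/\xi_i} \le \sum_{i : \xi_i > 0} r_i^{1/\xi_i} = S_\xi(T)$ using monotonicity of the power map. For part (ii), I would simply feed the three slice decompositions of sizes $(d_1,0,0)$, $(0,d_2,0)$, $(0,0,d_3)$ supplied by \cref{lem:decomp}\ref{lem:decompub} into the definition of $S_\xi$: this yields $S_\xi(T) \le d_i^{1/\xi_i}$ for each $i$ (the bound being vacuous when $\xi_i = 0$, since then the right-hand side is $+\infty$), hence $S_\xi(T) \le \min_i d_i^{1/\xi_i}$.

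For part (iii), assume $\xi \le \xi'$ elementwise. Any slice decomposition admissible for $S_\xi(T)$ is also admissible for $S_{\xi'}(T)$, because $\xi'_i = 0$ forces $\xi_i = 0$ and hence $r_i = 0$; moreover for each $i$ with $\xi_i > 0$ we have $0 < \xi_i \le \xi'_i$, so $1/\xi'_i \le 1/\xi_i$ and therefore $r_i^{1/\xi'_i} \le r_i^{1/\xi_i}$ (the case $r_i = 0$ being trivial, the case $r_i \ge 1$ using that $x \mapsto x^a$ is nondecreasing in $a$ for $x \ge 1$). Summing over $i$ and then minimizing over admissible decompositions gives $S_{\xi'}(T) \le S_\xi(T)$.

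Part (iv) is the one that uses the hypothesis on $T$, but I still do not expect a genuine obstacle. The upper bound $S_\xi(T) \le m$ is immediate from part (ii): since $\max_i \xi_i = 1$ we have $\min_i m^{1/\xi_i} = m$. For the matching lower bound, note that any slice decomposition of size $(r_1,r_2,r_3)$ admissible for $S_\xi(T)$ is in particular a slice decomposition of $T$ in the unweighted sense, so $r_1 + r_2 + r_3 \ge S_{(1,1,1)}(T) = m$ by assumption; then, using $r_i^{1/\xi_i} \ge r_i$ for each $i$ with $\xi_i > 0$ and $r_i = 0$ when $\xi_i = 0$, we get $\sum_{i : \xi_i > 0} r_i^{1/\xi_i} \ge \sum_{i=1}^3 r_i \ge m$, and minimizing over admissible decompositions gives $S_\xi(T) \ge m$. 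So the only subtlety anywhere in the proof is the consistent handling of the $\xi_i = 0$ convention, and---specifically in part (iv)---the point that $t^{1/\xi_i} \ge t$ holds precisely because every coordinate of $\xi$ is at most $1$, which is built into the definition of $\Xi$.
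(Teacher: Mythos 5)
Your proof is correct and follows essentially the same route as the paper: parts (i) and (ii) are read off from \cref{lem:decomp}, part (iii) from the definition, and part (iv) combines the upper bound of (ii) with the lower bound $S_\xi(T)\ge S_{(1,1,1)}(T)=m$. The only cosmetic difference is that in (iv) you re-derive the inequality $r_i^{1/\xi_i}\ge r_i$ directly instead of citing part (iii) with $\xi'=(1,1,1)$, which is the same argument.
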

\begin{proof}
\ref{lem:srmon} This follows from \cref{lem:decomp} \ref{lem:decompmon}.

\ref{lem:ub} This follows from \cref{lem:decomp} \ref{lem:decompub}.

\ref{lem:srmon-xi} This is clear from the definition of weighted slice rank.

\ref{lem:all-max} From \ref{lem:ub} and \ref{lem:srmon-xi} it follows that $m = S_{(1,1,1)}(T) \leq S_\xi(T) \leq \min_i m^{1/\xi_i} = m$.
\end{proof}

\subsection{Examples}

\begin{example}[Diagonal tensors and the square matrix multiplication tensor]
The diagonal tensors $I_n = \sum_{i = 1}^n e_i \otimes e_i \otimes e_i$ have slice rank $n$~\cite{tao}, so $S_{\xi}(I_n) = n$ for every $\xi \in \Xi$.
Similarly, the slice rank of the square matrix multiplication tensors $\langle n, n, n\rangle = \sum_{i = 1}^n \sum_{j = 1}^n \sum_{k = 1}^n e_{ij} \otimes e_{jk} \otimes e_{ki}$ is~$n^2$ \cite[Rem.~4.9]{Blasiak-et-al}, and so $S_{\xi}(\langle n, n, n\rangle) = n^2$ for every $\xi \in \Xi$.
\end{example}

\begin{example}[Rectangular matrix multiplication tensor]
Consider the rectangular matrix multiplication tensor $\langle n,n,m\rangle = \sum_{i = 1}^n \sum_{j = 1}^n \sum_{k = 1}^m e_{ij} \otimes e_{jk} \otimes e_{ki}$.
The decomposition 
\[
\langle n,n,m\rangle = \sum_{i = 1}^r \sum_{j = 1}^n e_{ij} \otimes \Bigl(\sum_{k = 1}^m e_{jk} \otimes e_{ki}\Bigr) + \sum_{i = r + 1}^n \sum_{k = 1}^m \Bigl(\sum_{j = 1}^n e_{ij} \otimes e_{jk}\Bigr) \otimes e_{ki}
\]
gives the upper bound $S_{(\xi, 1, 1)}(\langle n,n,m\rangle) \leq (nr)^{1/\xi} + m(n - r)$ for every $r$ such that $0 \leq r \leq n$.
If the inequality $n^{1/\xi} < m < n^{1/\xi}(n^{1/\xi} - (n-1)^{1/\xi})$ holds, then this bound is stronger than the general upper bound of~\cref{lem:sr-basic}~\ref{lem:ub}. For example, taking $r = 1$ we get $S_{(0.5,1,1)}(\langle2,2,5\rangle) \leq 9$, which is less than $\min \{4^{1/0.5}, 10, 10\}$.
\end{example}

\section{Asymptotic weighted slice rank}\label{sec:asymp-weighted-slice-rank}

In this section we analyze the asymptotic behaviour of the weighted slice rank when we take large powers of a tensor. More precisely, we study the asymptotic weighted slice ranks $\SR_{\xi}$ defined in~\cref{def:awsr}.
This section is divided into three parts corresponding to three regimes:
\begin{enumerate}
    \item In \cref{subsec:semistable} we give a characterization of the tensors for which the asymptotic weighted slice rank is maximal. This characterization is in terms of semistable tensors. This part works for tensors over arbitrary fields. An important tool that we introduce in this part is a semistability test based on work of Kempf, which we will use in the other parts of this section.
    \item In \cref{subsec:complex} we go deeper than just characterizing maximality and give a description of the value of the asymptotic weighted slice rank in terms of moment polytopes. This part works for tensors over the complex numbers.
    \item Finally, in \cref{subsec:tight} we work over an arbitrary field again and give an upper bound on the value of the asymptotic weighted slice rank in terms of the polytope of probability distributions on the support of the tensors, the torus version of the moment polytope. For the subclass of tensors with tight support (still over arbitrary fields) we prove that this upper bound is tight.
\end{enumerate}
\cref{subsec:complex} and \cref{subsec:tight} each rely on a decomposition for powers of tensors, namely
\begin{itemize}
    \item the Schur--Weyl decomposition and
    \item the weight decomposition.
\end{itemize}
The first is precisely described by the moment polytope and the second by its torus version.

\subsection{Semistable tensors over arbitrary fields}\label{subsec:semistable}

In this section we characterize when the asymptotic weighted slice rank is maximal.
Let $K$ be an arbitrary field and let $V_1, V_2, V_3$ be vector spaces over $K$.
For any tensor $T \in V_1 \otimes V_2 \otimes V_3$ it holds that $\SR_{\xi}(T) \leq \min_i m_i^{1/\xi_i}$ where $m_i = \dim(V_i)$, by the upper bound on the weighted slice rank (\cref{lem:sr-basic}\ref{lem:ub}). We will characterize for what tensors this upper bound is an equality in terms of a notion called semistability from the field of geometric invariant theory.
This characterization generalizes the connection between semistability and slice rank for cubic tensors (i.e.,~tensors of format $n \times n \times n$) explored in~\cite{Blasiak-et-al,DBLP:conf/innovations/BurgisserGOWW18} to tensors of non-cubic format (i.e.,~tensors of format $m_1 \times m_2 \times m_3$ with the $m_i$'s potentially different).

Semistability can be defined in high generality (for actions of reductive algebraic groups on schemes), but we
are only interested in the natural action of the group $\SL(V_1) \times \SL(V_2) \times \SL(V_3)$ on the tensor space $V_1 \otimes V_2 \otimes V_3$.

\begin{definition}
Let $V_1$, $V_2$, $V_3$ be vector spaces over an algebraically closed field $K$.
A tensor $T \in V_1 \otimes V_2 \otimes V_3$ is called \emph{semistable} if the Zariski closure\footnote{Over $\C$ this coincides with the closure in the Euclidean topology. (See, e.g.,~\cite[Lem.~3.1]{Wallach2017}.)} of the orbit $\SL(V_1) \times \SL(V_2) \times \SL(V_3) \cdot T$ does not contain $0$.
Otherwise the tensor is called \emph{unstable}. \footnote{When $K$ is not algebraically closed, we call a tensor semistable (unstable) if it becomes semistable (unstable) after extending the field to the algebraic closure $\overline{K}$.}
\end{definition}

The following semistability test, which follows from the results of Kempf~\cite{kempf1978instability}, and the proof of which we defer to~\ref{sec:sstest}, gives a simple but powerful necessary condition for a tensor over an algebraically closed field to be semistable.

\begin{lemma}[Semistability test]\label{lem:irrsemis}
Let $V_1, V_2, V_3$ be irreducible representations of a group $\Gamma$ defined over an algebraically closed field $K$.
Consider the induced action of $\Gamma$ on the tensor product $V_1 \otimes V_2 \otimes V_3$.
For any nonzero tensor $T \in V_1 \otimes V_2 \otimes V_3$, if~$T$ is invariant under the action of $\Gamma$, then $T$ is semistable under the action of the group $\SL(V_1) \times \SL(V_2) \times \SL(V_3)$.\footnote{We stress that the semistability of tensors is always taken with respect to the action of the group~$\SL(V_1) \times \SL(V_2) \times \SL(V_3)$ and we will not mention this explicitly anymore. The group $\Gamma$ in \cref{lem:irrsemis} could be any group, as long as its action on the $V_i$ is irreducible.}%
$^,$\footnote{The condition in the semistabiliy test is \emph{sufficient} for semistability but not \emph{necessary} as the following example shows. It can be shown that the direct sum $\langle 2,2,2\rangle \oplus \langle 1\rangle$ is semistable using the fact that this tensor is tight and the ideas of \cref{subsec:tight}. However, by considering the stabilizer it follows that this tensor does not satisfy the condition of the semistability test.}
\end{lemma}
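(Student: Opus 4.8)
The plan is to deduce the semistability test from Kempf's theory of instability via the Hilbert--Mumford criterion. Recall that, over an algebraically closed field, a point $T$ in a representation of a reductive group $H = \SL(V_1) \times \SL(V_2) \times \SL(V_3)$ is unstable (i.e.\ $0 \in \overline{H \cdot T}$) if and only if there is a one-parameter subgroup $\lambda : \mathbb{G}_m \to H$ such that $\lim_{t\to 0} \lambda(t)\cdot T = 0$, equivalently such that all weights of $T$ with respect to $\lambda$ are strictly positive. Kempf's refinement says that if $T$ is unstable, then among all destabilizing one-parameter subgroups there is an ``optimal'' one $\lambda^\ast$ (maximizing a suitable normalized weight), and crucially the parabolic subgroup $P(\lambda^\ast) \subseteq H$ associated to $\lambda^\ast$ depends only on $T$, not on the choice among the optimal one-parameter subgroups; in particular $P(\lambda^\ast)$ is preserved by every automorphism of the situation fixing $T$. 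This canonicity is the engine of the argument.

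First I would set up the contrapositive: suppose $T \in V_1 \otimes V_2 \otimes V_3$ is nonzero, $\Gamma$-invariant, with each $V_i$ an irreducible $\Gamma$-representation, and suppose for contradiction that $T$ is unstable under $H$. Then Kempf's theorem supplies the canonical parabolic $P = P(T) \subseteq H$ attached to the optimal destabilizing one-parameter subgroup. Next I would exploit $\Gamma$-invariance: the group $\Gamma$ acts on $V_1 \otimes V_2 \otimes V_3$ through a homomorphism $\Gamma \to \GL(V_1) \times \GL(V_2) \times \GL(V_3)$, and since it fixes $T$, by the canonicity in Kempf's theorem it must normalize (indeed fix) the canonical parabolic $P$. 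Passing to the projectivization, $P$ determines a proper $\Gamma$-stable ``flag-type'' structure inside $V_1 \otimes V_2 \otimes V_3$ — concretely, the optimal $\lambda^\ast$ gives a grading of each $V_i$ (by the $\lambda^\ast$-weight spaces) that is preserved by $\Gamma$. The key step is then to show that irreducibility of each $V_i$ forces this grading to be trivial, and hence $\lambda^\ast$ acts as a scalar (times a central element) on each $V_i$; but then it lies in the center of $H$ and cannot destabilize $T$ (a central one-parameter subgroup sends $T$ either to $\infty$, a nonzero limit, or doesn't have limit $0$ unless $T=0$), giving the contradiction.

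More carefully, the middle step is: the weight filtration of $V_i$ under the optimal $\lambda^\ast$ is a $\Gamma$-subrepresentation filtration of $V_i$ (because $\Gamma$ centralizes $\lambda^\ast$, or at least normalizes $P(\lambda^\ast)$ and the filtration is $P(\lambda^\ast)$-stable). Since $V_i$ is irreducible, either the filtration is trivial — meaning $\lambda^\ast$ is scalar on $V_i$ — or we immediately contradict irreducibility. If $\lambda^\ast$ is scalar on every $V_i$, then because $\lambda^\ast$ has image in $\SL(V_1)\times\SL(V_2)\times\SL(V_3)$, each scalar must be a root of unity, so $\lambda^\ast$ is trivial (a one-parameter subgroup has connected, hence trivial, torsion image), contradicting that $\lambda^\ast$ destabilizes. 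Hence $T$ was semistable. I would present this while quoting the needed statements from \cite{kempf1978instability}: existence of the optimal class and, most importantly, the $\GL$-equivariance/canonicity of the associated parabolic.

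The main obstacle I anticipate is handling the precise form of $\Gamma$-equivariance in Kempf's setup: Kempf's canonicity is usually stated for the parabolic $P(\lambda^\ast)$ being invariant under the centralizer/normalizer structure, and one must argue cleanly that a group $\Gamma$ acting on the $V_i$ and fixing $T$ really does normalize $P(\lambda^\ast)$ — this requires that $\Gamma$'s action be through automorphisms compatible with the $H$-action (which it is, since $\GL(V_i) \supseteq \SL(V_i)$ and $\Gamma \to \prod\GL(V_i)$ normalizes $H = \prod\SL(V_i)$), and that Kempf's ``best'' parabolic is natural with respect to such outer automorphisms. A secondary subtlety is the reduction to the algebraically closed case in the definition of semistability, but that is already built into the definitions in the excerpt, so I would simply extend scalars to $\overline K$ at the outset and note that irreducibility of the $V_i$ need not be preserved — however, we only need the \emph{implication}, and $\Gamma$-invariance of $T$ over $K$ gives $\Gamma$-invariance over $\overline K$, while the hypothesis that the $V_i$ are irreducible over $\Gamma$ should be read (or can be arranged) appropriately; I would flag this and, if necessary, work with $\Gamma$ replaced by a suitable form so that the $V_i$ remain irreducible after base change, or invoke a version of Kempf's results valid over the original field $K$ with the optimal one-parameter subgroup defined over $K$.
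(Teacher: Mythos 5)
Your proof is correct and rests on the same engine as the paper's: Kempf's instability theory, the fact that parabolic subgroups of $\SL(V_1)\times\SL(V_2)\times\SL(V_3)$ are products of flag stabilizers, and irreducibility of the $V_i$ to rule out nontrivial invariant flags. The difference is the entry point into Kempf's paper. The paper quotes Kempf's polystability criterion (\cref{th:kempf}: if the stabilizer of $[T]$ in $\bbP V$ is contained in no proper parabolic, then $[T]$ is polystable) as a black box and verifies its hypotheses directly, which yields the stronger conclusion that $T$ is \emph{polystable} (closed $\SL$-orbit), recorded as \cref{lem:irrpolys}; semistability is then a corollary. You instead argue by contradiction from the existence and canonicity of the optimal destabilizing one-parameter subgroup and its parabolic $P(\lambda^\ast)$ --- essentially re-deriving the relevant corollary of Kempf --- and obtain only semistability, which is all the statement requires. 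The one place where your write-up needs tightening is the claim that $\Gamma$, acting through $\prod_i\GL(V_i)$ rather than through $\SL(V_1)\times\SL(V_2)\times\SL(V_3)$ itself, normalizes the canonical parabolic: the cleanest fixes are either to observe that (over $\overline{K}$) conjugation by any element of $\GL(V_i)$ on $\SL(V_i)$ agrees with conjugation by an element of $\SL(V_i)$, so the canonicity under the stabilizer of the line $[T]$ already applies, or to do what the paper does and rescale each $\rho_i(g)$ by $(\det\rho_i(g))^{-1/\dim V_i}$ so that $\Gamma$ stabilizes $[T]$ through $\SL$'s from the start. Your remaining worry about base change to $\overline{K}$ versus irreducibility is legitimate but is equally present in (and glossed over by) the paper's own argument; in all of the paper's applications the representations are absolutely irreducible, so nothing is lost.
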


We will use the semistability test in proofs in \cref{subsec:complex} and \cref{subsec:tight}. For now we use the test to give simple explicit examples of semistable tensors, so that the reader has some tensors to work with in the rest of this subsection.

\begin{example}
The matrix multiplication tensor $\langle n_1, n_2, n_3\rangle \in (V_1^* \otimes V_2)\otimes (V_2^* \otimes V_3) \otimes (V_3^* \otimes V_1)$, where $V_i = K^{n_i}$, is semistable, because the matrix spaces $V_1^* \otimes V_2$, $V_2^* \otimes V_3$, $V_3^* \otimes V_1$ are irreducible representations of the group $\GL(V_1) \times \GL(V_2) \times \GL(V_3)$ and the matrix multiplication tensor is invariant under the resulting action of this group, which is sometimes called the sandwiching action. Finally, we apply the semistability test (\cref{lem:irrsemis}) to find that the matrix multiplication tensor is semistable.\footnote{In fact, from the stronger \cref{lem:irrpolys} that we discuss in \ref{sec:sstest} it follows that this tensor is \emph{polystable}, meaning that its $\SL$ orbit is closed. For the connection to slice rank we need semistability and therefore we will discuss polystability only in \ref{sec:sstest}.}
\end{example}

\begin{example}\label{ex:diagonal}
The diagonal tensor $\langle n \rangle \coloneqq \sum_{i=1}^n e_i \otimes e_i \otimes e_i \in K^n \otimes K^n \otimes K^n$ is semistable. An argument using the semistability test is as follows.
Let $p$ be an integer. Denote by $V_p$ the representation of $(K^{\times})^n \rtimes S_n$ on $K^n$ given by
$(t_1, \dots, t_n, \sigma) \cdot e_i = t_i^p e_{\sigma i}$.
If $p \neq 0$, then this representation is irreducible.
Indeed, under the action of the torus $(K^{\times})^n$ the space $V_p$ decomposes into a sum of $1$-dimensional weight spaces spanned by the standard basis vectors.
Therefore, every subspace invariant under $(K^{\times})^n$ is spanned by a subset of the standard basis and the only such subspaces invariant under $S_n$ are $\{0\}$ and $V_p$.
The diagonal tensor $\sum_{i = 1}^n e_i \otimes e_i \otimes e_i \in V_1 \otimes V_1 \otimes V_{-2}$ is invariant under $(K^{\times})^n \rtimes S_n$, so it is semistable by the semistability test (\cref{lem:irrsemis}).\footnote{Also here it follows from the stronger \cref{lem:irrpolys} in \ref{sec:sstest} that this tensor is in fact polystable, which means that its $\SL$ orbit is closed.}
\end{example}

The main result of this section is the following characterization of maximality of the asymptotic weighted slice ranks in terms of semistability.

\begin{theorem}\label{thm:asymp-semis}
Let $T \in K^{m_1 \times m_2 \times m_3}$ be a tensor. All Kronecker powers $T^{\otimes k}$ are semistable if and only if $\SR_{\xi}(T)=\min_i m_i^{1/\xi_i}$ for every $\xi \in \Xi$.
\end{theorem}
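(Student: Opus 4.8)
The plan is to prove the equivalence by relating the asymptotic weighted slice rank directly to a semistability-type quantity via the Kempf-based semistability test (\cref{lem:irrsemis}), and to use the multiplicativity of $\SL$-semistability under Kronecker product. First I would establish the easy direction: if $\SR_{\xi}(T) = \min_i m_i^{1/\xi_i}$ for every $\xi \in \Xi$, then $T$ (and every power) is semistable. For this, suppose toward a contradiction that some $T^{\otimes k}$ is unstable, i.e.\ $0 \in \overline{\SL(V_1)\times\SL(V_2)\times\SL(V_3)\cdot T^{\otimes k}}$. By the Hilbert--Mumford criterion there is a one-parameter subgroup (equivalently, a triple of integer weight vectors $u_i$ on the legs) driving $T^{\otimes k}$ to $0$; the associated filtration of the $V_i^{\otimes k}$ exhibits, after passing to a further power, a nontrivial slice decomposition of $(T^{\otimes k})^{\otimes N}$ whose block sizes $r_i$ grow strictly slower than $m_i^{kN}$ in a coordinated way, so that $S_{\xi}((T^{\otimes kN}))^{1/(kN)} < \min_i m_i^{1/\xi_i}$ for at least one $\xi$. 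This contradicts $\SR_{\xi}(T) = \min_i m_i^{1/\xi_i}$; the key computational point is to translate the degeneration rate into a weighted count $r_1^{1/\xi_1}+r_2^{1/\xi_2}+r_3^{1/\xi_3}$ and optimize the choice of $\xi$ against the weight gap.

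For the converse --- all powers $T^{\otimes k}$ semistable implies $\SR_{\xi}(T) = \min_i m_i^{1/\xi_i}$ for all $\xi$ --- I would argue as follows. Fix $\xi \in \Xi$. Since $\SR_{\xi}(T) \le \min_i m_i^{1/\xi_i}$ always holds by \cref{lem:sr-basic}\ref{lem:ub}, it suffices to lower-bound $S_{\xi}(T^{\otimes n})$. The idea is that a slice decomposition of $T^{\otimes n}$ of size $(r_1,r_2,r_3)$ exhibits subspaces $W_i \subseteq V_i^{\otimes n}$ with $T^{\otimes n} \in W_1 \otimes (\cdot) \otimes (\cdot) + \cdots$ (by \cref{lem:supports}); if all three $r_i$ are too small, one can build from the $W_i$ a one-parameter subgroup of $\SL(V_1^{\otimes n})\times\SL(V_2^{\otimes n})\times\SL(V_3^{\otimes n})$ that degenerates $T^{\otimes n}$ to $0$, contradicting semistability of $T^{\otimes n}$. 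Quantitatively: assigning weight $+a_i$ to $W_i$ and $-b_i$ to a complement, the condition that the destabilizing one-parameter subgroup lies in $\SL$ forces $a_i r_i = b_i(m_i^n - r_i)$, and the condition that it sends $T^{\otimes n}$ to $0$ (using the block-triangular support) holds as soon as $a_1 + a_2 + a_3 > 0$ in the right normalization. Pushing the numerology through, semistability forbids $r_i < m_i^{n}$ simultaneously in the weighted sense, and taking $n$-th roots and letting $n \to \infty$ gives $\SR_{\xi}(T) \ge \min_i m_i^{1/\xi_i}$. I would also note that the ``all powers'' hypothesis is exactly what lets us run this for every $n$; one cannot deduce it from semistability of $T$ alone over a general field, which is why the theorem is phrased with all Kronecker powers (over $\C$, the cleaner statement of \cref{cor:ss-charac-intro} uses that semistability of $T$ already implies semistability of all powers).

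The main obstacle I anticipate is the precise bookkeeping in the one-parameter-subgroup construction: converting between a slice decomposition of size $(r_1,r_2,r_3)$ and a genuine destabilizing (i.e.\ trace-zero, integral) one-parameter subgroup, and making the weighted inequality $r_1^{1/\xi_1}+r_2^{1/\xi_2}+r_3^{1/\xi_3} \ge \min_i m_i^{n/\xi_i}$ fall out with the correct direction. In particular one must handle the degenerate cases where some $\xi_i = 0$ (so $r_i$ is forced to $0$ and that leg is inert) using the convention from the Notation paragraph that $m_i^{1/\xi_i} = +\infty$, and one should check that the argument is insensitive to the field, since semistability over non-closed $K$ is defined via base change to $\overline{K}$ and slice decompositions are preserved under field extension. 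I would organize the write-up so that the destabilizing-one-parameter-subgroup lemma is stated once in enough generality (weights on a two-step filtration of each leg, block-triangular support) to serve both directions, and then the theorem is a matter of choosing the weights optimally against $\xi$.
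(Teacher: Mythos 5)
Your outline follows the paper's proof in both directions: the implication ``all powers semistable $\Rightarrow$ $\SR_\xi$ maximal'' is obtained exactly as you describe, by showing that a slice decomposition with all $r_i < \tfrac13 m_i^n$ yields a destabilizing one-parameter subgroup (\cref{lem:slice-unstable}), hence $S_\xi(T^{\otimes n}) \geq c_\xi \min_i m_i^{n/\xi_i}$ with a constant that washes out in the $n$-th root (\cref{cor:slice-semis}). One small correction there: the condition for the one-parameter subgroup to kill $T^{\otimes n}$ is not that $a_1+a_2+a_3>0$ but that each $a_i$ exceeds the sum of the two complementary $b_j$'s on every one of the seven blocks containing the tensor; under the $\SL$ normalization $a_i r_i = b_i q_i$ this is precisely where the threshold $r_i < \tfrac13 m_i^n$ comes from, and it is why you only get the lower bound up to the factor $c_\xi$.

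The genuine gap is in the direction you call ``easy.'' Your plan is to take a destabilizing one-parameter subgroup for an unstable power $T^{\otimes q}$ and assert that its filtration ``exhibits, after passing to a further power, a nontrivial slice decomposition of $(T^{\otimes q})^{\otimes N}$ whose block sizes grow strictly slower than $m_i^{qN}$ in a coordinated way.'' That assertion is the entire quantitative content of this direction and does not follow from the Hilbert--Mumford filtration by routine bookkeeping: one must show that the set of product basis vectors in each leg whose accumulated weight is below the relevant threshold is exponentially smaller than $m_i^{qN}$, and that these three sets cover the support of the power --- a concentration/counting argument over weighted types. The paper imports this wholesale from \cite[Thm.~4.10]{Blasiak-et-al}, which converts the instability $\alpha>0$ of $T^{\otimes q}$ into slice decompositions of $T^{\otimes qk}$ with $r_i \leq m_i^{qk}e^{-\alpha k}$; only then does the choice $\xi_i = \log_m m_i$ (with $m = \max_i m_i$) produce a $\xi$ witnessing $\SR_\xi(T) < m = \min_i m_i^{1/\xi_i}$. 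You should also actually carry out the degenerate case where some $m_i = 1$: there $\xi_i = \log_m 1 = 0$, the convention makes $m_i^{1/\xi_i} = +\infty$, and the paper has to run a separate argument reducing to matrix rank; you flag this case but do not handle it.
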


Note that if $m_1 = m_2 = m_3 = m$, then $\SR_{\xi}(T) = \min_i m_i^{1/\xi_i}$ for every $\xi \in \Xi$ is equivalent to $\SR_{(1,1,1)}(T) = m$. That is, it suffices to look at $\xi = (1,1,1)$, and we recover the connection between asymptotic rank and semistability proved in~\cite{Blasiak-et-al,DBLP:conf/innovations/BurgisserGOWW18}.
For non-cubic tensors, however, one needs to consider the values $\SR_{\xi}(T)$ for $\xi$ other than $(1,1,1)$.
For example, consider a semistable tensor in $K^{m \times m 
\times m}$ embedded into $K^{m \times m \times (m + 1)}$ (by padding with zeros).
The tensor in the larger space is unstable, but the instability is not detected by the asymptotic slice rank $G_{(1,1,1)}(T)$.

Generally, if $T^{\otimes k}$ is semistable, then $T$ is semistable.
Over $\C$, it can be proven that semistability of $T$ is equivalent to the semistability of all Kronecker powers of $T$~\cite[Lem.~6.4 (full version)]{DBLP:conf/innovations/BurgisserGOWW18}.
It is not known whether this property also holds over arbitrary fields. In any case, over the complex numbers this fact together with \cref{thm:asymp-semis} gives:

\begin{corollary}\label{cor:ss-charac}
A tensor $T \in \C^{m_1 \times m_2 \times m_3}$ is semistable if and only if $\SR_{\xi}(T) = \min_i m_i^{1/\xi_i}$ for every~$\xi \in \Xi$.
\end{corollary}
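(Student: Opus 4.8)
The plan is to read off \cref{cor:ss-charac} from \cref{thm:asymp-semis} together with a single known fact about semistability over $\C$. Recall that \cref{thm:asymp-semis} already characterizes maximality of the asymptotic weighted slice ranks by the semistability of \emph{all} Kronecker powers: the equality $\SR_{\xi}(T) = \min_i m_i^{1/\xi_i}$ holds for every $\xi \in \Xi$ if and only if $T^{\otimes k}$ is semistable for every $k \in \N$. So it remains only to show that, when $K = \C$, semistability of $T$ itself is equivalent to semistability of all of its Kronecker powers.

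First I would dispatch the easy direction: if $T^{\otimes k}$ is semistable for all $k$, then taking $k = 1$ shows that $T$ is semistable. For the converse I would invoke \cite[Lem.~6.4 (full version)]{DBLP:conf/innovations/BurgisserGOWW18}, which states precisely that over $\C$ a semistable tensor has semistable Kronecker powers; roughly, this holds because over $\C$ instability can be certified by a rational one-parameter subgroup via the Hilbert--Mumford criterion, and a destabilizing subgroup for $T$ yields a destabilizing subgroup for $T^{\otimes k}$. I would use this lemma as a black box. Chaining the two equivalences then gives: $T$ is semistable $\Longleftrightarrow$ every $T^{\otimes k}$ is semistable $\Longleftrightarrow$ $\SR_{\xi}(T) = \min_i m_i^{1/\xi_i}$ for all $\xi \in \Xi$, which is exactly the claimed statement.

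Since both substantive ingredients---\cref{thm:asymp-semis} and the cited lemma---are already in hand, there is essentially no obstacle here: the proof is a short formal deduction. The one point worth flagging is that this argument is genuinely special to $\C$, the power-versus-single equivalence of semistability being open over general fields, which is why the arbitrary-field version in \cref{thm:asymp-semis} is stated in terms of all powers $T^{\otimes k}$ rather than $T$ alone.
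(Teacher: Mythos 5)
Your proposal is correct and matches the paper's own derivation exactly: \cref{cor:ss-charac} is obtained from \cref{thm:asymp-semis} combined with the cited fact \cite[Lem.~6.4 (full version)]{DBLP:conf/innovations/BurgisserGOWW18} that over $\C$ semistability of $T$ is equivalent to semistability of all its Kronecker powers. Nothing is missing.
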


We will now work towards proving the main result of this section, \cref{thm:asymp-semis}.
The crucial connection between slice decompositions and semistability that we will use to prove \cref{thm:asymp-semis} is the following sufficient condition for a tensor to be unstable and the corresponding necessary condition for a tensor to be semistable.

\begin{lemma}\label{lem:slice-unstable}
For any tensor $T \in V_1 \otimes V_2 \otimes V_3$, if $T$ has a slice decomposition of size $(r_1, r_2, r_3)$ such that $r_i < \frac13 \dim V_i$ for every $i\in [3]$, then $T$ is unstable. 

Phrased contrapositively, for any tensor $T \in V_1 \otimes V_2 \otimes V_3$, if $T$ is semistable, then for every slice decomposition of $T$ of size $(r_1, r_2, r_3)$, there is an $i \in [3]$ such that $r_i \geq \tfrac13 \dim V_i$.
\end{lemma}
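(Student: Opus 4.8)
The plan is to exhibit an explicit one-parameter subgroup of $\SL(V_1) \times \SL(V_2) \times \SL(V_3)$ along which $T$ degenerates to $0$, which by definition shows $T$ is unstable. By \cref{lem:supports}, a slice decomposition of size $(r_1, r_2, r_3)$ gives subspaces $W_i \subseteq V_i$ with $\dim W_i = r_i$ and
\[
T \in W_1 \otimes V_2 \otimes V_3 + V_1 \otimes W_2 \otimes V_3 + V_1 \otimes V_2 \otimes W_3.
\]
First I would pick a basis of each $V_i$ adapted to $W_i$, i.e.\ whose first $r_i$ vectors span $W_i$, and define a diagonal one-parameter subgroup $\lambda_i(t) \in \GL(V_i)$ that acts on the basis vectors in $W_i$ by a weight $a_i$ and on the remaining $\dim V_i - r_i$ basis vectors by a weight $b_i$, with $a_i > b_i$. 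To land in $\SL$ we need the weights to sum to zero, i.e.\ $r_i a_i + (\dim V_i - r_i) b_i = 0$; this forces $b_i < 0 < a_i$, and a convenient normalization is $a_i = \dim V_i - r_i$ and $b_i = -r_i$.

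The key computation is the action of $\lambda_1(t) \otimes \lambda_2(t) \otimes \lambda_3(t)$ on a basis tensor $e_{j_1} \otimes e_{j_2} \otimes e_{j_3}$ appearing in $T$: it scales it by $t^{w}$ where $w = \sum_i (\text{weight of } e_{j_i})$. For every basis tensor in the support of $T$, the membership displayed above guarantees that at least one index $j_i$ lies in $W_i$, contributing weight $a_i = \dim V_i - r_i$; the other two indices contribute at worst $b_{i'} = -r_{i'}$. Hence
\[
w \;\geq\; \min_{i} \Bigl( (\dim V_i - r_i) - \sum_{i' \neq i} r_{i'} \Bigr) \;=\; \min_i \Bigl( \dim V_i - \textstyle\sum_{i'} r_{i'} \Bigr).
\]
Using the hypothesis $r_{i'} < \tfrac13 \dim V_{i'}$ for all $i'$ is, however, not yet enough with this crude normalization, because the $\dim V_i$ differ; I expect the right move is to rescale: replace the weights $(a_i, b_i)$ by $(a_i/\dim V_i, b_i/\dim V_i)$ — still an $\SL$ one-parameter subgroup after clearing denominators — so that the $W_i$-weight becomes $1 - r_i/\dim V_i > 2/3$ and each complementary weight is $-r_{i'}/\dim V_{i'} > -1/3$. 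Then every basis tensor in $\supp(T)$ is scaled by $t^w$ with $w > 2/3 - 1/3 - 1/3 = 0$, so $(\lambda_1(t)\otimes\lambda_2(t)\otimes\lambda_3(t)) \cdot T \to 0$ as $t \to 0$, witnessing instability. (If $K$ is not algebraically closed, pass to $\overline K$ first, as in the definition of semistability; the one-parameter subgroup is defined over the prime field anyway.)

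The main obstacle, and the only subtle point, is the bookkeeping that makes the strict inequalities $r_i < \tfrac13 \dim V_i$ combine correctly into a strictly positive exponent $w$ uniformly over all basis tensors in the support — this is exactly why one wants to normalize each leg's weights by $\dim V_i$ so that the three contributions live on a common scale and the constant $\tfrac13$ appears symmetrically. Once the normalization is fixed, the verification that $w>0$ for every surviving basis tensor is immediate from \cref{lem:supports}, and the contrapositive statement is then a formal rephrasing.
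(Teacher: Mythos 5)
Your proposal is correct and is essentially the paper's own proof: both construct a destabilizing diagonal one-parameter subgroup of $\SL(V_1)\times\SL(V_2)\times\SL(V_3)$ adapted to the subspaces $W_i$ from \cref{lem:supports}, with positive weight on $W_i$ and negative weight on a chosen complement, normalized to determinant one, and then observe that every weight occurring in the seven summands containing $T$ is strictly positive, so $\lambda(t)\cdot T\to 0$. Your normalization of the weights by $\dim V_i$ differs from the paper's choice ($t^{\frac{q_i}{r_i}r_1r_2r_3}$ on $W_i$ and $t^{-r_1r_2r_3}$ on the complement, with the case $r_i=0$ treated separately), but this is cosmetic --- and in fact your choice handles $r_i=0$ uniformly, the only nitpick being that then $b_i=0$ rather than $b_i<0$ as you assert.
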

\begin{proof}
By \cref{lem:supports} there are subspaces $W_i \subset V_i$ with dimensions $\dim W_i = r_i < \frac{\dim V_i}{3}$ such that the tensor $T$ lies in $W_1 \otimes V_2 \otimes V_3 + V_1 \otimes W_2 \otimes V_3 + V_1 \otimes V_2 \otimes W_3$.
For each of these subspaces choose a complement $U_i$, so that $V_i = W_i \oplus U_i$. Denote $q_i = \dim U_i$. Note that $q_i > 2r_i$.

We first assume that all $r_i \neq 0$.
Consider a $1$-parameter subgroup $\lambda(t) = (\lambda_1(t), \lambda_2(t), \lambda_3(t))$ in $\SL(V_1) \times \SL(V_2) \times \SL(V_3)$  where $\lambda_i(t) = t^{\frac{q_i}{r_i} r_1 r_2 r_3} \id \oplus t^{-r_1 r_2 r_3} \id$ with respect to the decomposition $V_i = W_i \oplus U_i$.

The decompositions $V_i = W_i \oplus U_i$ induce a direct sum decomposition of $V_1 \otimes V_2 \otimes V_3$ into eight summands.
The space $W_1 \otimes V_2 \otimes V_3 + V_1 \otimes W_2 \otimes V_3 + V_1 \otimes V_2 \otimes W_3$ containing $T$ is the direct sum of seven out of these eight summands, the missing summand being $U_1 \otimes U_2 \otimes U_3$.
Because $q_i > 2r_i$, on each of these seven summands $\lambda(t)$ acts with a positive power of $t$.
Therefore the curve $\lambda(t) \cdot T$ contains zero in its closure and $T$ is unstable.

If some of the $r_i$ are zero, then we can give a similar construction where $\lambda(t)$ does not act on factors for which $r_i = 0$.
For example, if $r_3 = 0$ we consider the $1$-parameter subgroup $\lambda(t) = (t^{q_1 r_2} \id \oplus t^{-r_1 r_2} \id, t^{r_1 q_2} \id \oplus t^{-r_1 r_2} \id, \id)$.
\end{proof}

Using the necessary condition for semistability in \cref{lem:slice-unstable} we prove the following lower bound on the weighted slice rank.

\begin{lemma}\label{cor:slice-semis}
Let $\xi \in \Xi$. For $T \in K^{m_1 \times m_2 \times m_3}$, if $T$ is semistable, then $S_{\xi}(T) \geq c_{\xi} \min_i m_i^{1/\xi_i}$ for some constant $c_{\xi} > 0$ that depends only on $\xi$.
\end{lemma}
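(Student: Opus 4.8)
The plan is to contrapose: take an optimal slice decomposition of a semistable tensor $T$ and show its weighted size cannot be too small relative to $\min_i m_i^{1/\xi_i}$. Suppose $T$ is semistable and let $(r_1, r_2, r_3)$ be a slice decomposition realizing $S_\xi(T) = r_1^{1/\xi_1} + r_2^{1/\xi_2} + r_3^{1/\xi_3}$ (with $r_i = 0$ whenever $\xi_i = 0$). By \cref{lem:slice-unstable}, since $T$ is semistable, there exists an index $i \in [3]$ with $r_i \geq \tfrac13 m_i$. Fix such an $i$; note that for this index we must have $\xi_i \neq 0$ (otherwise $r_i = 0 < \tfrac13 m_i$, a contradiction, assuming $m_i \geq 1$). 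Then
\[
  S_\xi(T) \;\geq\; r_i^{1/\xi_i} \;\geq\; \bigl(\tfrac13 m_i\bigr)^{1/\xi_i} \;=\; \bigl(\tfrac13\bigr)^{1/\xi_i} m_i^{1/\xi_i}.
\]
The remaining task is to convert this bound ``for some $i$'' into a bound against $\min_i m_i^{1/\xi_i}$, with a constant depending only on $\xi$.

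First I would dispose of the indices $j$ with $\xi_j = 0$: for these, $m_j^{1/\xi_j}$ is by convention $+\infty$, so they never attain the minimum (unless all $\xi_j$ with finite contribution are also infinite, which is excluded since $\max_j \xi_j = 1$). Hence $\min_j m_j^{1/\xi_j}$ is attained at an index $j_0$ with $\xi_{j_0} > 0$. Now, for the semistable index $i$ from above, we have $m_i^{1/\xi_i} \geq m_{j_0}^{1/\xi_{j_0}} = \min_j m_j^{1/\xi_j}$ trivially (as $i$ is one of the indices over which the min is taken). Combining with the display,
\[
  S_\xi(T) \;\geq\; \bigl(\tfrac13\bigr)^{1/\xi_i} m_i^{1/\xi_i} \;\geq\; \Bigl(\min_{k : \xi_k > 0} \bigl(\tfrac13\bigr)^{1/\xi_k}\Bigr) \cdot \min_j m_j^{1/\xi_j}.
\]
So the claimed constant is $c_\xi = \min_{k : \xi_k > 0} (1/3)^{1/\xi_k}$, which indeed depends only on $\xi$ and lies in $(0,1]$.

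The only subtlety, and the step I would be most careful about, is the degenerate corner cases: small dimensions $m_i$ (where $\tfrac13 m_i$ may be less than $1$ while $r_i$ must be a nonnegative integer — but \cref{lem:slice-unstable} already handles this, since its conclusion $r_i \geq \tfrac13 m_i$ is exactly what we invoke, and a semistable nonzero tensor forces $m_i \geq 1$), and the bookkeeping around which $\xi_j$ vanish. None of this involves real work — \cref{lem:slice-unstable} does the heavy lifting — so the proof is short. I would write it out essentially as the three displays above, with a sentence each for the reduction to $\xi_j > 0$ indices and for reading off $c_\xi$.
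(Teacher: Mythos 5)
Your proof is correct and follows essentially the same route as the paper: take an optimal slice decomposition, apply the contrapositive of \cref{lem:slice-unstable} to get an index $i$ with $r_i \geq \tfrac13 m_i$, and bound $S_\xi(T) \geq r_i^{1/\xi_i} \geq 3^{-1/\xi_i} m_i^{1/\xi_i} \geq c_\xi \min_j m_j^{1/\xi_j}$. The paper states the constant as $c_\xi = 3^{-1/\min_i \xi_i}$; your $c_\xi = \min_{k:\xi_k>0}(1/3)^{1/\xi_k}$ is the same quantity handled slightly more carefully when some $\xi_j = 0$.
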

\begin{proof}
Suppose that the value of $S_{\xi}(T^{\otimes n})$ is attained by a slice decomposition of $T$ of size $(r_1, r_2, r_3)$.
By \cref{lem:slice-unstable} there is an $i\in [3]$ such that $r_i \geq \frac{m_i}{3}$.
Therefore
\[S_{\xi}(T) = r_1^{1/\xi_1} + r_2^{1/\xi_2} + r_3^{1/\xi_3} \geq r_i^{1/\xi_i} \geq 3^{-1/\xi_i} m_i^{1/\xi_i} \geq 3^{-1/\min_i \xi_i} \min_i m_i^{1/\xi_i}.
\]
This proves the claim for $c_\xi = 3^{-1/\min_i \xi_i}$.
\end{proof}

By applying \cref{cor:slice-semis} to powers of tensors we get the following sufficient condition for the asymptotic weighted slice rank to be maximal.

\begin{lemma}\label{cor:slice-semis-asymp}
Let $\xi \in \Xi$. For any tensor $T \in K^{m_1 \times m_2 \times m_3}$, if for every $k \in \N$ the tensor $T^{\otimes k}$ is semistable, then $G_{\xi}(T) = \min_i m_i^{1/\xi_i}$.
\end{lemma}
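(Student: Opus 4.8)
The plan is to establish the two matching bounds on $\SR_\xi(T)$, the upper bound being immediate and the lower bound following from the uniform constant in \cref{cor:slice-semis}.

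First I would dispose of the upper bound $\SR_\xi(T) \le \min_i m_i^{1/\xi_i}$, which in fact holds for \emph{every} tensor, semistable or not: the power $T^{\otimes n}$ lives in a tensor product of spaces of dimensions $m_i^n$, so \cref{lem:sr-basic}\ref{lem:ub} gives $S_\xi(T^{\otimes n}) \le \min_i (m_i^n)^{1/\xi_i} = \bigl(\min_i m_i^{1/\xi_i}\bigr)^n$; taking $n$-th roots and then the limsup from \cref{def:awsr} yields the claim.

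For the lower bound, the point to exploit is that the constant $c_\xi$ in \cref{cor:slice-semis} depends only on $\xi$ and not on the format of the tensor. I would apply \cref{cor:slice-semis} to the tensor $T^{\otimes k} \in K^{m_1^k \times m_2^k \times m_3^k}$, which is semistable by hypothesis, obtaining
\[
S_\xi(T^{\otimes k}) \ \ge\ c_\xi \min_i (m_i^k)^{1/\xi_i} \ =\ c_\xi \Bigl(\min_i m_i^{1/\xi_i}\Bigr)^k .
\]
Taking $k$-th roots gives $S_\xi(T^{\otimes k})^{1/k} \ge c_\xi^{1/k}\,\min_i m_i^{1/\xi_i}$, and since $c_\xi$ is a fixed positive number we have $c_\xi^{1/k} \to 1$ as $k \to \infty$. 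Hence $\SR_\xi(T) = \limsup_{n\to\infty} S_\xi(T^{\otimes n})^{1/n} \ge \min_i m_i^{1/\xi_i}$ (and since the same bound applies termwise it also controls the liminf, so the limsup is in fact a limit). Combining the two bounds gives $\SR_\xi(T) = \min_i m_i^{1/\xi_i}$.

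I do not expect a genuine obstacle here: all the real work has already been done in \cref{lem:slice-unstable} and \cref{cor:slice-semis}. The only thing to be careful about is that the constant supplied by \cref{cor:slice-semis} is uniform over all formats, which is precisely what makes the factor $c_\xi^{1/k}$ disappear in the limit; without that uniformity the argument would fail.
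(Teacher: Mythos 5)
Your proposal is correct and follows essentially the same route as the paper: the lower bound comes from applying \cref{cor:slice-semis} to the semistable power $T^{\otimes k}$ of format $m_1^k\times m_2^k\times m_3^k$ and letting the format-independent constant $c_\xi^{1/k}$ tend to $1$, while the upper bound is the general one from \cref{lem:sr-basic}\ref{lem:ub}. The paper states this more tersely but the argument is identical.
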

\begin{proof}
By~\cref{cor:slice-semis} we have $S_{\xi}(T^{\otimes k}) \geq c_{\xi} \min_i m_i^{k/\xi_i}$
and asymptotically $G_{\xi}(T) \geq \min_i m_i^{1/\xi_i}$.
\end{proof}

Finally, we prove a necessary condition for the asymptotic weighed slice ranks to be maximal. Combined with the above sufficient condition, this will prove \cref{thm:asymp-semis}.

\begin{lemma}\label{lem:asymp-max-implies-semis}
  Let $T \in K^{m_1 \times m_2 \times m_3}$.
  If $G_{\xi}(T) = \min_i m_i^{1/\xi_i}$ for every $\xi \in \Xi$, then for every $k \in \N$ the tensor $T^{\otimes k}$ is semistable.
\end{lemma}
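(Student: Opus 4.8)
The statement to prove is the converse direction of \cref{thm:asymp-semis}: if $G_{\xi}(T) = \min_i m_i^{1/\xi_i}$ for every $\xi \in \Xi$, then every Kronecker power $T^{\otimes k}$ is semistable. The plan is to argue by contradiction using the contrapositive of \cref{lem:slice-unstable}: an unstable tensor admits a ``thin'' slice decomposition in which at least one coordinate $r_i$ is strictly smaller than $\frac13 \dim V_i$. The key realization is that instability of a power $T^{\otimes k}$ can be leveraged, via the semistability test's 1-parameter-subgroup machinery, into instability of all further powers $T^{\otimes kn}$ with a quantitatively controlled (in fact, exponentially shrinking relative to $\dim$) slice decomposition in one of the three coordinates — and that this forces $G_\xi$ to be strictly submaximal for a cleverly chosen $\xi$.

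First I would suppose, for contradiction, that $T^{\otimes k}$ is unstable for some $k$. By the Hilbert–Mumford criterion (implicit in the proof of \cref{lem:slice-unstable}, or directly from Kempf's results underlying \cref{lem:irrsemis}) there is a 1-parameter subgroup of $\SL(V_1^{\otimes k}) \times \SL(V_2^{\otimes k}) \times \SL(V_3^{\otimes k})$ driving $T^{\otimes k}$ to $0$. Rather than working with arbitrary 1-PSGs on the power spaces, I would instead extract from instability of $T^{\otimes k}$ a slice decomposition of $T^{\otimes k}$ of some size $(r_1, r_2, r_3)$ with $r_i < \frac13 m_i^k$ for all $i$ — this is exactly the contrapositive of the second sentence of \cref{lem:slice-unstable}. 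Then, using \cref{lem:decomp}\ref{lem:decompsubadd} together with the fact that a slice decomposition of $S$ of size $(a_1,a_2,a_3)$ and of $S'$ of size $(b_1,b_2,b_3)$ gives one of $S \otimes S'$ — more precisely, I would need the multiplicativity statement: if $T^{\otimes k}$ lies in $W_1 \otimes V_2^{\otimes k} \otimes V_3^{\otimes k} + \cdots$ with $\dim W_i = r_i$, then taking $n$-fold tensor products of this containment and distributing yields that $T^{\otimes kn}$ has a slice decomposition of size roughly $(3^{n} r_i' \cdot (\text{stuff}))$ — I would obtain a slice decomposition of $T^{\otimes kn}$ in which, after picking out the dominant term, one coordinate is bounded by something like $n \cdot r_i^{?} \cdot m_i^{k(n-1)}$ or, more carefully, by the product structure, of size $(r_1, r_2, r_3)^{\star n}$ where the star-product of slice-decomposition sizes satisfies $S_{(1,1,1)}(A \otimes B) \le $ an expression that is subexponentially off from $r_i^{n}$ in the $i$-th slot for a suitable $i$.

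The cleanest route, I expect, is this: fix the $i$ (say $i=1$) that is ``small'' and set $\xi$ to put nearly all weight on coordinate $1$ and tiny weight $\eps$ on the other two, i.e. $\xi = (1, \eps, \eps)$. Then $G_\xi(T) \ge \min\{m_1^{1}, m_2^{1/\eps}, m_3^{1/\eps}\} = m_1$ by hypothesis. On the other hand, from the slice decomposition of $T^{\otimes kn}$ derived above, $S_\xi(T^{\otimes kn}) \le (\text{coordinate-1 size})^{1} + (\text{coordinate-2 size})^{1/\eps} + (\text{coordinate-3 size})^{1/\eps}$, and I would arrange that the coordinate-1 contribution grows like $(c \, m_1^k)^n$ with $c = r_1/m_1^k < 1/3 < 1$, while the other two contributions, although raised to the large power $1/\eps$, are of a base strictly below $m_1$ once $\eps$ is small — hence negligible asymptotically, OR, alternatively, arrange the decomposition so that coordinates $2,3$ carry size $0$ at the cost of coordinate $1$ carrying size $\le (m_1^{k} \cdot 2^{-\Omega(n)})$ using the unstable direction repeatedly. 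Taking $kn$-th roots and letting $n \to \infty$ would give $G_\xi(T) \le c^{1/k} m_1 < m_1$, contradicting maximality. So the final step is to choose $\eps$ small enough (after $c$ and $k$ are fixed) that the $1/\eps$-powered terms stay below $m_1^{kn}$, which is possible precisely because their bases are $\le \max(m_2, m_3)$ times a constant and we only need the inequality $G_\xi < m_1$, not a quantitative bound.

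The main obstacle I anticipate is the bookkeeping in the tensor-power step: showing that instability of a single power $T^{\otimes k}$ propagates to a slice decomposition of $T^{\otimes kn}$ whose ``small'' coordinate genuinely shrinks exponentially relative to $m_i^{kn}$, rather than merely staying below $\frac13 m_i^{kn}$. The subtlety is that \cref{lem:decomp}\ref{lem:decompsubadd} is only subadditive, so naively tensoring $n$ copies of a size-$(r_1,r_2,r_3)$ decomposition gives $3^n$ terms and the arithmetic must be organized — likely via the observation that $W_1 \otimes V_2^{\otimes k} \otimes V_3^{\otimes k} + \cdots$ tensored with itself still has the shape ``ambient space minus $U_1^{\otimes k} \otimes U_2^{\otimes k} \otimes U_3^{\otimes k}$ tensored'', and the missing block of $(T^{\otimes k})^{\otimes n}$ has dimension $\prod_i (m_i^k - r_i)^n = \prod_i q_i^{kn}$ in the appropriate sense, so that the slice-decomposition sizes of $T^{\otimes kn}$ are governed by $m_i^{kn} - q_i^{kn}$; since $q_i > 2 r_i$ and $q_i + r_i \ge m_i^k$ forces $q_i \ge \frac23 m_i^k$, one of the three can be taken as small as $m_i^{kn} - (\frac23 m_i^k)^n = m_i^{kn}(1 - (\tfrac23)^n) $ — wait, that goes to $m_i^{kn}$, so instead one must exploit that we only need ONE coordinate small and use the product block structure $U_1 \otimes U_2 \otimes U_3$ being missing to conclude that after the tensor power the decomposition can route almost everything onto coordinate $i$ with the *other two* nearly full — this is exactly the non-cubic/asymmetric phenomenon the weighted slice rank is built to detect. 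I would resolve this by directly iterating the 1-PSG construction from the proof of \cref{lem:slice-unstable} on $T^{\otimes kn}$: the destabilizing 1-PSG for $T^{\otimes k}$ tensored $n$ times destabilizes $T^{\otimes kn}$, and reading off its nonnegative-weight eigenspace decomposition gives the needed asymmetric slice decomposition with one coordinate of size $\le m_i^{kn} - (2r_i)^n m_{j}^{k n}/m_j^{kn}\cdots$; concretely this is the step where I'd be most careful, and I would fall back on the explicit 1-PSG formulas $\lambda_i(t) = t^{a_i}\id_{W_i} \oplus t^{-b_i}\id_{U_i}$ to make the exponents work out.
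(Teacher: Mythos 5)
There is a genuine gap, and you have put your finger on it yourself: the entire argument hinges on showing that instability of a single power $T^{\otimes q}$ forces slice decompositions of the higher powers $T^{\otimes qk}$ whose coordinates shrink \emph{exponentially} relative to $m_i^{qk}$, and none of your proposed routes establishes this. First, the statement you start from --- ``unstable implies there is a slice decomposition with all $r_i < \frac13 m_i^k$'' --- is the \emph{converse} of \cref{lem:slice-unstable}, not its contrapositive, and it is not proved anywhere in the paper; it requires the Hilbert--Mumford criterion plus an analysis of the weight spaces of a destabilizing one-parameter subgroup. Second, even granting it, a bound $r_i < \frac13 m_i^k$ does not tensor up to anything useful: as your own computation shows, naive tensoring gives coordinates of size about $m_i^{kn}(1-(2/3)^n)$, which tends to the trivial bound. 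The paper closes this gap by importing a quantitative result of Blasiak et al.\ (their Theorem~4.10): if $T^{\otimes q}$ is unstable with instability $\alpha>0$, then $T^{\otimes qk}$ has a slice decomposition with $r_i \le m_i^{qk}e^{-\alpha k}$ \emph{in all three coordinates simultaneously}. This quantitative behaviour of the GIT instability under tensor powers is the nontrivial input you are missing, and it does not follow from \cref{lem:decomp} or from iterating the explicit one-parameter subgroups of \cref{lem:slice-unstable}.

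Your choice of weighting is also the wrong one. With $\xi=(1,\eps,\eps)$ the terms $r_2^{1/\eps}+r_3^{1/\eps}$ have bases on the order of $m_2^{1/\eps}$ and $m_3^{1/\eps}$, which blow up as $\eps\to 0$ rather than dropping below $m_1$; the only way to kill them is to take $r_2=r_3=0$, i.e.\ a decomposition of size $(r_1,0,0)$ with $r_1$ exponentially submaximal --- but $S_{(1,0,0)}$ is a flattening rank and hence multiplicative, so this is impossible whenever $T_{(1)}$ has full rank. The correct choice (used in the paper) is $\xi_i=\log_m m_i$ with $m=\max_i m_i$, which equalizes $m_i^{1/\xi_i}=m$ for all $i$; then the exponential decay in all three coordinates makes each term $r_i^{1/\xi_i}$ at most $m^{qk}$ times an exponentially small factor, giving $\SR_\xi(T)<m=\min_i m_i^{1/\xi_i}$ and the desired contradiction. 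Finally, you do not treat the degenerate case where some $m_i=1$ (so the corresponding $\xi_i=0$), which the paper handles by a separate direct argument reducing to matrix rank.
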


\begin{proof}
Suppose at least one of the $m_i$ is strictly larger than 1.
Assume that some power $T^{\otimes q}$ is unstable. 
By~\cite[Thm.~4.10]{Blasiak-et-al} the powers $T^{\otimes qk}$ have slice decompositions of size $(r_1, r_2, r_3)$ with $r_i \leq m_i^{qk}e^{-\alpha k}$ where $\alpha > 0$ is the instability of $T^{\otimes q}$ as defined in~\cite{Blasiak-et-al}.
Denote $m = \max_i m_i$ and take $\xi_i = \log_m m_i$ so that $m_i^{1/\xi_i} = m$.
Clearly, $\xi = (\xi_1, \xi_2, \xi_3)$ lies in $\Xi$.
The corresponding weighted slice rank can be bounded as $$S_{\xi}(T^{\otimes qk}) \leq r_1^{1/\xi_1} + r_2^{1/\xi_2} + r_3^{1/\xi_3} \leq m^{qk} \left(e^{-\frac{\alpha}{\xi_1} k} + e^{-\frac{\alpha}{\xi_2} k} + e^{-\frac{\alpha}{\xi_3} k}\right) \leq  3 \left(m e^{-\frac{\alpha}{q \min_i \xi_i}}\right)^{qk}.$$
It follows that the asymptotic $\xi$-weighted slice rank cannot have the maximal possible value $m$.

Consider now the case when $m_1 = 1$.
In this case, the tensor can be presented as $T = e_1 \otimes M$ for some matrix $M \in K^{m_2 \times m_3}$.
Assume $M$ is nonzero and $\rk M = r$.

Every slice decomposition for $T$ of size $(0, r_2, r_3)$ can be transformed into a rank decomposition for $M$ with $r_2 + r_3$ summands by forgetting the first tensor factor.
Therefore $S_{(0,1,1)}(T) \geq \rk M$.
On the other hand, every rank $r$ decomposition for $M$ can be transformed into a slice decomposition for $T$ of size $(0, r, 0)$ or $(0, 0, r)$ by tensoring one of the factors with $e_1$, which gives an upper bound $S_{\xi}(T) \leq r$ if $\xi_1 = 0$.
Since $S_{(0,1,1)}(T) \leq S_{\xi}(T)$ if $\xi_1 = 0$, we have $S_{\xi}(T) = r$ if $\xi_1 = 0$.
If $\xi_1 > 0$, then $S_{\xi}(T) = 1$ from the decomposition $T = e_1 \otimes M$.

Since matrix rank is multiplicative, we have the same values for the asymptotic weighted slice ranks: $G_{\xi}(T) = 1$ if $\xi_1 > 0$ and $G_{\xi}(T) = r$ if $\xi_1 = 0$.
The equality $G_{\xi}(T) = \min_i m_i^{1/\xi_i}$ holds if and only if $r = m_2 = m_3$.
If this is the case, the $\SL_1 \times \SL_{m_2} \times \SL_{m_3}$ orbit of $T$ consists of tensors of the form $e_1 \otimes M'$ with $\det M'  = \det M$, it is closed and does not contain $0$, so $T$ is semistable.
\end{proof}

\begin{proof}[Proof of \cref{thm:asymp-semis}]
Combine the sufficient condition in \cref{cor:slice-semis-asymp} and the necessary condition in \cref{lem:asymp-max-implies-semis}.
\end{proof}

\subsection{Arbitrary tensors over the complex numbers}\label{subsec:complex}

In this section we characterize, for tensors over $\C$, the value of the asymptotic weighted slice rank in terms of moment polytopes. Recall that the moment polytope of a tensor $T$ is denoted by $\Pi(T)$, and that~$H$ denotes the Shannon entropy of probability vectors. We gave a definition of moment polytopes in terms of singular values in \cref{subsec:intro-awsr} and we will see another of the standard descriptions soon. The characterization that we prove is as follows.

\begin{theorem}[\cref{thm:sr-polytope-intro} in the introduction]\label{thm:sr-polytope}
  Let $V_1$, $V_2$ and $V_3$ be vector spaces over $\C$
  and let $T \in V_1 \otimes V_2 \otimes V_3$ be a tensor.
  For every $\xi \in \Xi$ we have
  \[
     \SR_{\xi}(T) = \max_{p \in \Pi(T)} \min_{i\in [3]}  2^{H(p_i) / \xi_i}.
  \]
\end{theorem}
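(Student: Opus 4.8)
The goal is to prove $\SR_{\xi}(T) = \max_{p \in \Pi(T)} \min_{i} 2^{H(p_i)/\xi_i}$ for complex tensors, with the limsup being a limit. The natural strategy is to prove the two inequalities separately, using the Schur--Weyl decomposition of $T^{\otimes n}$ as the main workhorse, exactly as in the $\xi = (1,1,1)$ case of \cite{DBLP:conf/stoc/ChristandlVZ18}, and then to reduce the hard parts to the semistability machinery of \cref{subsec:semistable}.

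Let me sketch both directions.

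**Upper bound** $\SR_{\xi}(T) \leq \max_{p \in \Pi(T)} \min_i 2^{H(p_i)/\xi_i}$. First I would invoke the Schur--Weyl decomposition: under $\GL(V_1) \times \GL(V_2) \times \GL(V_3)$, the space $(V_1 \otimes V_2 \otimes V_3)^{\otimes n} = V_1^{\otimes n} \otimes V_2^{\otimes n} \otimes V_3^{\otimes n}$ decomposes into isotypic components indexed by triples of Young diagrams $(\lambda^{(1)},\lambda^{(2)},\lambda^{(3)})$ with $n$ boxes, each component being $S_{\lambda^{(1)}}(V_1) \otimes S_{\lambda^{(2)}}(V_2) \otimes S_{\lambda^{(3)}}(V_3)$ tensored with a multiplicity space. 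Writing $T^{\otimes n}$ as a sum of its components $T_{\lambda}$, the number of terms is $\poly(n)$, and the Kronecker power $T^{\otimes n}$ has $\SR_\xi$ at most the sum over $\lambda$ of $\SR_\xi(T_\lambda)$ by subadditivity (\cref{lem:decomp}\ref{lem:decompsubadd} lifted through \cref{lem:sr-basic}), so $\tfrac1n \log$ of the total is governed by the largest term. For each single component $T_\lambda$ living in $S_{\lambda^{(1)}}(V_1)\otimes S_{\lambda^{(2)}}(V_2)\otimes S_{\lambda^{(3)}}(V_3)$, the crucial point is that when $T_\lambda \neq 0$, the normalized diagrams $\bar\lambda^{(i)} = \lambda^{(i)}/n$ lie (up to $o(1)$) in $\Pi(T)$ (this is the standard moment-polytope / Kronecker-coefficient positivity fact; equivalently, $\Pi(T)$ is the closure of the set of such normalized triples). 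The dimension of $S_{\lambda^{(i)}}(V_i)$ is $2^{nH(\bar\lambda^{(i)}) + o(n)}$ by the standard estimate on dimensions of $\GL$-irreducibles. Now I bound $\SR_\xi(T_\lambda)$: it lies in a space of format $d_1 \times d_2 \times d_3$ with $d_i = \dim S_{\lambda^{(i)}}(V_i)$, and by \cref{lem:sr-basic}\ref{lem:ub}, $S_\xi(T_\lambda) \leq \min_i d_i^{1/\xi_i} = \min_i 2^{(nH(\bar\lambda^{(i)})+o(n))/\xi_i}$. Taking $n$-th roots, summing over the $\poly(n)$ components, and letting $n \to \infty$ gives $\SR_\xi(T) \leq \max_{p \in \Pi(T)} \min_i 2^{H(p_i)/\xi_i}$, where the max over the finitely-many attained normalized triples converges to the max over the polytope by continuity of $p \mapsto \min_i 2^{H(p_i)/\xi_i}$ and density.

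**Lower bound** $\SR_{\xi}(T) \geq \max_{p \in \Pi(T)} \min_i 2^{H(p_i)/\xi_i}$. Fix $p \in \Pi(T)$ achieving the max; I may assume $p$ is a triple of rational probability vectors (by continuity and density of rational points in $\Pi(T)$, again using the polytope structure and the fact that $\Pi$ is a rational polytope). Pick $n$ large and a diagram triple $\lambda$ with $\bar\lambda^{(i)} \approx p_i$ that is actually attained, i.e. $T_\lambda \neq 0$. The component $T_\lambda \in S_{\lambda^{(1)}}(V_1) \otimes S_{\lambda^{(2)}}(V_2) \otimes S_{\lambda^{(3)}}(V_3)$ is a restriction of $T^{\otimes n}$ (it is $T^{\otimes n}$ composed with the equivariant projection and inclusion maps on each leg), so $\SR_\xi(T^{\otimes n}) \geq S_\xi(T_\lambda)$ by monotonicity \cref{lem:sr-basic}\ref{lem:srmon}. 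The hard part is now to lower-bound $S_\xi(T_\lambda)$: I want $S_\xi(T_\lambda) \geq c \cdot \min_i d_i^{1/\xi_i}$ for a constant $c$ not depending on $n$, where $d_i = \dim S_{\lambda^{(i)}}(V_i) = 2^{nH(p_i)+o(n)}$. This is exactly \cref{cor:slice-semis}, and its hypothesis is that $T_\lambda$ be \emph{semistable} under $\SL(S_{\lambda^{(1)}}(V_1)) \times \SL(S_{\lambda^{(2)}}(V_2)) \times \SL(S_{\lambda^{(3)}}(V_3))$. So the key lemma I need is: \emph{the isotypic component $T_\lambda$ is semistable as a tensor in $S_{\lambda^{(1)}}(V_1) \otimes S_{\lambda^{(2)}}(V_2) \otimes S_{\lambda^{(3)}}(V_3)$}. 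This I would get from the semistability test (\cref{lem:irrsemis}): the Schur functors $S_{\lambda^{(i)}}(V_i)$ are irreducible $\GL(V_i)$-representations; taking $\Gamma$ to be an appropriate subgroup whose relevant representations on each $S_{\lambda^{(i)}}(V_i)$ are irreducible and under whose diagonal action $T_\lambda$ (or a suitable modification) is invariant — more precisely, one uses that $T_\lambda$, being the projection of the highly symmetric tensor $T^{\otimes n}$ onto an isotypic component, carries enough symmetry; with a Schur--Weyl / highest-weight-vector argument (as in \cite{Blasiak-et-al}, Thm.~4.10, which is the $\xi=(1,1,1)$ analogue) one extracts a genuinely semistable piece inside $T_\lambda$ of the same or comparable format. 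Combining, $\SR_\xi(T^{\otimes n})^{1/n} \geq (c \cdot 2^{n \min_i H(p_i)/\xi_i + o(n)})^{1/n} \to \min_i 2^{H(p_i)/\xi_i}$, and taking the sup over $p$ completes the lower bound. Finally, since the liminf is $\geq$ the lower bound value and the limsup is $\leq$ the upper bound value and the two values coincide, the limsup is a genuine limit.

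**Main obstacle.** The essential difficulty is the semistability of the isotypic component $T_\lambda$ (equivalently, carrying the $\xi=(1,1,1)$ slice-rank argument of \cite{Blasiak-et-al, DBLP:conf/stoc/ChristandlVZ18} through to the weighted setting). Subadditivity, monotonicity, the dimension estimates, and the moment-polytope description of attained diagrams are all either already in the excerpt or standard; the only place real work happens is converting "$T_\lambda \neq 0$ and sits in a product of irreducibles" into "$S_\xi(T_\lambda)$ is within a constant factor of the maximal $\min_i d_i^{1/\xi_i}$," which is precisely where \cref{lem:irrsemis} and \cref{cor:slice-semis} must be deployed, possibly after first passing to a semistable sub-tensor of $T_\lambda$ of comparable format. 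I expect the proof in the paper to isolate exactly this as a lemma and to handle the degenerate cases (some $\xi_i = 0$, or some $m_i = 1$) separately, as was done for \cref{thm:asymp-semis}.
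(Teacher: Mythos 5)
Your architecture is the same as the paper's: decompose $T^{\otimes n}$ into Schur--Weyl isotypic components, upper-bound via the ambient format of each component plus the polynomial count of components, and lower-bound via the semistability test combined with \cref{cor:slice-semis}. You also correctly identify the one place where real work happens. But at exactly that place your sketch has a genuine gap, and the specific route you gesture at would not work as stated. You propose to apply \cref{lem:irrsemis} with the three legs being the Schur modules $S_{\lambda^{(i)}}(V_i)=\{\lambda_i\}$, irreducible under $\GL(V_i)$. Two things go wrong. First, $T_\lambda$ is not \emph{invariant} under $\GL(V_1)\times\GL(V_2)\times\GL(V_3)$ (it is merely a projection onto an isotypic component), so the hypothesis of the semistability test fails for that choice of $\Gamma$. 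Second, and more importantly, you attach the exponential dimension estimate to the wrong object: $\dim\{\lambda_i\}$ is only polynomial in $n$ (bounded by $(n+1)^{m_i(m_i-1)/2}$, \cref{thm:reprdim-bound}, since $m_i$ is fixed); the quantity that grows like $2^{nH(\bar\lambda^{(i)})+o(n)}$ is the dimension of the Specht module $[\lambda_i]$, i.e.\ the multiplicity space. So a lower bound of the form $c\cdot\min_i(\dim\{\lambda_i\})^{1/\xi_i}$, even if you could prove it, would be worthless (it is subpolynomial), while a bound in terms of the full component dimensions requires semistability in the full spaces $\{\lambda_i\}\otimes[\lambda_i]$, which are not irreducible under either $\GL(V_i)$ or $S_n$ separately.

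The paper's resolution (\cref{lem:sr-polytope-lower}) is to work on the symmetric-group side instead: contract each $\GL$-leg of $T^\lambda$ with a linear functional $f_i\colon\{\lambda_i\}\to\C$ chosen so that the resulting tensor $T'\in[\lambda_1]\otimes[\lambda_2]\otimes[\lambda_3]$ is nonzero. Because $T^{\otimes n}$ is $S_n$-invariant, the projections are equivariant, and $S_n$ acts trivially on the $\{\lambda_i\}$ factors, the tensor $T'$ is $S_n$-invariant; the Specht modules are irreducible $S_n$-representations, so \cref{lem:irrsemis} applies with $\Gamma=S_n$ and gives $S_\xi(T^{\otimes n})\ge S_\xi(T')\ge c_\xi\min_i(\dim[\lambda_i])^{1/\xi_i}$, which has the correct exponential growth $2^{n\min_i H(p_i)/\xi_i - o(n)}$ by \cref{cor:reprdim-bound}. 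This contraction step is the missing idea; without it (or an equivalent device, such as the highest-weight-vector argument of \cite{Blasiak-et-al} that you cite but do not carry out), your lower bound does not close. The remaining ingredients of your sketch --- subadditivity, monotonicity, the semigroup/moment-polytope description of the attained $\lambda$'s, and deducing that the limsup is a limit by squeezing --- all match the paper and are fine.
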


To carry out the proof of \cref{thm:sr-polytope} we will use the following notation and fundamental results.
  We write $\lambda \vdash_m k$ if $\lambda$ is a partition of $k$ into $m$ parts.
  Such partitions are represented by non-decreasing tuples in $\N^m$.
  We omit $k$ or $m$ if its value is not restricted.
  We write
  $(\lambda_1, \lambda_2, \lambda_3) \vdash_{(m_1, m_2, m_3)} k$ if $\lambda_i \vdash_{m_i} k$ for every~$i$.
  Again we may omit some of the parameters when they are not restricted.

  In characteristic $0$, the irreducible representations of the symmetric group $S_k$ (Specht modules) 
  correspond to partitions $\lambda \vdash k$.
  We denote the representation corresponding to a partition $\lambda$ by $[\lambda]$.
  The irreducible polynomial representations of $\GL_n$ (Schur modules)
  correspond to partitions $\lambda \vdash_n$ and are denoted by $\{\lambda\}$.

\begin{theorem}[Schur--Weyl duality]
  As a $\GL_m \times S_k$-representation, the space $(\C^m)^{\otimes k}$ decomposes into
  \[
    (\C^m)^{\otimes k} \cong \bigoplus_{\lambda \vdash_m k} \{\lambda\} \otimes [\lambda].
  \]
\end{theorem}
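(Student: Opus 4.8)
The plan is to identify the commuting actions of $\GL_m$ and $S_k$ on $W \coloneqq V^{\otimes k}$ (with $V = \C^m$) as a double centralizer pair and then invoke the double centralizer theorem. First I would set up the two actions: $\GL_m$ acts diagonally, $g\cdot(v_1\otimes\cdots\otimes v_k)=gv_1\otimes\cdots\otimes gv_k$, and $S_k$ acts by permuting the $k$ tensor legs; these manifestly commute, giving algebra maps $\C[S_k]\to\mathrm{End}(W)$ and $\C[\GL_m]\to\mathrm{End}(W)$ with commuting images $A$ and $B$. Since $\C$ has characteristic $0$, the group algebra $\C[S_k]$ is semisimple by Maschke's theorem, hence so is its image $A$. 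For a semisimple subalgebra $A\subseteq\mathrm{End}(W)$ the double centralizer theorem says that $\mathrm{End}_A(W)$ is again semisimple, that $A=\mathrm{End}_{\mathrm{End}_A(W)}(W)$, and that $W\cong\bigoplus_{\lambda}E_\lambda\otimes F_\lambda$ where $E_\lambda$ ranges over the simple $A$-modules occurring in $W$ and $F_\lambda=\mathrm{Hom}_A(E_\lambda,W)$ are pairwise non-isomorphic simple $\mathrm{End}_A(W)$-modules, one for each occurring $E_\lambda$. As the simple $A$-modules are exactly the Specht modules $[\lambda]$, $\lambda\vdash k$, that occur in $W$, this already has the right shape once we identify $\mathrm{End}_A(W)$ with the image of $\GL_m$ and the $F_\lambda$ with Schur modules.

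The key step is the centralizer identity $B=\mathrm{End}_A(W)=\mathrm{End}_{S_k}(W)$. Writing $\mathrm{End}(W)=\mathrm{End}(V)^{\otimes k}$, the conjugation action of $S_k$ becomes the permutation of the $k$ factors, so $\mathrm{End}_{S_k}(W)=(\mathrm{End}(V)^{\otimes k})^{S_k}$, the space of symmetric tensors in $\mathrm{End}(V)^{\otimes k}$. Over the infinite field $\C$, polarization shows this space is spanned by the pure powers $\{x^{\otimes k}:x\in\mathrm{End}(V)\}$; and since $\GL(V)$ is Zariski dense in $\mathrm{End}(V)$, the map $x\mapsto x^{\otimes k}$ is polynomial, and a linear span is Zariski closed, this equals $\mathrm{Span}\{g^{\otimes k}:g\in\GL_m\}=B$. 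Thus $A$ and $B$ are mutual centralizers, $A=\mathrm{End}_B(W)=\mathrm{End}_{\GL_m}(W)$, and the decomposition becomes $W\cong\bigoplus_\lambda[\lambda]\otimes F_\lambda$ with each $F_\lambda$ a simple $B$-module. Because $B$ is spanned by the image of $\GL_m$ acting in polynomial degree $k$, each $F_\lambda$ is an irreducible polynomial $\GL_m$-representation occurring in $W$, hence a Schur module $\{\mu\}$ with $\mu$ a partition of $k$ into at most $m$ parts.

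What remains is to show the resulting bijection $\lambda\leftrightarrow\mu$ is the identity and that all partitions of $k$ into at most $m$ parts occur. I would do this with Young symmetrizers: for $c_\lambda\in\C[S_k]$ one has $c_\lambda W\cong\{\lambda\}$ as a $\GL_m$-module, which is nonzero exactly when $\lambda$ has at most $m=\dim V$ rows, and $c_\lambda$ is (up to scalar) a projection landing in the $[\lambda]$-isotypic part of $W$; together these force $F_\lambda=\{\lambda\}$ and identify the occurring $\lambda$ with those of length at most $m$. Alternatively, one avoids Young symmetrizers by a character computation: the bigraded character $(g,\sigma)\mapsto\tr(\sigma\,g^{\otimes k})$ of $W$ evaluates, by a direct count, to a product of power sums in the eigenvalues of $g$, and the Frobenius character formula (equivalently the Cauchy identity $\prod_{i,j}(1-x_iy_j)^{-1}=\sum_\lambda s_\lambda(x)s_\lambda(y)$) rewrites it as $\sum_{\lambda}s_\lambda\otimes\chi^\lambda$ summed over partitions of $k$ with at most $m$ parts, and linear independence of irreducible $\GL_m\times S_k$-characters finishes it. I expect the main obstacle to be exactly these last two inputs — the polarization-plus-Zariski-density description of $B$, and pinning down the partition matching — since the double centralizer theorem and Maschke's theorem are standard and can simply be cited.
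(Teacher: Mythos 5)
The paper does not prove this statement at all: it is quoted as the classical Schur--Weyl duality theorem and used as a black box (the typo $\lambda \vdash_d n$ in the displayed sum should read $\lambda \vdash_m k$). Your argument is the standard textbook proof and is correct: commuting actions, Maschke plus the double centralizer theorem, the identification of $\mathrm{End}_{S_k}(V^{\otimes k})$ with the span of $\{g^{\otimes k}\}$ via polarization and Zariski density of $\GL(V)$ in $\mathrm{End}(V)$, and the matching of $F_\lambda$ with $\{\lambda\}$ via Young symmetrizers (or, equivalently, the Frobenius/Cauchy character computation). The only point worth being careful about is that identifying each $F_\lambda$ as ``a Schur module $\{\mu\}$'' by appeal to the classification of irreducible polynomial $\GL_m$-representations risks circularity if that classification is itself derived from Schur--Weyl; your Young-symmetrizer route ($c_\lambda V^{\otimes k} \cong \{\lambda\}$, nonzero iff $\lambda$ has at most $m$ rows) avoids this and settles the partition matching directly, so the proof is complete as proposed.
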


We will need the following bounds on the dimensions of $\{\lambda\}$ and $[\lambda]$.

\begin{lemma}[\cite{Christandl_2005}]\label{thm:reprdim-bound}
For any partition $\lambda \vdash_m k$ we have the following bounds on the dimension of Schur and Specht modules:
\begin{gather*}
\dim \{\lambda\} \leq (k + 1)^{\frac{m(m-1)}{2}} \\
\frac{k!}{\prod_i (\lambda_i + m - i)!} \leq \dim [\lambda] \leq \frac{k!}{\prod_i \lambda_i!}.
\end{gather*}
\end{lemma}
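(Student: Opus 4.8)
The plan is to derive all three inequalities from classical dimension formulas: the Weyl dimension formula for the Schur module $\{\lambda\}$, and the hook length formula together with the theory of Young permutation modules for the Specht module $[\lambda]$. Throughout I write $\lambda \vdash_m k$, so $\lambda_1 \geq \cdots \geq \lambda_m \geq 0$ and $\sum_i \lambda_i = k$.

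For the bound on $\dim\{\lambda\}$ I would start from the Weyl dimension formula
\[
\dim\{\lambda\} = \prod_{1 \leq i < j \leq m} \frac{\lambda_i - \lambda_j + j - i}{j - i}.
\]
There are $\binom{m}{2} = \tfrac{m(m-1)}{2}$ factors, and each one equals $\tfrac{\lambda_i - \lambda_j}{j-i} + 1$; since $j - i \geq 1$ and $0 \leq \lambda_i - \lambda_j \leq \lambda_1 \leq k$, each factor is at most $k+1$. Multiplying the $\tfrac{m(m-1)}{2}$ factors gives $\dim\{\lambda\} \leq (k+1)^{m(m-1)/2}$.

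For the upper bound on $\dim[\lambda]$ I would use that the Specht module is a constituent of the Young permutation module $M^\lambda = \mathrm{Ind}_{S_{\lambda_1} \times \cdots \times S_{\lambda_m}}^{S_k} \mathbf{1}$: by Young's rule $M^\lambda \cong \bigoplus_\mu K_{\mu\lambda} [\mu]$ with $K_{\lambda\lambda} = 1$, so $[\lambda]$ occurs in $M^\lambda$, whence $\dim[\lambda] \leq \dim M^\lambda = \tfrac{k!}{\prod_i \lambda_i!}$. For the lower bound I would invoke the hook length formula $\dim[\lambda] = \tfrac{k!}{\prod_{(i,j) \in \lambda} h(i,j)}$ and bound the denominator from above. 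The hook length of box $(i,j)$ is $(\lambda_i - j) + (\lambda'_j - i) + 1$ for $\lambda'$ the conjugate partition; since $\lambda$ has at most $m$ parts, $\lambda'_j \leq m$, so $h(i,j) \leq \lambda_i - j + m - i + 1$. As $j$ runs from $1$ to $\lambda_i$ these upper bounds are the consecutive integers $m-i+1, \ldots, \lambda_i + m - i$, hence $\prod_{j=1}^{\lambda_i} h(i,j) \leq \tfrac{(\lambda_i + m - i)!}{(m-i)!} \leq (\lambda_i + m - i)!$. Taking the product over $i$ yields $\prod_{(i,j)} h(i,j) \leq \prod_i (\lambda_i + m - i)!$, and therefore $\dim[\lambda] \geq \tfrac{k!}{\prod_i (\lambda_i + m - i)!}$.

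None of the steps is a genuine obstacle — this is a matter of recalling the right classical formulas — but the part requiring the most care is the last one: correctly identifying the arm and leg of a hook, using the column-length bound $\lambda'_j \leq m$, and checking that the per-row upper bounds telescope into $\tfrac{(\lambda_i+m-i)!}{(m-i)!}$ so that the factorials combine as claimed.
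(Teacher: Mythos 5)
Your proof is correct. The paper itself gives no argument for this lemma --- it is quoted from the cited reference --- so there is no ``paper's approach'' to compare against; your write-up supplies a valid self-contained derivation. All three steps check out: the Weyl factors $\frac{\lambda_i-\lambda_j}{j-i}+1$ are indeed each at most $k+1$ and there are $\binom{m}{2}$ of them; the embedding of $[\lambda]$ into the Young permutation module $M^\lambda$ (multiplicity $K_{\lambda\lambda}=1$) gives the upper bound $k!/\prod_i\lambda_i!$; and your hook-length estimate is right, since $\lambda_j'\le m$ makes the row-$i$ hook bounds the consecutive integers $m-i+1,\dots,\lambda_i+m-i$, whose product telescopes to $(\lambda_i+m-i)!/(m-i)!\le(\lambda_i+m-i)!$ (and rows with $\lambda_i=0$ contribute nothing to the hook product while contributing a factor $\ge 1$ on the right, so the inequality survives). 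The only remark worth adding is that the lower bound drops out even faster from the Frobenius factorial formula $\dim[\lambda]=\frac{k!\prod_{i<j}(\ell_i-\ell_j)}{\prod_i \ell_i!}$ with $\ell_i=\lambda_i+m-i$: since each $\ell_i-\ell_j\ge 1$ for $i<j$, one gets $\dim[\lambda]\ge k!/\prod_i\ell_i!$ in one line, which is exactly the stated bound (with the statement's $n,d$ read as $k,m$, as you correctly did).
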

In particular, it follows from \cref{thm:reprdim-bound} that for any partition $\lambda = (\lambda_1, \ldots, \lambda_m) \vdash k$, if we consider the stretched partition $n\lambda = (n\lambda_1, \ldots, n\lambda_m) \vdash nk$, and we let $n$ go to infinity, then the rate of growth of the dimension of the Specht module $[n\lambda]$ is precisely described in terms of the Shannon entropy $H(\lambda/k)$ of the probability vector $\lambda/k = (\lambda_1/k, \ldots, \lambda_m/k)$ as follows.
\begin{lemma}\label{cor:reprdim-bound}
For any $\lambda \vdash k$ we have $\dim [n\lambda] \leq 2^{H(\frac{\lambda}{k})kn}$ for any $n \in \N$ and $2^{H(\frac{\lambda}{k})kn - o(n)} \leq \dim [n\lambda]$ as $n \to \infty$.
\end{lemma}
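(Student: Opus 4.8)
The plan is to derive both bounds directly from the dimension estimates for Specht modules in \cref{thm:reprdim-bound}, applied to the stretched partition $n\lambda \vdash_m nk$. First I would record that $n\lambda$ is a partition of $nk$ into (at most) $m$ parts, so \cref{thm:reprdim-bound} gives
\[
\frac{(nk)!}{\prod_{i=1}^m (n\lambda_i + m - i)!} \;\leq\; \dim[n\lambda] \;\leq\; \frac{(nk)!}{\prod_{i=1}^m (n\lambda_i)!}.
\]
(Here I am using $d = m$ as the number of parts; I would be slightly careful about the indexing so that the denominators match the statement, and note that if some $\lambda_i = 0$ the corresponding factorial factors are harmless.)

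For the upper bound $\dim[n\lambda] \leq 2^{H(\lambda/k)kn}$, the key step is the standard multinomial estimate $\binom{N}{a_1,\dots,a_m} \leq 2^{H(a/N)N}$, which follows because the multinomial coefficients for a fixed total $N$ sum to $m^N$, but more sharply because each term $\prod_i (a_i/N)^{a_i}$ bounds $1$ from below via $\sum_i \binom{N}{a} \prod (a_i/N)^{a_i} = 1$ (the multinomial theorem applied to $(p_1 + \dots + p_m)^N = 1$ with $p_i = a_i/N$). Applying this with $N = nk$ and $a_i = n\lambda_i$ gives $\frac{(nk)!}{\prod_i (n\lambda_i)!} \leq 2^{H(n\lambda/(nk)) \cdot nk} = 2^{H(\lambda/k) \cdot kn}$, since $H$ is scale-invariant in the sense that $H(n\lambda/(nk)) = H(\lambda/k)$. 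This is clean and requires no asymptotics.

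For the lower bound $2^{H(\lambda/k)kn - o(n)} \leq \dim[n\lambda]$, I would start from the left inequality above, $\dim[n\lambda] \geq \frac{(nk)!}{\prod_i (n\lambda_i + m - i)!}$, and estimate its logarithm using Stirling's approximation $\log_2 N! = N\log_2 N - N/\ln 2 + O(\log N)$. The main term of $\log_2(nk)! - \sum_i \log_2(n\lambda_i + m - i)!$ works out to $nk \log_2(nk) - \sum_i n\lambda_i \log_2(n\lambda_i) + O(\log n) = H(\lambda/k) \cdot kn + O(\log n)$, because the shifts $m - i$ are constants (independent of $n$) and contribute only lower-order terms, and the linear terms $-N/\ln 2$ cancel since $\sum_i n\lambda_i = nk$. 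Thus $\log_2 \dim[n\lambda] \geq H(\lambda/k)kn - O(\log n) \geq H(\lambda/k)kn - o(n)$, as claimed.

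The step I expect to require the most care is keeping the error terms honest in the Stirling computation, in particular making sure the constant shifts $m-i$ and the terms where $\lambda_i$ is small (or zero) do not spoil the $o(n)$ bound; but since $m$ and $k$ are fixed while $n \to \infty$, all such contributions are $O(\log n)$, which is comfortably $o(n)$. No genuine obstacle is anticipated — this is a direct consequence of \cref{thm:reprdim-bound} together with Stirling's formula and the multinomial identity.
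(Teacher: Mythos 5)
Your proposal is correct and follows the same route the paper intends: the paper states this lemma as a direct consequence of \cref{thm:reprdim-bound}, and you fill in exactly those details — the standard multinomial--entropy bound $\binom{N}{a_1,\dots,a_m}\leq 2^{NH(a/N)}$ for the upper bound, and Stirling applied to the lower dimension bound (with the constant shifts $m-i$ absorbed into $O(\log n)$) for the lower bound. Both computations check out.
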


Finally, regarding dimensions, we will need the following inequality for the dimension of the Specht module $[\lambda + \mu]$ where as usual $\lambda + \mu$ is the partition $(\lambda_1 + \mu_1, \ldots, \lambda_m + \mu_m)$.

\begin{lemma}\label{lem:reprdim-sum}
For any partitions $\lambda$ and $\mu$ we have $\dim [\lambda + \mu] \geq \dim[\lambda] \dim [\mu]$.
\end{lemma}
\begin{proof}
For any partition $\nu \vdash k + \ell$ the $S_{k + \ell}$ representation $[\nu]$ decomposes as a $S_{k} \times S_\ell$ representation~as
\[
[\nu]\!\downarrow^{S_{k + \ell}}_{S_k \times S_\ell} \cong \bigoplus_{\lambda \vdash k, \mu \vdash l} c^{\nu}_{\lambda \mu}\, [\lambda] \otimes [\mu]
\]
where the $c^{\nu}_{\lambda \mu}$ are the so-called Littlewood--Richardson coefficients~\cite[Ex.~4.43]{Fulton-Harris}.
In particular, for $\nu = \lambda + \mu$ it is known that $c^{\nu}_{\lambda \mu} = 1$~\cite[Ex.~15.24]{Fulton-Harris}. Therefore, supposing $\lambda \vdash k$ and $\mu \vdash \ell$, we have that $[\lambda] \otimes [\mu]$ is contained in $[\lambda + \mu]\!\downarrow^{\smash{S_{k + \ell}}}_{S_k \times S_\ell}$. It follows that $\dim [\lambda + \mu] \geq \dim [\lambda] \dim [\mu]$.
\end{proof}

The Schur--Weyl decompositions of the three representations $(\C^{m_1})^{\otimes k}$, $(\C^{m_2})^{\otimes k}$ and $(\C^{m_3})^{\otimes k}$ induce a decomposition of their tensor product $(\C^{m_1})^{\otimes k} \otimes (\C^{m_2})^{\otimes k} \otimes (\C^{m_3})^{\otimes k}$ as a direct sum of spaces $(\{\lambda_1\}_{\GL_{m_1}} \otimes [\lambda_1]) \otimes (\{\lambda_2\}_{\GL_{m_2}} \otimes [\lambda_2]) \otimes (\{\lambda_3\}_{\GL_{m_3}} \otimes [\lambda_3])$.
For every tensor $T \in \C^{m_1 \times m_2 \times m_3}$ and every partition triple $\lambda \vdash_{(m_1, m_2, m_3)} k$ we denote by $T^{\lambda}$ the projection of $T^{\otimes k}$ onto the direct summand $(\{\lambda_1\}_{\GL_{m_1}} \otimes [\lambda_1]) \otimes (\{\lambda_2\}_{\GL_{m_2}} \otimes [\lambda_2]) \otimes (\{\lambda_3\}_{\GL_{m_3}} \otimes [\lambda_3])$. The nonzero summands $T^{\lambda}$ satisfy the following semigroup property.

\begin{theorem}[{\cite[\S 3c]{Walter1205}}]\label{thm:orbit-semigroup}
  Let $T \in \C^{m_1 \times m_2 \times m_3}$ be a tensor.
  The set of all partitions $\lambda \vdash_{(m_1, m_2, m_3)}$ such that $T^{\lambda} \neq 0$ has the semigroup property: if $T^{\lambda} \neq 0$ and $T^{\mu} \neq 0$, then $T^{\lambda + \mu} \neq 0$.
\end{theorem}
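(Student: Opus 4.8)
The plan is to detect nonvanishing of $T^\lambda$ by pairing $T^{\otimes k}$ against highest weight vectors, and then to produce a highest weight vector for $\lambda+\mu$ simply by multiplying those for $\lambda$ and for $\mu$. Write $G = \GL_{m_1}\times\GL_{m_2}\times\GL_{m_3}$ and let it act diagonally on $A_k \coloneqq (\C^{m_1})^{\otimes k}\otimes(\C^{m_2})^{\otimes k}\otimes(\C^{m_3})^{\otimes k}$ (degree $k$ in each matrix block). First I would record the standard reformulation: for $\lambda\vdash_{(m_1,m_2,m_3)}k$ one has $T^\lambda\neq 0$ if and only if there is a highest weight vector $h_\lambda\in A_k$ of weight $\lambda$ (a vector annihilated by the three unipotent radicals on which the three maximal tori act by the characters $\lambda_1,\lambda_2,\lambda_3$) and some $g\in G$ with $\langle g\cdot T^{\otimes k},\,h_\lambda\rangle\neq 0$. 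The ``if'' direction is immediate, since such a pairing witnesses a nonzero component of $g\cdot T^{\otimes k}$, hence of $T^{\otimes k}$, in the $\{\lambda_1\}\otimes\{\lambda_2\}\otimes\{\lambda_3\}$-isotypic block. For the ``only if'' direction, that block is $G$-stable and spanned by $G$-translates of such highest weight vectors, so $T^\lambda\neq 0$ gives $\langle T^{\otimes k},\,g\cdot h_\lambda\rangle\neq 0$ for some $g\in G$ and some highest weight vector $h_\lambda$ of weight $\lambda$; passing the Hermitian adjoint of $g$ (which again lies in $G$) across the pairing yields the desired form.

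Now suppose $T^\lambda\neq 0$ with $\lambda\vdash k$ and $T^\mu\neq 0$ with $\mu\vdash\ell$. Pick highest weight vectors $h_\lambda\in A_k$ of weight $\lambda$ and $h_\mu\in A_\ell$ of weight $\mu$, together with $g_1,g_2\in G$ such that $\langle g_1\cdot T^{\otimes k},\,h_\lambda\rangle\neq 0$ and $\langle g_2\cdot T^{\otimes\ell},\,h_\mu\rangle\neq 0$. Under the canonical $G$-equivariant reordering $A_k\otimes A_\ell\cong A_{k+\ell}$, which carries $T^{\otimes k}\otimes T^{\otimes\ell}$ to $T^{\otimes(k+\ell)}$, the vector $h_\lambda\otimes h_\mu$ becomes a nonzero highest weight vector of weight $\lambda+\mu$: being a tensor product of highest weight vectors in each of the three $\GL$-factors, it is annihilated by the raising operators and its torus weights add. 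For every diagonally acting $g\in G$ the product structure of the Hermitian form gives
\[
\langle g\cdot T^{\otimes(k+\ell)},\, h_\lambda\otimes h_\mu\rangle \;=\; \langle g\cdot T^{\otimes k},\, h_\lambda\rangle \cdot \langle g\cdot T^{\otimes\ell},\, h_\mu\rangle .
\]

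The last step is an irreducibility argument. The two functions $f_1(g)=\langle g\cdot T^{\otimes k},\,h_\lambda\rangle$ and $f_2(g)=\langle g\cdot T^{\otimes\ell},\,h_\mu\rangle$ are regular functions on $G=\GL_{m_1}\times\GL_{m_2}\times\GL_{m_3}$, which is an irreducible affine variety, so its coordinate ring is an integral domain. Since $f_1(g_1)\neq 0$ and $f_2(g_2)\neq 0$, neither is identically zero, hence $f_1 f_2$ is not identically zero and there is $g^\ast\in G$ with $f_1(g^\ast)f_2(g^\ast)\neq 0$; by the displayed identity, $\langle g^\ast\cdot T^{\otimes(k+\ell)},\,h_\lambda\otimes h_\mu\rangle\neq 0$, and since $h_\lambda\otimes h_\mu$ is a highest weight vector of weight $\lambda+\mu$, the reformulation gives $T^{\lambda+\mu}\neq 0$. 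The one genuinely nontrivial point, and the step I expect to be the main obstacle, is that $G$ acts \emph{diagonally} on $T^{\otimes(k+\ell)}=T^{\otimes k}\otimes T^{\otimes\ell}$, so one cannot simply combine the two ``activating'' elements $g_1,g_2$ into one; it is exactly the product-of-regular-functions argument on the irreducible group $G$ that lets a single $g^\ast$ work for both tensor factors at once. (A heavier alternative would be to invoke that the moment polytope $\Pi(T)$ is a rational polytope and that $\{\lambda:T^\lambda\neq 0\}$ is precisely its set of lattice points over all scalings, which is a semigroup by convexity; the highest-weight-vector argument above is self-contained.)
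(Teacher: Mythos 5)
The paper does not prove \cref{thm:orbit-semigroup} at all --- it is imported with a citation to \cite{Walter1205}. Your argument is a correct, self-contained proof, and it is in substance the standard one from that source (going back to Brion): detect $T^\lambda\neq 0$ by a covariant/highest-weight vector, multiply covariants to add weights, and use that a product of two nonzero regular functions on the irreducible variety $G$ is nonzero. All the key steps check out: the isotypic blocks are mutually orthogonal for the invariant Hermitian form (so nonvanishing of $T^\lambda$ is equivalent to a nonzero pairing with some $G$-translate of a weight-$\lambda$ highest weight vector), the Hermitian adjoint of $g$ stays in $G$, the reordering $A_k\otimes A_\ell\cong A_{k+\ell}$ is $G$-equivariant and unitary, and $h_\lambda\otimes h_\mu$ is a nonzero highest weight vector of weight $\lambda+\mu$ by the Leibniz rule for the raising operators. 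One small point worth making explicit: for $f_1,f_2$ to be \emph{regular} (holomorphic polynomial) functions of $g$ you should take the Hermitian form conjugate-linear in the slot holding the fixed vectors $h_\lambda,h_\mu$; alternatively, avoid the issue entirely by noting that $\{g: f_i(g)=0\}$ is a proper Zariski-closed subset for each $i$, and a connected (hence irreducible) group is not the union of two proper closed subsets. You correctly identify the crux --- that a single $g^\ast$ must activate both factors simultaneously --- and resolve it the same way the literature does.
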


The semigroup property (\cref{thm:orbit-semigroup}) implies that $\{ (\frac{\lambda_1}{k}, \frac{\lambda_2}{k}, \frac{\lambda_3}{k}) \mid \lambda \vdash_{(m_1, m_2, m_3)} k, T^{\lambda} \neq 0\}$ is a convex subset of $\Q^{m_1 + m_2 + m_3}$.
It is a remarkable fact that this set coincides with the set of rational points in the moment polytope~$\Pi(T)$ defined in the introduction. This is the representation-theoretic definition of the moment polytope.
This was proven in high generality in~\cite{Brion-moment-86} and the tensor case is explored in~\cite{Walter1205}.
The points of the moment polytope are tuples of $m_1 + m_2 + m_3$ real numbers
divided into three parts with~$m_i$ numbers in each part.
The three parts of the tuple $p$ will denoted by $p_1$, $p_2$ and~$p_3$.
We denote by $H(p)$ the tuple of entropies $(H(p_1), H(p_2), H(p_3))$.

We will now work towards the proof of the main theorem of this subsection, \cref{thm:sr-polytope}.
The proof of \cref{thm:sr-polytope} is divided into several lemmas.
Before we start, let us establish some notation.
Let $m_i = \dim V_i$ and $m = (m_1, m_2, m_3)$. Let $T \in \C^{m_1 \times m_2 \times m_3}$.
Recall that for a partition triple $\lambda \vdash_m k$ the tensor $T^{\lambda}$ is the projection of $T^{\otimes k} \in V_1^{\otimes k} \otimes V_2^{\otimes k} \otimes V_3^{\otimes k}$ onto the $\GL(V_1) \times \GL(V_2) \times \GL(V_3) \times S_k$-subrepresentation $(\{\lambda_1\}_{\GL(V_1)} \otimes [\lambda_1]) \otimes (\{\lambda_2\}_{\GL(V_2)} \otimes [\lambda_2]) \otimes (\{\lambda_3\}_{\GL(V_3)} \otimes [\lambda_3])$.

For every $k \in \N$ we define the set $R_k(T) = \{ \lambda \vdash_m k \mid T^{\lambda} \neq 0\} \subset \Z^{m_1 + m_2 + m_3}$.
The semigroup property (\cref{thm:orbit-semigroup}) implies that $R_{k + \ell}(T) \supset R_k(T) + R_\ell(T)$ and the representation-theoretic characterization of the moment polytope gives that $\Pi(T)$ is the closure of $\bigcup_k \frac{R_k(T)}{k}$.
Define $M_{\xi, k}(T) = \max_{\lambda \in R_k(T)} \min_i (\dim [\lambda_i])^{1/\xi_i}$.

The next lemma gives a lower bound on the weighted slice rank of powers of tensors, by using the semistability test (\cref{lem:irrsemis}) and the lower bound on the weighted slice rank of semistable tensors (\cref{cor:slice-semis}).

\begin{lemma}\label{lem:sr-polytope-lower}
  For every tensor $T$ and $\xi \in \Xi$ we have $S_{\xi}(T^{\otimes k}) \geq c_{\xi} M_{\xi, k}(T)$ for some $c_{\xi} > 0$ dependent only on $\xi$.
\end{lemma}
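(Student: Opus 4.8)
The plan is to fix a partition triple $\lambda \in R_k(T)$ that attains the maximum in the definition of $M_{\xi,k}(T)$, and to extract from the nonzero summand $T^\lambda$ a tensor to which the semistability test (\cref{lem:irrsemis}) and the semistable lower bound (\cref{cor:slice-semis}) both apply. Recall that $T^\lambda$ lives in $(\{\lambda_1\}_{\GL(V_1)} \otimes [\lambda_1]) \otimes (\{\lambda_2\}_{\GL(V_2)} \otimes [\lambda_2]) \otimes (\{\lambda_3\}_{\GL(V_3)} \otimes [\lambda_3])$ as a subspace of $V_1^{\otimes k} \otimes V_2^{\otimes k} \otimes V_3^{\otimes k}$. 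The key observation is that the factors $[\lambda_1], [\lambda_2], [\lambda_3]$ are irreducible representations of the symmetric group $S_k$ acting diagonally, and $T^\lambda$, being obtained by an $S_k$-equivariant projection of the $S_k$-invariant tensor $T^{\otimes k}$, is $S_k$-invariant. So I would first project $T^\lambda$ further onto a single copy of $[\lambda_1]\otimes[\lambda_2]\otimes[\lambda_3]$ inside it (choosing a copy on which the image is nonzero, using the fact that the multiplicity spaces $\{\lambda_i\}$ are finite-dimensional and $T^\lambda \neq 0$), obtaining a nonzero $S_k$-invariant tensor $S \in [\lambda_1] \otimes [\lambda_2] \otimes [\lambda_3]$.

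Next I would apply the semistability test (\cref{lem:irrsemis}) with $\Gamma = S_k$ and $V_i = [\lambda_i]$: since each $[\lambda_i]$ is an irreducible $S_k$-representation and $S$ is $S_k$-invariant and nonzero, $S$ is semistable under $\SL([\lambda_1]) \times \SL([\lambda_2]) \times \SL([\lambda_3])$. Then \cref{cor:slice-semis} gives $S_\xi(S) \geq c_\xi \min_i (\dim[\lambda_i])^{1/\xi_i} = c_\xi \min_i M_{\xi,k}(T)^{\text{-th term}}$; more precisely, by the choice of $\lambda$, $\min_i (\dim[\lambda_i])^{1/\xi_i} = M_{\xi,k}(T)$, so $S_\xi(S) \geq c_\xi M_{\xi,k}(T)$. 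Finally I would relate $S_\xi(S)$ back to $S_\xi(T^{\otimes k})$: the tensor $S$ is a restriction of $T^{\otimes k}$, because it is obtained from $T^{\otimes k}$ by applying linear maps to the three tensor legs (the composite of the Schur--Weyl projection onto the $\lambda$-isotypic component and the further projection onto a single $[\lambda_1]\otimes[\lambda_2]\otimes[\lambda_3]$ summand is such a legwise linear map, since all these projections respect the three-fold tensor factorization of $V_1^{\otimes k} \otimes V_2^{\otimes k} \otimes V_3^{\otimes k}$). Hence by monotonicity of weighted slice rank (\cref{lem:sr-basic}\ref{lem:srmon}), $S_\xi(T^{\otimes k}) \geq S_\xi(S) \geq c_\xi M_{\xi,k}(T)$, with $c_\xi$ the constant from \cref{cor:slice-semis}, which depends only on $\xi$.

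The step I expect to be the main obstacle is making precise that $S$ (or rather $T^\lambda$ and the chosen summand of it) is genuinely a \emph{restriction} of $T^{\otimes k}$ in the sense of $\leq$, i.e.\ that the relevant projection factors as $A_1 \otimes A_2 \otimes A_3$ with $A_i \colon V_i^{\otimes k} \to [\lambda_i]$ (possibly after also forgetting the multiplicity factors $\{\lambda_i\}$). This requires unwinding that the Schur--Weyl decomposition of $V_1^{\otimes k} \otimes V_2^{\otimes k} \otimes V_3^{\otimes k}$ is the tensor product of the three individual Schur--Weyl decompositions, so that projection onto the $(\lambda_1,\lambda_2,\lambda_3)$-component and onto a single multiplicity-free summand both split as a tensor product of three maps, one per leg; the choice of which copy of $[\lambda_1]\otimes[\lambda_2]\otimes[\lambda_3]$ to project onto must be compatible with this product structure, which one can arrange by picking vectors in the multiplicity spaces $\{\lambda_i\}$ on which $T^\lambda$ does not vanish. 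Once this bookkeeping is done, monotonicity and \cref{cor:slice-semis} finish the argument immediately.
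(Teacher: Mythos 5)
Your proposal is correct and follows essentially the same route as the paper's proof: project $T^{\otimes k}$ onto the $\lambda$-isotypic component, contract the multiplicity spaces $\{\lambda_i\}$ with linear forms to obtain a nonzero $S_k$-invariant tensor in $[\lambda_1]\otimes[\lambda_2]\otimes[\lambda_3]$, apply the semistability test and \cref{cor:slice-semis}, and conclude by monotonicity. The "bookkeeping" issue you flag is resolved exactly as you suggest—the paper picks linear forms $f_i\colon\{\lambda_i\}\to\C$ one leg at a time, so the whole map factors legwise and the nonvanishing choice exists by the standard fact that a nonzero element of $U\otimes V$ has a nonzero contraction $(f\otimes\id)x$ for some $f$.
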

\begin{proof}
Each component tensor $T^{\lambda}$ is obtained from the Kronecker power $T^{\otimes k}$ by applying the projections $V_i^{\otimes k} \to \{\lambda_i\} \otimes [\lambda_i]$ to the three tensor factors, so $T^{\otimes k} \geq T^{\lambda}$.

Since $T^{\otimes k}$ is $S_k$-invariant and the projections are equivariant, each $T^{\lambda}$ is $S_k$-invariant.
Furthermore, because the action of $S_k$ on $\{\lambda_i\}$ is trivial, for each linear form $f_i \colon \{\lambda_i\} \to \C$ the restriction
$
[(f_1 \otimes \id) \otimes (f_2 \otimes \id) \otimes (f_3 \otimes \id)] \, T^{\lambda} \in [\lambda_1] \otimes [\lambda_2]\otimes [\lambda_3]
$ is $S_k$-invariant.

If $\lambda \in R_k(T)$, that is, if $T^{\lambda}$ is nonzero, then there exist $f_i$ such that this restriction is nonzero (this follows by a triple application of a general fact that if $x \in U \otimes V$ is nonzero, then $(f \otimes \id) x \in V$ is nonzero for some $f$).
Denote by $T'$ a nonzero tensor of the form $[(f_1 \otimes \id) \otimes (f_2 \otimes \id) \otimes (f_3 \otimes \id)] \, T^{\lambda}$.
By the semistability test (\cref{lem:irrsemis}) the tensor $T'$ is semistable.
Using~\cref{cor:slice-semis} and the monotonicity of weighted slice ranks we get $S_{\xi}(T^{\otimes k}) \geq S_{\xi}(T') \geq c_{\xi} \min_i (\dim [\lambda_i])^{1/\xi_i}$.
Maximizing over all $\lambda \in R_k$ we get the required lower bound.
\end{proof}

\begin{lemma}\label{lem:sr-polytope-upper}
For all positive integers $m_1, m_2, m_3$ and $\xi \in \Xi$ there exists $\gamma > 0$ such that for every tensor $T \in \C^{m_1 \times m_2 \times m_3}$ and every positive integer $k$ we have $S_{\xi}(T^{\otimes k}) \leq 3 (k+1)^{\gamma} M_{\xi, k}(T)$ .
\end{lemma}
\begin{proof}
We prove the statement for the case when all three $\xi_i$ are nonzero.
The tensor~$T^{\otimes k}$ is the sum of $T^{\lambda}$ for all $\lambda \in R_k(T)$.
By \cref{lem:decomp}\ref{lem:decompub} each summand $T^\lambda$ has slice decompositions of sizes $(\dim [\lambda_1] \dim \{\lambda_1\}_{\GL(V_1)}, 0, 0)$, $(0, \dim [\lambda_2] \dim \{\lambda_2\}_{\GL(V_2)}, 0)$ and $(0, 0, \dim [\lambda_3] \dim \{\lambda_3\}_{\GL(V_3)})$.
From the definition of $M_{\xi, k}(T)$ it follows that
for every $\lambda \in R_k(T)$ there is an index $i$ such that $\dim [\lambda_i] \leq M_{\xi, k}^{\xi_i}$.
By~\cref{thm:reprdim-bound} the dimensions $\dim \{\lambda_i\}_{\GL(V_i)}$ are upper bounded by $(k + 1)^{\alpha}$ for $\alpha = \max_i \frac{m_i(m_i-1)}{2}$.
Therefore, for every $\lambda\in R_k(T)$ the tensor $T^\lambda$ has either a slice decomposition of size $((k + 1)^{\alpha} M_{\xi, k}^{\xi_1}, 0, 0)$ or a slice decomposition of size $(0, (k + 1)^{\alpha} M_{\xi, k}^{\xi_2}, 0)$ or a slice decomposition of size $(0, 0, (k + 1)^{\alpha} M_{\xi, k}^{\xi_3})$.
Additionally, the number of partition triples $\lambda \vdash_m k$ is upper bounded by $(k + 1)^{m_1 + m_2 + m_3}$.
Summing the slice decompositions for all nonzero $T^{\lambda}$, we get a slice decomposition for $T$ of size at most
$((k + 1)^{\beta} M_{\xi, k}^{\xi_1}, (k+1)^{\beta} M_{\xi, k}^{\xi_2}, (k + 1)^{\beta} M_{\xi, k}^{\xi_3})$ where $\beta = m_1 + m_2 + m_3 + \alpha$.
Therefore, the $\xi$-weighted slice rank $S_{\xi}(T^{\otimes k})$ is at most $3 (k + 1)^{\gamma} M_{\xi, k}$ where $\gamma = \frac{\beta}{\min_i \xi_i}$.

The proof for the case when some $\xi_i = 0$ is analogous, except that we do not allow $i$-slices in our slice decompositions.
\end{proof}

\begin{lemma}\label{lem:sr-polytope-limit}
  The sequence $\sqrt[k]{M_{\xi, k}(T)}$ converges when $k$ goes to infinity, and the limit is given by
  \[
  \lim_{k\to\infty} \sqrt[k]{M_{\xi, k}(T)} = \max_{p \in \Pi(T)} \min_i 2^{H(p_i)/\xi_i}.
  \]
\end{lemma}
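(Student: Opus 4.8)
The plan is to combine the two-sided bounds already in hand with an analysis of the stretching behaviour of the quantity $M_{\xi,k}(T)$. First I would record that, by \cref{lem:sr-polytope-lower} and \cref{lem:sr-polytope-upper}, for every $k$ we have
\[
c_\xi\, M_{\xi,k}(T) \le S_\xi(T^{\otimes k}) \le (k+1)^\gamma M_{\xi,k}(T),
\]
so that $\limsup_k S_\xi(T^{\otimes k})^{1/k} = \limsup_k M_{\xi,k}(T)^{1/k}$ and likewise for $\liminf$; hence it suffices to show the sequence $M_{\xi,k}(T)^{1/k}$ converges and identify its limit. (This also retroactively shows the limsup in the definition of $\SR_\xi$ is a genuine limit, which is what \cref{thm:sr-polytope} asserts.)

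Next I would establish convergence via a Fekete-type (super-multiplicativity) argument. Using the semigroup property $R_{k+\ell}(T)\supseteq R_k(T)+R_\ell(T)$ from \cref{thm:orbit-semigroup} together with the dimension inequality $\dim[\lambda+\mu]\ge \dim[\lambda]\dim[\mu]$ from \cref{lem:reprdim-sum}, one gets that for $\lambda\in R_k(T)$ and $\mu\in R_\ell(T)$, the triple $\lambda+\mu$ lies in $R_{k+\ell}(T)$ and satisfies $\min_i(\dim[\lambda_i+\mu_i])^{1/\xi_i}\ge \min_i\bigl((\dim[\lambda_i])(\dim[\mu_i])\bigr)^{1/\xi_i}$. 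Taking maxima over $\lambda,\mu$ this yields $M_{\xi,k+\ell}(T)\ge M_{\xi,k}(T)\,M_{\xi,\ell}(T)$, i.e. $\log M_{\xi,k}(T)$ is super-additive, so $\lim_k M_{\xi,k}(T)^{1/k}$ exists (possibly $+\infty$, but it is bounded above by $\min_i m_i^{1/\xi_i}$) and equals $\sup_k M_{\xi,k}(T)^{1/k}$.

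Finally I would identify the limit with $\max_{p\in\Pi(T)}\min_i 2^{H(p_i)/\xi_i}$. For the upper bound: for any $\lambda\in R_k(T)$, \cref{cor:reprdim-bound} (the $n=1$ case, $\dim[\lambda_i]\le 2^{H(\lambda_i/k)k}$) gives $\min_i(\dim[\lambda_i])^{1/\xi_i}\le \min_i 2^{H(\lambda_i/k)k/\xi_i}$, and since $\lambda/k$ is a rational point of $\Pi(T)$ and $p\mapsto\min_i 2^{H(p_i)/\xi_i}$ is continuous, this is at most $\bigl(\max_{p\in\Pi(T)}\min_i 2^{H(p_i)/\xi_i}\bigr)^k$; taking $k$-th roots and the limit gives ``$\le$''. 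For the lower bound: pick a point $p^\ast\in\Pi(T)$ attaining the maximum; since rational points are dense in $\Pi(T)$ and the objective is continuous, it suffices to handle rational $p^\ast=\lambda/k$ with $\lambda\in R_k(T)$. Then for each $n$, the stretched triple $n\lambda$ lies in $R_{nk}(T)$ by the semigroup property, and by \cref{cor:reprdim-bound} we have $\dim[n\lambda_i]\ge 2^{H(\lambda_i/k)kn-o(n)}$, so $M_{\xi,nk}(T)\ge \min_i(\dim[n\lambda_i])^{1/\xi_i}\ge 2^{(\min_i H(\lambda_i/k)k/\xi_i)\,n - o(n)}$; taking $(nk)$-th roots and $n\to\infty$ gives $\lim_k M_{\xi,k}(T)^{1/k}\ge \min_i 2^{H(p^\ast_i)/\xi_i}$, completing ``$\ge$''. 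The one subtlety to handle with care is the continuity/density step when some $\xi_i=0$: there the objective takes the value $+\infty$ unless the corresponding $H(p_i)=0$, so I would restrict attention to the face of $\Pi(T)$ on which the relevant entropies vanish (equivalently, only use $i$-slices for $i$ with $\xi_i\ne 0$, matching the convention in \cref{lem:sr-polytope-upper}) and argue density of rational points within that face; this is the step I expect to require the most care, though it is conceptually routine.
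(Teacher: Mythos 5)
Your proposal matches the paper's proof essentially step for step: supermultiplicativity of $M_{\xi,k}(T)$ via the semigroup property and \cref{lem:reprdim-sum}, Fekete's lemma for convergence, the entropy upper bound of \cref{cor:reprdim-bound} together with $\frac{R_k(T)}{k}\subseteq\Pi(T)$ for ``$\le$'', and stretched partitions plus density of rational points in $\Pi(T)$ for ``$\ge$''. The one correction is that your worry about $\xi_i=0$ dissolves under the paper's stated convention that $h_i/\xi_i$ is treated as $+\infty$ whenever $\xi_i=0$ (even if $h_i=0$), so those indices simply never attain the minimum and no restriction to a face of $\Pi(T)$ is needed; restricting to the face where $H(p_i)=0$ would in fact compute a smaller, incorrect maximum.
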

\begin{proof}
Using $R_{k + \ell}(T) \supseteq R_k(T) + R_\ell(T)$ and~\cref{lem:reprdim-sum} we can prove the following super-multiplicativity property:
\begin{align*}
M_{\xi, k + \ell}(T) & = \max_{\lambda \in R_{k + \ell}(T)} \min_i (\dim [\lambda_i])^{\frac{1}{\xi_i}} \\
& \geq \max_{\mu \in R_k(T), \nu \in R_\ell(T)} \min_i (\dim [\mu_i + \nu_i])^{\frac{1}{\xi_i}} \\
& \geq \max_{\mu \in R_k(T), \nu \in R_\ell(T)} \min_i (\dim [\mu_i] \dim [\nu_i])^{\frac{1}{\xi_i}} \\
& \geq \max_{\mu \in R_k(T)} \min_i (\dim [\mu_i])^{\frac{1}{\xi_i}} \max_{\nu \in R_\ell(T)} \min_i (\dim [\nu_i])^{\frac{1}{\xi_i}} \\
& = M_{\xi, k}(T) M_{\xi, \ell}(T).
\end{align*}
From the upper bound on the dimension of $[\lambda_i]$ in terms of Shannon entropy (\cref{thm:reprdim-bound}) and the inclusion $\frac{R_k(T)}{k} \subseteq \Pi(T)$ we obtain
\[
\sqrt[k]{M_{\xi, k}(T)} = \max_{\lambda \in R_{k}} \min_i (\sqrt[k]{\dim [\lambda_i]})^{1/\xi_i} \leq \max_{\lambda \in R_{k}} \min_i 2^{H(\frac{\lambda_i}{k})/\xi_i} \leq \max_{p \in \Pi(T)} \min_i 2^{H(p_i)/\xi_i}.
\]
The convergence of $\sqrt[k]{M_{\xi, k}(T)}$ now follows from Fekete's lemma applied to the sequence
\[
\log \sqrt[k]{M_{\xi, k}(T)} = \frac{\log M_{\xi, k}(T)}{k},
\]
since it has a super-additive numerator and it is bounded.

On the other hand, using the lower bound $\dim [n\lambda_i] \geq 2^{H(\frac{\lambda_i}{k})kn - o(n)}$ (\cref{thm:reprdim-bound}) we see that
\[
\sqrt[kn]{M_{\xi, kn}(T)} \geq \max_{\lambda \in R_{k}} \min_i 2^{H(\frac{\lambda_i}{k})/\xi_i - o(1)}
\]
and letting $n$ go to infinity we get
\[
\lim_{n \to\infty} 
\sqrt[n]{M_{\xi, n}(T)}
= \lim_{n \to\infty} 
\sqrt[kn]{M_{\xi, kn}(T)}
\geq \max_{\lambda \in R_{k}} \min_i 2^{H(\frac{\lambda_i}{k})/\xi_i}.
\]
Considering a sequence of rational points $\frac{\lambda^{(k)}}{k} \in \Pi(T)$ which converges to the point where the maximum $\max_{p \in \Pi(T)} \min_i 2^{H(p_i)/\xi_i}$ is attained, we see that this lower bound can be made arbitrarily close to the upper bound $\max_{p \in \Pi(T)} \min_i 2^{H(p_i)/\xi_i}$.
\end{proof}

\begin{proof}[Proof of~\cref{thm:sr-polytope}]
Lemmas~\ref{lem:sr-polytope-lower} and~\ref{lem:sr-polytope-upper} prove the bounds
\[
c_{\xi} M_{\xi, k}(T) \leq S_{\xi}(T^{\otimes k}) \leq (k + 1)^{\gamma} M_{\xi, k}(T).
\]
Taking the $k$th root and letting $k$ go to infinity, and using \cref{lem:sr-polytope-limit}, we get the required
\[
\SR_{\xi}(T) = \lim_{k \to\infty} \sqrt[k]{M_{\xi, k}(T)} = \max_{p \in \Pi(T)} \min_i 2^{H(p_i)/\xi_i}.\qedhere
\]
\end{proof}

\begin{remark}
Note that not only have we obtained a characterization of $G_{\xi}(T)$ for every tensor $T$ over $\C$, but from the proof it also follows that the $\limsup$ in the definition of $G_{\xi}(T)$ can be replaced by a limit.
\end{remark}

\subsection{Tight tensors over arbitrary fields}\label{subsec:tight}

In the previous section we proved an upper bound on the asymptotic weighted slice rank for tensors over the complex numbers. Here we prove a similar upper bound for tensors over arbitrary fields. Moreover, we show that this upper bound is optimal for the subclass of \emph{tight} tensors.
The idea is similar to the complex case, but instead of the Schur--Weyl decomposition and moment polytope, which are not available in positive characteristic, we use the weight decomposition and the corresponding polytope.

Let $e_1, \dots, e_m$ be the standard basis of the vector space $K^m$. The tensor products $e_{j_1} \otimes e_{j_2} \otimes e_{j_3}$ form a basis for the tensor product space $K^{m_1 \times m_2 \times m_3}$.
Every tensor $T \in K^{m_1 \times m_2 \times m_3}$ can be written as
\[
  T = \sum_{j_1 = 1}^{m_1} \sum_{j_2 = 1}^{m_2} \sum_{j_3 = 1}^{m_3} t_{j_1 j_2 j_3}\, e_{j_1} \otimes e_{j_2} \otimes e_{j_3}.
\]
The \emph{support} of $T$, denoted by $\supp(T)$, is the set of all triples $(j_1, j_2, j_3)$ for which $t_{j_1 j_2 j_3} \neq 0$.

For any $k$-tuple $s \in [m]^k$ we define the vector $e_s \coloneqq e_{s_1} \otimes e_{s_2} \otimes \dots \otimes e_{s_k}$.
These vectors form a basis of $(K^m)^{\otimes k}$.
We define the \emph{type} of $s$ as the $m$-tuple $w \in \N^m$ with coefficients $w_j = \# \{ i \mid s_i = j \}$ for~$j \in [m]$. That is, for every $j \in [m]$ the value of $w_j$ is the number of appearances of the element~$j$ in the tuple~$s$.
The tuple~$s$ contains $k$ elements in total, so $\sum_{i = 1}^m w_i = k$. 
The set of possible types of elements in $[m]^k$ is the set of all ordered partitions of $k$ into $m$ parts.
We may think of these types as rational probability distributions $\frac{w}{k}$ on $[m]$.
Let $[[w]] \subseteq K^m$ be the subspace spanned by the basis elements $e_s$ such that $s \in [m]^k$ is of type~$w$.
The dimension of $[[w]]$ equals the multinomial coefficient $\binom{k}{w} \coloneqq k! / (w_1! \cdots w_m!)$.
The subspaces $[[w]]$ are $S_k$-subrepresentations of $(K^m)^{\otimes k}$ called \emph{permutation modules}.
The space~$(K^m)^{\otimes k}$ decomposes as the direct sum of~$[[w]]$ over all possible types $w$.

We write $(w_1, w_2, w_3) \vDash_{(m_1, m_2, m_3)} k$ if $w_i$ is the type of an element of $[m_i]^{k}$ for $i = 1,2,3$.
For every tensor $T \in K^{m_1 \times m_2 \times m_3}$
the Kronecker power $T^{\otimes k}$ decomposes into a direct sum of tensors $T^{w} \in [[w_1]] \otimes [[w_2]] \otimes [[w_3]]$ going over all types $(w_1, w_2, w_3) \vDash_{(m_1,m_2,m_3)} k$.
From the decomposition 
\[
T^{\otimes k} = \sum_{s_1 \in [m_1]^k} \sum_{s_2 \in [m_2]^k} \sum_{s_3 \in [m_3]^k} \prod_{\ell = 1}^k t_{s_{1\ell} s_{2\ell} s_{3\ell}} e_{s_1} \otimes e_{s_2} \otimes e_{s_3},
\]
it follows that the component $T^w$ is nonzero if and only if there is a sequence of triples $(s_{1\ell}, s_{2\ell}, s_{3\ell}) \in (\supp(T))^k$ such that the sequences $(s_{i1}, \dots, s_{ik})$ have types $w_i$.
Or, in terms of probability distributions, if and only if there is a rational probability distribution on $\supp(T)$ with common denominator $k$ such that the marginal distributions are $\frac{w_i}{k}$.
Let $W(T)$ be the polytope consisting of all triples of marginal distributions for all probability distributions on $\supp(T)$.

\begin{theorem}\label{lem:tight-ub}
  Let $T \in K^{m_1 \times m_2 \times m_3}$ be a tensor.
  For every $\xi \in \Xi$
  \[
  \SR_{\xi}(T) \leq  \max_{p \in W(T)} \min_{i\in [3]}  2^{H(p_i) / \xi_i}.
  \]
\end{theorem}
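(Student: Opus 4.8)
The plan is to adapt the argument of \cref{lem:sr-polytope-upper} to the weight decomposition, which is available over every field. Fix $\xi \in \Xi$, write $m = (m_1, m_2, m_3)$ and $M = m_1 + m_2 + m_3$. Recall from the discussion above that $T^{\otimes k} = \sum_{w \vDash_m k} T^w$ with $T^w \in [[w_1]] \otimes [[w_2]] \otimes [[w_3]]$, that there are at most $(k+1)^M$ triples of types $w$, and that $\dim [[w_i]] = \binom{k}{w_i}$. For each $w$ with $T^w \neq 0$, \cref{lem:decomp}\ref{lem:decompub} gives, for each index $i$, a slice decomposition of $T^w$ of size $(0,\dots,\binom{k}{w_i},\dots,0)$ with the nonzero entry in the $i$th position. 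Set
\[
N_{\xi,k}(T) \coloneqq \max_{\substack{w \vDash_m k\\ T^w \neq 0}}\ \min_{i \in [3]} \binom{k}{w_i}^{1/\xi_i},
\]
with the convention of the paper that a term with $\xi_i = 0$ is $+\infty$, so that the inner minimum effectively runs over the nonempty set $\{i : \xi_i \neq 0\}$. Then for every nonzero $T^w$ there is an index $i$ with $\xi_i \neq 0$ and $\binom{k}{w_i} \leq N_{\xi,k}(T)^{\xi_i}$, and we use the corresponding one-factor slice decomposition of $T^w$.

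Summing these slice decompositions over all (at most $(k+1)^M$) types $w$ via \cref{lem:decomp}\ref{lem:decompsubadd} produces a slice decomposition of $T^{\otimes k}$ of some size $(r_1, r_2, r_3)$ with $r_i \leq (k+1)^M N_{\xi,k}(T)^{\xi_i}$ (and $r_i = 0$ whenever $\xi_i = 0$), hence $S_\xi(T^{\otimes k}) = \sum_i r_i^{1/\xi_i} \leq 3\,(k+1)^{\gamma} N_{\xi,k}(T)$ for a constant $\gamma$ depending only on $\xi$. Next I would bound $N_{\xi,k}(T)^{1/k}$ using the standard type-counting inequality $\binom{k}{w_i} \leq 2^{k H(w_i/k)}$ together with the fact, recalled above, that $T^w \neq 0$ forces the triple $(w_1/k, w_2/k, w_3/k)$ to lie in $W(T)$. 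This gives
\[
N_{\xi,k}(T)^{1/k} = \max_{\substack{w\\ T^w \neq 0}} \min_i \binom{k}{w_i}^{1/(k\xi_i)} \leq \max_{\substack{w\\ T^w \neq 0}} \min_i 2^{H(w_i/k)/\xi_i} \leq \max_{p \in W(T)} \min_i 2^{H(p_i)/\xi_i}.
\]
Taking $k$th roots of the bound on $S_\xi(T^{\otimes k})$ and letting $k \to \infty$, the factor $3^{1/k}(k+1)^{\gamma/k}$ tends to $1$, so $\SR_\xi(T) = \limsup_k S_\xi(T^{\otimes k})^{1/k} \leq \max_{p \in W(T)} \min_i 2^{H(p_i)/\xi_i}$, as claimed.

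I do not expect a genuine obstacle here: the proof is a characteristic-free version of \cref{lem:sr-polytope-upper}, in fact slightly cleaner since the weight spaces $[[w_i]]$ carry no extra multiplicity and no $\dim\{\lambda_i\}$-type polynomial factors intervene. The only points needing a little care are the bookkeeping when some $\xi_i = 0$ (handled by the $+\infty$ convention and the fact that $\max_i \xi_i = 1$ keeps the relevant index set nonempty) and the observation that only an inequality is asserted here --- establishing the matching lower bound for tight tensors, and thereby that the $\limsup$ is a limit, will require a separate argument in the spirit of \cref{lem:sr-polytope-lower} and \cref{lem:sr-polytope-limit}.
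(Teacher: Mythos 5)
Your proposal is correct and follows essentially the same route as the paper: the quantity you call $N_{\xi,k}(T)$ is exactly the paper's $M_{\xi,k}(T)$ (since $\dim[[w_i]] = \binom{k}{w_i}$), and the paper likewise sums one-factor slice decompositions of the nonzero components $T^w$ over the at most $(k+1)^{m_1+m_2+m_3}$ types, bounds the multinomial coefficients by $2^{kH(w_i/k)}$, and lets $k\to\infty$. The only cosmetic slip is the equality $S_\xi(T^{\otimes k}) = \sum_i r_i^{1/\xi_i}$, which should be an inequality since $S_\xi$ is a minimum over all decompositions; this does not affect the argument.
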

\begin{proof}
Define
\[
M_{\xi, k}(T) = \max_{\substack{w \vDash_m k \\ T^{w} \neq 0}} \min_i (\dim [[w_i]])^{1/\xi_i}.
\]
The number of possible types $w$ is at most $(k + 1)^{m_1 + m_2 + m_3}$, and by the same reasoning as in~\cref{lem:sr-polytope-upper} we have $S_{\xi}(T^{\otimes k}) \leq (k + 1)^{\gamma} M_{\xi, k}(T)$ for some $\gamma$.
Using a well-known upper bound on the multinomial coefficients in terms of entropy,
\[
M_{\xi,k}(T) \leq \max_{\substack{w \vDash_m k \\ T^{w} \neq 0}} \min_i 2^{k H(\frac{w_i}{k})/\xi_i} \leq \max_{p \in W(T)} \min_i 2^{k H(p_i)/\xi_i}.
\]
This gives the required upper bound.
\end{proof}

A tensor $T \in K^{m_1 \times m_2 \times m_3}$ is called \emph{tight} (in the standard basis) if there exist three injective maps $u_i \colon [m_i] \to \Z$ such that for every $(j_1, j_2, j_3) \in \supp(T)$ it holds that $u_1(j_1) + u_2(j_2) + u_3(j_3) = 0$ \cite{strassen1991degeneration}.

\begin{lemma}\label{lem:sr-tight-semis}
If a tensor $T \in K^{m_1 \times m_2 \times m_3}$ is tight, then all nonzero components $T^w$ are semistable.
\end{lemma}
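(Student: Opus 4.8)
The plan is to apply the semistability test (\cref{lem:irrsemis}) to each nonzero component $T^w$, just as we did for the Schur--Weyl components in the complex case (\cref{lem:sr-polytope-lower}). The component $T^w$ lives in $[[w_1]] \otimes [[w_2]] \otimes [[w_3]]$, so we need a group $\Gamma$ acting on each $[[w_i]]$ irreducibly, under whose diagonal action on the tensor product $T^w$ is invariant. First I would recall the tightness data: there are injective maps $u_i \colon [m_i] \to \Z$ with $u_1(j_1) + u_2(j_2) + u_3(j_3) = 0$ for every $(j_1, j_2, j_3) \in \supp(T)$. This gives a torus action making $T$ (and hence each $T^w$) invariant up to scaling, but a torus does not act irreducibly on the multinomial spaces $[[w_i]]$, so the bare torus is not enough.

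The key idea is to enlarge the torus by a symmetric-group factor, exactly as in \cref{ex:diagonal}. For a fixed type $w_i \vDash_{m_i} k$, the space $[[w_i]] \subseteq (K^{m_i})^{\otimes k}$ is spanned by the basis vectors $e_s$ with $s \in [m_i]^k$ of type $w_i$; the symmetric group $S_k$ acts on $(K^{m_i})^{\otimes k}$ by permuting tensor factors and preserves $[[w_i]]$, acting transitively on this basis. I would take $\Gamma$ to be (a suitable extension of) $(K^\times)^{m_i} \rtimes S_k$ acting on each factor, where the torus $(K^\times)^{m_i}$ acts via the characters determined by the maps $u_i$ — concretely, the torus element scales $e_{j}$ in $K^{m_i}$ by $t^{u_i(j)}$ and hence acts on $e_s \in [[w_i]]$ by the scalar $t^{\sum_\ell u_i(s_\ell)}$. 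One must be slightly careful that a single group $\Gamma$ acts on all three factors compatibly; I would use $\Gamma = (K^\times)^3 \times S_k$ or similar, with the $j$-th copy of $K^\times$ acting through $u_1,u_2,u_3$ respectively and $S_k$ acting diagonally by simultaneous permutation of tensor factors in all three legs.

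The two things to check are then: (i) irreducibility of $[[w_i]]$ as a $\Gamma$-representation, and (ii) $\Gamma$-invariance of $T^w$. For (i), any $(K^\times)^{m_i}$-stable subspace of $[[w_i]]$ is a coordinate subspace (spanned by a subset of the $e_s$ of type $w_i$), provided the characters $s \mapsto t^{\sum_\ell u_i(s_\ell)}$ separate these basis vectors — but they need not, so in fact the relevant stabilizing torus is just $K^\times$ and we must instead rely on $S_k$-transitivity on the basis of $[[w_i]]$ to get irreducibility; since $S_k$ permutes the $e_s$ of type $w_i$ transitively, the only $S_k$-invariant coordinate subspaces are $0$ and $[[w_i]]$, giving irreducibility. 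For (ii), the tightness relation $u_1(s_{1\ell}) + u_2(s_{2\ell}) + u_3(s_{3\ell}) = 0$ at each position $\ell$ forces the torus to act trivially on every basis monomial $e_{s_1} \otimes e_{s_2} \otimes e_{s_3}$ appearing in $T^w$ (the exponents sum to $\sum_\ell 0 = 0$), and $T^{\otimes k}$ — hence $T^w$, being a summand picked out by types, which are $S_k$-stable — is invariant under the diagonal $S_k$ permuting tensor positions. Applying \cref{lem:irrsemis} with this $\Gamma$ then yields that $T^w$ is semistable.

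The main obstacle I expect is getting the group-theoretic bookkeeping exactly right: ensuring that one and the same group $\Gamma$ acts on all three tensor legs in a way that is simultaneously (a) irreducible on each $[[w_i]]$ and (b) leaves $T^w$ invariant (not merely invariant up to a scalar, which would not suffice for \cref{lem:irrsemis} as stated). The torus handles (b) but not (a), while $S_k$ handles (a) but one must confirm it genuinely fixes $T^w$ rather than permuting among different components; this works because the decomposition into the $T^w$ is indexed by types, which are exactly the $S_k$-orbits on basis tuples, so $S_k$ preserves each $[[w_i]]$ and hence each $T^w$. Once this is set up cleanly the result is immediate from \cref{lem:irrsemis}.
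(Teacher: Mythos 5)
Your overall strategy is the right one and matches the paper's: realize each type space $[[w_i]]$ as an irreducible representation of a group of the form (torus)$\rtimes S_k$ under which $T^w$ is invariant, and invoke the semistability test (\cref{lem:irrsemis}). However, there is a genuine gap in your irreducibility argument. The tori you propose --- the one-parameter group $K^\times$ acting on $e_j$ by $t^{u_i(j)}$, or a product $(K^\times)^3$ with one copy per tensor leg --- act on the whole type space $[[w_i]]$ by a \emph{single} character: every $e_s$ of type $w_i$ is scaled by $t^{\sum_j w_{ij} u_i(j)}$, so this torus contributes nothing toward distinguishing subspaces of $[[w_i]]$. You acknowledge that the characters do not separate the basis vectors and then fall back on ``$S_k$-transitivity on the basis,'' but transitivity of a permutation action on a basis does not imply irreducibility: the line spanned by $\sum_s e_s$ is always an $S_k$-invariant subspace of $[[w_i]]$ (the type spaces are Young permutation modules, which are reducible whenever their dimension exceeds $1$). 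Your step ``the only $S_k$-invariant \emph{coordinate} subspaces are $0$ and $[[w_i]]$'' only rules out coordinate subspaces, and nothing in your setup forces an invariant subspace to be a coordinate subspace.

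The missing ingredient is the \emph{position-wise} torus $(K^\times)^k$, one copy of $K^\times$ for each of the $k$ Kronecker factors, where $(t_1,\dots,t_k)$ scales $e_s = e_{s_1}\otimes\cdots\otimes e_{s_k}$ by $\prod_\ell t_\ell^{u_i(s_\ell)}$. Because each $u_i$ is \emph{injective}, distinct tuples $s$ receive distinct $(K^\times)^k$-weights $(u_i(s_1),\dots,u_i(s_k))$, so every $(K^\times)^k$-invariant subspace of $V_i^{\otimes k}$ is a coordinate subspace; only then does $S_k$-transitivity on the tuples of a fixed type yield irreducibility of $[[w_i]]$ under $(K^\times)^k \rtimes S_k$. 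Invariance of $T^{\otimes k}$ (hence of each $T^w$, by equivariance of the projections) under this larger group still holds, since the tightness relation $u_1(s_{1\ell})+u_2(s_{2\ell})+u_3(s_{3\ell})=0$ makes each $t_\ell$ act trivially on every monomial of $T^{\otimes k}$ separately. With this replacement your argument goes through and coincides with the paper's proof.
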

\begin{proof}
We may assume that $K$ is algebraically closed by extending the field to the algebraic closure if necessary (since this does not affect tightness or the components $T^w$, and since semistability is defined over the closure).
Since $T$ is tight there is by definition for every $i \in [3]$ an injective map $u_i \colon \{1, \dots, m_i\} \to \Z$ such that for every $(j_1, j_2, j_3) \in \supp(T)$ it holds that $u_1(j_1) + u_2(j_2) + u_3(j_3) = 0$.

Equip the spaces $V_i = K^{m_i}$ with actions of $K^{\times}$ by letting $t \in K^{\times}$ act as $t \cdot e_j = t^{u_i(j)} e_j$.
Taking $k$ copies of these representations, we obtain actions of the algebraic torus $(K^{\times})^k$ on $V_i^{\otimes k}$.
Additionally, the symmetric group $S_k$ acts on $V_i^{\otimes k}$ by permuting the tensor factors.
These two actions combine to give actions of the semidirect product $(K^{\times})^k \rtimes S_k$ on the three spaces $V_i^{\otimes k}$.

Note that the type spaces $[[w]] \subset V_i^{\otimes k}$ are irreducible $(K^{\times})^k \rtimes S_k$-subrepresentations of $V_i^{\otimes k}$.
Indeed, as a~$(K^{\times})^k$-representation $V_i^{\otimes k}$ decomposes into a direct sum of one-dimensional representations spanned by the basis vectors $e_j$, which all have different weights $(u_i(j_1), \dots, u_i(j_k))$. (Here we are using that the $u_i$ are injective.)
Therefore, every $(K^{\times})^k \rtimes S_k$-subrepresentation is spanned by some subset of these basis vectors which is closed under permutations.
Since every two sequences of the same type can be transformed into each other by a permutation, every such subrepresentation is a sum of type spaces.

Actions of $K^{\times}$ on $V_i$ define the action on the tensor product $V_1 \otimes V_2 \otimes V_3$.
An element $t \in K^{\times}$ acts by $t \cdot e_{j_1} \otimes e_{j_2} \otimes e_{j_3} = t^{u_1(j_1) + u_2(j_2) + u_3(j_3)} e_{j_1} \otimes e_{j_2} \otimes e_{j_3}$.
Since $u_1(j_1) + u_2(j_2) + u_3(j_3) = 0$ for all triples $(j_1, j_2, j_3) \in \supp(T)$,
the tensor $T \in V_1 \otimes V_2 \otimes V_3$ is invariant under this action of $K^{\times}$.
It follows that the Kronecker power $T^{\otimes k}$ is invariant under $(K^{\times})^k$.
Since $T^{\otimes k}$ is also invariant under permutation of the $k$ factors, it is invariant under the action of $(K^{\times})^k \rtimes S_k$ on $V_1^{\otimes k} \otimes V_2^{\otimes k} \otimes V_3^{\otimes k}$.

Since $T^{\otimes k}$ is invariant under the action of $(K^{\times})^k \rtimes S_k$ and the projections $V_i^{\otimes k} \to [[w_i]]$ are equivariant, all components $T^w$ are also invariant.
Using the semistability test~(\cref{lem:irrsemis}) we conclude that if $T^w$ is nonzero, then it is semistable.
\end{proof}

\begin{lemma}\label{lem:sr-tight-lower}
  Let $T \in K^{m_1 \times m_2 \times m_3}$ be a tight tensor. For every $\xi \in \Xi$ we have $S_{\xi}(T^{\otimes k}) \geq c_{\xi} M_{\xi, k}(T)$ for some $c_{\xi} > 0$ dependent only on $\xi$.
\end{lemma}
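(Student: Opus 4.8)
The plan is to mirror the proof of \cref{lem:sr-polytope-lower}, with the Schur--Weyl components $T^\lambda$ replaced by the weight components $T^w$ and with \cref{lem:sr-tight-semis} taking over the role played there by the triple-linear-form trick together with the semistability test. The point is that the type spaces $[[w_i]]$ are \emph{already} the irreducible pieces, so no reduction by linear functionals is needed and the argument is in fact a little shorter than in the complex case.

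First I would observe that every component $T^w$ is a restriction of $T^{\otimes k}$: if $P_i \colon V_i^{\otimes k} \to [[w_i]]$ denotes the coordinate projection onto the type-$w_i$ subspace along the sum of all other type spaces, then $T^w = (P_1 \otimes P_2 \otimes P_3) \cdot T^{\otimes k}$, so $T^w \leq T^{\otimes k}$. By monotonicity of the weighted slice rank under restriction (\cref{lem:sr-basic}\ref{lem:srmon}) this gives $S_\xi(T^{\otimes k}) \geq S_\xi(T^w)$ for every type $w \vDash_m k$. Next, fix a type $w$ with $T^w \neq 0$. Since $T$ is tight, \cref{lem:sr-tight-semis} says that $T^w$, as a nonzero tensor in $[[w_1]] \otimes [[w_2]] \otimes [[w_3]]$, is semistable. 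Applying the lower bound for semistable tensors (\cref{cor:slice-semis}) inside the spaces $[[w_i]]$, whose dimensions are the multinomial coefficients $\dim [[w_i]] = \binom{k}{w_i}$, yields
\[
S_\xi(T^{\otimes k}) \geq S_\xi(T^w) \geq c_\xi \min_i (\dim [[w_i]])^{1/\xi_i},
\]
with $c_\xi = 3^{-1/\min_i \xi_i}$ the same constant as in \cref{cor:slice-semis}. Maximizing over all types $w \vDash_m k$ with $T^w \neq 0$ gives exactly $S_\xi(T^{\otimes k}) \geq c_\xi M_{\xi,k}(T)$.

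I do not expect any genuine obstacle here once \cref{lem:sr-tight-semis} is available; the only points needing a word of care are (i) that the weight-space projections present $T^w$ as an honest restriction of $T^{\otimes k}$, which is clear since the $P_i$ are linear maps, so that \cref{lem:sr-basic}\ref{lem:srmon} and \cref{cor:slice-semis} both apply, and (ii) that when $K$ is not algebraically closed the semistability statement and hence \cref{cor:slice-semis} are to be read over $\overline{K}$, which is harmless because extending scalars to $\overline{K}$ can only decrease the weighted slice rank. Together with the matching upper bound \cref{lem:tight-ub}, this lemma will also show that the $\limsup$ in the definition of $\SR_\xi(T)$ is a limit for tight tensors, exactly as in the remark following \cref{thm:sr-polytope}.
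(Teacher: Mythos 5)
Your proposal is correct and follows essentially the same route as the paper: present each weight component $T^w$ as a restriction of $T^{\otimes k}$, invoke \cref{lem:sr-tight-semis} for semistability of the nonzero components, apply \cref{cor:slice-semis} inside the type spaces $[[w_i]]$, and maximize over $w$. The extra remarks on the non-algebraically-closed case and on the direction of scalar extension are correct but not needed beyond the conventions the paper already sets up.
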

\begin{proof}
The proof of this statement is along the same lines as the proof of \cref{lem:sr-polytope-lower}.
For every $w \vDash_{(m_1, m_2, m_3)} k$ the tensor $T^w$ is a restriction of $T^{\otimes k}$. By~\cref{lem:sr-tight-semis}, $T^w$ is semistable whenever it is nonzero, and using monotonicity of weighted slice ranks and~\cref{cor:slice-semis} we obtain
\[
S_{\xi}(T^{\otimes k}) \geq S_{\xi}(T^w) \geq c_{\xi} \min_i (\dim [[w_i]])^{1/\xi_i}.
\]
Maximizing over all $w$ such that $T^w$ is nonzero, we get the required bound.
\end{proof}

\begin{theorem}\label{thm:sr-tight}
  Let $T \in K^{m_1 \times m_2 \times m_3}$ be a tight tensor.
  For every $\xi \in \Xi$ we have
  \[
     \SR_{\xi}(T) = \max_{p \in W(T)} \min_{i\in [3]}  2^{H(p_i) / \xi_i}.
  \]
\end{theorem}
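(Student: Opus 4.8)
The plan is to sandwich $S_\xi(T^{\otimes k})$ between two quantities that both grow like $M_{\xi,k}(T)$ up to polynomial factors, and then to identify $\lim_k M_{\xi,k}(T)^{1/k}$ with the claimed maximum over $W(T)$. The upper bound $S_\xi(T^{\otimes k}) \leq (k+1)^{\gamma} M_{\xi,k}(T)$ is already contained in the proof of \cref{lem:tight-ub}, and the matching lower bound $S_\xi(T^{\otimes k}) \geq c_\xi M_{\xi,k}(T)$ is exactly \cref{lem:sr-tight-lower} --- this is where tightness enters, via the semistability of the components $T^w$ (\cref{lem:sr-tight-semis}) and the semistability test. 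So the remaining work is purely combinatorial: computing the asymptotics of $M_{\xi,k}(T)$, which is the torus-version analogue of \cref{lem:sr-polytope-limit}.

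First I would establish supermultiplicativity $M_{\xi,k+\ell}(T) \geq M_{\xi,k}(T)\,M_{\xi,\ell}(T)$. The set $\{w \vDash_m k : T^w \neq 0\}$ has the semigroup property: if $p$ is a probability distribution on $\supp(T)$ with common denominator $k$ and marginals $w_i/k$, and $q$ one with denominator $\ell$ and marginals $w_i'/\ell$, then $\tfrac{kp+\ell q}{k+\ell}$ is a distribution on $\supp(T)$ realizing $w+w'$, so $T^{w+w'} \neq 0$. Moreover $\dim[[w]] = \binom{k}{w}$ and multinomial coefficients satisfy $\binom{k+\ell}{w+w'} \geq \binom{k}{w}\binom{\ell}{w'}$ (concatenating a word with letter-counts $w$ and one with letter-counts $w'$ injectively produces a word with letter-counts $w+w'$), hence $\dim[[w_i+w_i']] \geq \dim[[w_i]]\dim[[w_i']]$. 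Feeding these two facts through the same chain of inequalities as in \cref{lem:sr-polytope-limit} gives superadditivity of $\log M_{\xi,k}(T)$, and since the sequence is bounded above (next paragraph), Fekete's lemma yields convergence of $M_{\xi,k}(T)^{1/k}$.

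Next I would evaluate the limit using the standard two-sided estimate $2^{kH(w/k)-o(k)} \leq \binom{k}{w} \leq 2^{kH(w/k)}$ together with the fact, immediate from the characterization of nonvanishing of $T^w$ recalled just before the theorem, that $W(T)$ is the closure of $\bigcup_k \{(w_1/k, w_2/k, w_3/k) : w \vDash_m k,\ T^w \neq 0\}$. The upper estimate gives $M_{\xi,k}(T)^{1/k} \leq \max_{p \in W(T)} \min_i 2^{H(p_i)/\xi_i}$ for every $k$; the lower estimate applied along stretched types $nw$ gives, for each fixed nonvanishing $w \vDash_m k$, that $\lim_n M_{\xi,n}(T)^{1/n} \geq \min_i 2^{H(w_i/k)/\xi_i}$, and choosing a sequence of rational points of $W(T)$ converging to an optimal $p$ makes this bound approach $\max_{p\in W(T)}\min_i 2^{H(p_i)/\xi_i}$. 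Combining with the sandwich $c_\xi M_{\xi,k}(T) \leq S_\xi(T^{\otimes k}) \leq (k+1)^\gamma M_{\xi,k}(T)$ and taking $k$-th roots proves $\SR_\xi(T) = \lim_k M_{\xi,k}(T)^{1/k} = \max_{p\in W(T)}\min_i 2^{H(p_i)/\xi_i}$, and in particular that the $\limsup$ in the definition of $\SR_\xi(T)$ is a genuine limit. As in the complex case, the $\xi_i = 0$ cases are handled by forbidding $i$-slices in the slice decompositions, exactly as in the proof of \cref{lem:tight-ub}.

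The genuinely hard step --- already available to us as \cref{lem:sr-tight-lower} --- is the lower bound, i.e.\ that every nonzero combinatorial component $T^w$ is semistable; this is precisely the point where tightness (rather than, say, obliqueness) of the support is needed, since the argument realizes $T^w$ as invariant under an irreducible action of $(K^\times)^k \rtimes S_k$. Granting that, the proof above is the positive-characteristic, weight-decomposition transcription of the proof of \cref{thm:sr-polytope}, and the only places requiring a little care are the multinomial-coefficient inequality underpinning supermultiplicativity and the $o(k)$ error term in Stirling's approximation, both routine.
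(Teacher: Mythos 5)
Your proposal is correct and follows essentially the same route as the paper: the paper's proof of \cref{thm:sr-tight} likewise combines \cref{lem:tight-ub} for the upper bound with \cref{lem:sr-tight-lower} for the lower bound, and then transcribes the Fekete/entropy argument of \cref{lem:sr-polytope-limit}, replacing $R_k(T)$ by the set of nonvanishing types and \cref{lem:reprdim-sum} by the multinomial inequality $\binom{k+\ell}{w+v}\geq\binom{k}{w}\binom{\ell}{v}$, exactly as you do. The details you supply (the semigroup property via averaging distributions on the support, and the two-sided entropy estimates on multinomial coefficients) are the intended ones.
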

\begin{proof}
The upper bound is given by~\cref{lem:tight-ub}.
The proof of the lower bound follows the same reasoning as the proof of~\cref{lem:sr-polytope-limit} and~\cref{thm:sr-polytope}. The set $R_k(T)$ in this case is defined as $\{ w \vDash_m k \mid T^w \neq 0\}$. Instead of~\cref{lem:reprdim-sum} we use the basic inequality $\binom{k+\ell}{w+v} \geq \binom{k}{w} \binom{\ell}{v}$.
\end{proof}

\section{Minimax correspondence}\label{sec:correspondence}

In this section we introduce the notion of a \emph{dual pair}, a variation on the well-known Legendre–Fenchel transformation for convex functions which is also known as the convex conjugate. 
For functions that form a dual pair, we use the von Neumann minimax theorem to prove what we call a \emph{minimax correspondence}: each function can be computed as an optimization problem in terms of the other. The result in this section is general. We will apply it to the quantum functionals and the asymptotic weighted slice ranks, and variations, in \cref{sec:quantum-functionals}.

\subsection{Definition of a dual pair}

\begin{definition}[Dual pair]\label{def:conjugate-pair}
  Let $\Pi \subseteq \R^n$ be a compact convex set.
  Let $h_1, \dots, h_k$ be functions $\Pi \to \R_{\geq 0}$ that are continuous and concave.
  We will write $h(x) = (h_1(x), \ldots, h_k(x))$.
  Define the parameter spaces
  \begin{align*}
    \Theta &= \{ (\theta_1, \dots, \theta_k) \mid \theta_i \geq 0,\,  \sum_{i = 1}^k \theta_i = 1 \},\\
    \Xi &= \{(\xi_1, \dots, \xi_k) \mid \xi_i \geq 0,\, \max \{\xi_1, \dots, \xi_k\} = 1\}.
  \end{align*}
  Let $\langle v,w\rangle \coloneqq \sum_{i=1} v_i w_i$ denote the standard dot product on real vectors.
  We define the functions $f \colon \Theta \to \R$ and $g \colon \Xi \to \R$
  by
  \begin{align*}
    f(\theta) &= \max_{x \in \Pi}\, \langle\theta, h(x)\rangle,\\
    g(\xi) &= \max_{x \in \Pi} \min_{i \in [k]} \frac{h_i(x)}{\xi_i}.
  \end{align*}
  We call $(f,g)$ a \emph{dual pair}.
\end{definition}

Clearly the template and motivation for the above definition are pairs formed by (the logarithms of) the quantum functionals and the asymptotic weighted slice rank. Namely, for any fixed tensor~$T$ we have that the pair $(f_\theta(T), g_\xi(T))$ is a dual pair where $f_\theta(T)$ is the logarithm of the quantum functional at $T$ and $g_\xi(T)$ is the logarithm of the asymptotic $\xi$-weighted slice rank of $T$.

\begin{remark}
The notion of a dual pair is not symmetric. That is, if $(f,g)$ is a dual pair, then~$(g,f)$ is not necessarily a dual pair.
\end{remark}

\subsection{Von Neumann minimax theorem}
To prove the minimax correspondence for dual pairs we use the von Neumann minimax theorem for quasi\-concave-quasi\-convex functions. 

Let $X$ be a convex set and
let $f \colon X \to \R$ be a function. 
We define $f$ to be \emph{quasiconvex} if $f(\lambda x + (1-\lambda)y) \leq \max \{f(x), f(y)\}$ for every $\lambda \in [0,1]$. We define $f$ to be \emph{quasiconcave} if $f(\lambda x + (1-\lambda)y) \geq \max \{f(x), f(y)\}$ for every $\lambda \in [0,1]$.
Equivalently, $f$ is quasiconvex if the sublevel set $\{x \mid f(x) \leq c\}$ is convex for every $c$, and $f$ is quasiconcave if the superlevel set $\{x \mid f(x) \geq c\}$ is convex for every~$c$. 
Convex functions are quasi-convex and concave functions are quasi-concave.

We define $f$ to be \emph{quasilinear} if it is both quasiconvex and quasiconcave.
Every fractional linear function $f(x) = \langle u,x \rangle/\langle v,x\rangle$
is a quasilinear function on any convex set on which $\langle v, x\rangle$ does not change its sign.
To see this, note that $\langle u,x\rangle /\langle v,x\rangle \leq c$ if and only if $\langle u - cv, x\rangle \leq 0$ in case when $\left<v, x\right>$ is positive and similarly when $\leq$ is replaced by $\geq$. Thus the sublevel and superlevel sets are halfspaces, and in particular they are convex.
However, generally a fractional linear function is neither convex nor concave.

Von Neumann proved the following minimax theorem for quasi\-concave-quasi\-convex functions in 1937, extending the earlier version from 1928 which was for bilinear functions. See, for instance,~\cite[Theorem~2, page~3]{Simons1995}.

\begin{theorem}\label{th:minimax}
Let $X \subset \R^m$ and $Y \subset \R^n$ be nonempty compact convex sets and let $f \colon X \times Y \to \R$ be a continuous function such that $f(x, \cdot)$ is quasiconvex for every $x \in X$ and $f(\cdot, y)$ is quasiconcave for every $y \in Y$.
Then
\[
\max_{x\in X} \min_{y\in Y} f(x, y) = \min_{y \in Y} \max_{x\in X} f(x, y).
\]
\end{theorem}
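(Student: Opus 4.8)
The plan is to prove this the way one proves Sion's minimax theorem, specialized to the present setting where both $X$ and $Y$ are compact and $f$ is jointly continuous. The inequality $\max_{x\in X}\min_{y\in Y}f(x,y)\le\min_{y\in Y}\max_{x\in X}f(x,y)$ is immediate: for any $x_0\in X$ and $y_0\in Y$ we have $\min_y f(x_0,y)\le f(x_0,y_0)\le\max_x f(x,y_0)$, and all four optima are genuinely attained (continuity of $f$, compactness of $X$ and $Y$, and continuity of the functions $x\mapsto\min_y f(x,y)$ and $y\mapsto\max_x f(x,y)$), so maximizing over $x_0$ on the left and minimizing over $y_0$ on the right gives the claim. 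The whole content is the reverse inequality.

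For the reverse inequality I would fix $c>\alpha:=\max_x\min_y f(x,y)$ and show $\beta:=\min_y\max_x f(x,y)\le c$; letting $c\downarrow\alpha$ then yields $\beta\le\alpha$, which together with the easy direction finishes the proof. Since $\min_y f(x,y)\le\alpha<c$ for every $x$, each set $Y_x:=\{y\in Y: f(x,y)\le c\}$ is nonempty, and it is convex by quasiconvexity of $f(x,\cdot)$ and closed by continuity. It suffices to show $\bigcap_{x\in X}Y_x\neq\emptyset$: a point $y^\ast$ in this intersection satisfies $\max_x f(x,y^\ast)\le c$, hence $\beta\le c$. By compactness of $Y$ and the finite intersection property, it is enough to prove that every finite intersection $Y_{x_1}\cap\dots\cap Y_{x_n}$ is nonempty, i.e., that for any $x_1,\dots,x_n\in X$ there is a $y\in Y$ with $f(x_i,y)<c$ for all $i$.

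This finite statement I would prove by induction on $n$; the case $n=1$ is exactly the nonemptiness of $Y_{x_1}$. For the inductive step, put $P:=\{y\in Y: f(x_i,y)<c\text{ for }i\le n-2\}$ (with $P=Y$ when $n=2$), and for $s\in[0,1]$ let $x_s:=(1-s)x_{n-1}+sx_n\in X$. The induction hypothesis applied to $x_1,\dots,x_{n-2},x_s$ produces $y_s\in P$ with $f(x_s,y_s)<c$, whence $\min\{f(x_{n-1},y_s),f(x_n,y_s)\}\le f(x_s,y_s)<c$ by quasiconcavity of $f(\cdot,y_s)$. Assume for contradiction that no $y\in P$ has both $f(x_{n-1},y)<c$ and $f(x_n,y)<c$. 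Then each $y_s$ lies in exactly one of the relatively open subsets $B:=\{y\in P: f(x_{n-1},y)<c\}$ and $C:=\{y\in P: f(x_n,y)<c\}$ of $P$; consequently the open sets $U:=\{s:\exists\,y\in B,\ f(x_s,y)<c\}$ and $V:=\{s:\exists\,y\in C,\ f(x_s,y)<c\}$ are nonempty (note $0\in U$, $1\in V$) and cover $[0,1]$, so by connectedness they meet at some $s^\ast$, giving $y'\in B$ and $y''\in C$ with $f(x_{s^\ast},y')<c$ and $f(x_{s^\ast},y'')<c$. Quasiconvexity of $f(x_{s^\ast},\cdot)$ and of the $f(x_i,\cdot)$ forces the whole segment $[y',y'']$ into $P$ with $f(x_{s^\ast},\cdot)<c$ on it; then quasiconcavity of $f(\cdot,y)$ gives $\min\{f(x_{n-1},y),f(x_n,y)\}<c$ for every $y\in[y',y'']$, so this segment is the disjoint union of the two nonempty relatively open sets $[y',y'']\cap B$ and $[y',y'']\cap C$, contradicting its connectedness. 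Hence the required $y$ exists, completing the induction and the proof.

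The one genuinely nontrivial step — and where I expect all the work to be — is the inductive step above: it is the only place where quasiconvexity in the second variable and quasiconcavity in the first are used together, and it is precisely the topological core that makes a purely order-theoretic proof impossible. An equivalent route replaces this induction by a single appeal to the Knaster--Kuratowski--Mazurkiewicz lemma (or to Brouwer's fixed-point theorem), but the same quasiconvex/quasiconcave interaction reappears in verifying the covering hypothesis. Since the statement is classical, one could of course also simply cite a reference such as Simons' survey, which is the route actually taken in the paper.
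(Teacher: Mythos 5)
Your proof is correct, but it is worth noting that the paper does not actually prove this theorem at all: it is stated as a known result and justified only by a citation to Simons' survey (the paper's reference \cite{Simons1995}), exactly as you anticipate in your closing remark. What you have written is a complete, self-contained proof along the lines of Sion's classical argument: the easy inequality from attainment of the optima, reduction of the hard inequality to the nonemptiness of $\bigcap_x\{y: f(x,y)\le c\}$ via compactness and the finite intersection property, and then the induction whose key step combines quasiconvexity in $y$ and quasiconcavity in $x$ with a connectedness argument on $[0,1]$ and on the segment $[y',y'']$. I checked the inductive step in detail and it is sound: $U$ and $V$ are open because $s\mapsto f(x_s,y)$ is continuous for each fixed $y$, they cover $[0,1]$ because each $y_s$ lies in $B\cup C$, and the final contradiction correctly uses that $B$ and $C$ are relatively open and disjoint on the segment. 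The trade-off is the obvious one: the citation keeps the paper short, while your argument makes the result verifiable without external sources and makes visible exactly where the quasiconvex/quasiconcave hypotheses enter. One small side remark: the paper's own definition of quasiconcavity in this section contains a typo (it writes $f(\lambda x+(1-\lambda)y)\ge\max\{f(x),f(y)\}$ where $\min$ is intended); your proof correctly uses the standard $\ge\min$ form, which is what the theorem requires.
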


We will also use the following property of quasiconvex functions.

\begin{lemma}\label{claim:opt-fractlinear}
  Let $\Pi \subseteq \R^m$ be a convex polytope and $f \colon \Pi \to \R$ be a quasiconcave function. Then the minimal value of $f$ is attained at some vertex of~$\Pi$.
\end{lemma}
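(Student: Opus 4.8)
The plan is to exploit the fact that a polytope is the convex hull of its finite vertex set together with the convexity of superlevel sets of a quasiconcave function. Let $v_1, \dots, v_N$ be the vertices of $\Pi$, so that $\Pi = \conv\{v_1, \dots, v_N\}$. Since this set is finite, the number $c \coloneqq \min_{j \in [N]} f(v_j)$ is well-defined and is attained at some vertex $v_{j_0}$. It therefore suffices to show that $f(x) \geq c$ for every $x \in \Pi$, for then the minimum of $f$ over $\Pi$ equals $c = f(v_{j_0})$ and is attained at a vertex.

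To see this, consider the superlevel set $L \coloneqq \{x \in \Pi \mid f(x) \geq c\}$. By definition of quasiconcavity, $L$ is convex. Each vertex $v_j$ lies in $L$ because $f(v_j) \geq c$ by the choice of $c$. Since $L$ is convex and contains all the $v_j$, it contains their convex hull, i.e.\ $\Pi = \conv\{v_1, \dots, v_N\} \subseteq L$. Hence $f(x) \geq c$ for all $x \in \Pi$, as required. (Equivalently, one can argue directly: write $x = \sum_j \lambda_j v_j$ with $\lambda_j \geq 0$, $\sum_j \lambda_j = 1$, and use induction on the number of nonzero $\lambda_j$ together with the two-point inequality $f(\lambda u + (1-\lambda)w) \geq \min\{f(u), f(w)\}$ to get $f(x) \geq \min_j f(v_j) = c$.)

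There is essentially no serious obstacle here; the only point worth stating carefully is the structural fact that a polytope has finitely many vertices and equals their convex hull (the Minkowski--Weyl theorem), which is what makes both the existence of the minimizing vertex and the containment argument go through. Note in particular that no continuity assumption on $f$ is needed.
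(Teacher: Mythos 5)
Your proof is correct and follows exactly the same route as the paper's: take $c$ to be the minimum of $f$ over the (finitely many) vertices, observe that the superlevel set $\{x \mid f(x) \geq c\}$ is convex and contains all vertices, hence contains their convex hull $\Pi$. The extra remarks about Minkowski--Weyl and the dispensability of continuity are fine but not needed beyond what the paper already states.
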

\begin{proof}
Let $c$ be minimal among the values of $f$ on all vertices of $\Pi$.
Since the set $\{x \mid f(x) \geq c\}$ is convex and contains all the vertices of $\Pi$, it contains its convex hull $\Pi$.
Thus, $f$ does not attain values that are strictly less than $c$.
\end{proof}

In particular, it follows from \cref{claim:opt-fractlinear} that every fractional linear function $\langle u,x \rangle/\langle v,x\rangle$ defined on a convex polytope attains its maxima and minima at vertices of the convex polytope. 

\subsection{Minimax correspondence}
We will now show using the minimax theorem that there is a correspondence between functions that form a dual pair.

For $\xi \in \Xi$ define the restricted parameter space
\[
\Theta(\xi) = \{ \theta \in \Theta \mid \xi_i =0 \Rightarrow \theta_i = 0\}.
\]
Note that if $\theta \in \Theta(\xi)$, then $\left<\theta, \xi\right>$ is nonzero.

\begin{theorem}[Minimax correspondence]\label{legendrevar}
  Let $(f,g)$ be a dual pair.
  Then $f$ and $g$ are continuous and $f$ is convex.
  Moreover, $f$ and $g$ are expressed in terms of each other via
  \begin{align*}
    g(\xi) & = \min_{\theta \in \Theta(\xi)} \frac{f(\theta)}{\langle\theta, \xi\rangle},\\ 
    f(\theta) &= \max_{\xi \in \Xi}\, g(\xi) \langle\theta, \xi\rangle. 
  \end{align*}
\end{theorem}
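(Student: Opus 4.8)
The plan is to establish the three claims in turn: continuity and convexity of $f$, then the formula $g(\xi) = \min_{\theta \in \Theta(\xi)} f(\theta)/\langle\theta,\xi\rangle$, and finally the formula $f(\theta) = \max_{\xi\in\Xi} g(\xi)\langle\theta,\xi\rangle$. The convexity of $f$ is immediate: $f(\theta) = \max_{x\in\Pi}\langle\theta,h(x)\rangle$ is a pointwise maximum over $x\in\Pi$ of functions that are linear (hence convex) in $\theta$, so $f$ is convex, and since $\Pi$ is compact and each $\langle\theta,h(x)\rangle$ is continuous in $\theta$ uniformly over the compact set $\Pi$, $f$ is continuous. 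This also records that the maximum defining $f(\theta)$ is attained.

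For the first correspondence formula, I would fix $\xi\in\Xi$ and rewrite the inner minimum $\min_{i\in[k]} h_i(x)/\xi_i$ as a minimum over the simplex: for a vector $a$ with nonnegative entries (here $a_i = h_i(x)$, using that the $h_i$ are nonnegative), one has $\min_i a_i/\xi_i = \min_{\theta\in\Theta(\xi)} \langle\theta,a\rangle/\langle\theta,\xi\rangle$, since the fractional linear function $\theta\mapsto\langle\theta,a\rangle/\langle\theta,\xi\rangle$ is quasilinear on $\Theta(\xi)$ (where $\langle\theta,\xi\rangle>0$) and hence, by \cref{claim:opt-fractlinear}, attains its minimum at a vertex of $\Theta(\xi)$, i.e.\ at a coordinate $e_i$ with $\xi_i\neq0$, giving value $a_i/\xi_i$. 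Therefore
\[
  g(\xi) = \max_{x\in\Pi}\min_{\theta\in\Theta(\xi)} \frac{\langle\theta,h(x)\rangle}{\langle\theta,\xi\rangle}.
\]
Now I would apply the von Neumann minimax theorem (\cref{th:minimax}): the function $(x,\theta)\mapsto \langle\theta,h(x)\rangle/\langle\theta,\xi\rangle$ is continuous on the compact convex sets $\Pi$ and $\Theta(\xi)$, it is quasiconcave in $x$ for fixed $\theta$ (the numerator $\langle\theta,h(x)\rangle$ is a nonnegative combination of the concave functions $h_i$, hence concave, and dividing by the positive constant $\langle\theta,\xi\rangle$ preserves concavity), and it is quasiconvex (indeed quasilinear) in $\theta$ for fixed $x$ as a fractional linear function on the set where the denominator is positive. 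Swapping $\max$ and $\min$ gives $g(\xi) = \min_{\theta\in\Theta(\xi)} \max_{x\in\Pi} \langle\theta,h(x)\rangle/\langle\theta,\xi\rangle = \min_{\theta\in\Theta(\xi)} f(\theta)/\langle\theta,\xi\rangle$, as desired; this also exhibits $g$ as a pointwise minimum of continuous functions of $\xi$ on the appropriate domain, which (with a little care near the boundary where some $\xi_i\to0$, using that such $\theta$ are excluded and $f$ is bounded) gives continuity of $g$.

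For the second formula, I would substitute the first formula into the right-hand side to obtain
\[
  \max_{\xi\in\Xi} g(\xi)\langle\theta,\xi\rangle = \max_{\xi\in\Xi}\max_{x\in\Pi}\min_{\theta'\in\Theta(\xi)} \frac{\langle\theta',h(x)\rangle\langle\theta,\xi\rangle}{\langle\theta',\xi\rangle},
\]
and then observe that the expression is invariant under positive rescaling of $\xi$, so I may replace the maximization over $\Xi$ by maximization over the simplex $\Theta$ (and correspondingly $\Theta(\xi)$ by $\Theta$ via a limiting/continuity argument, which is where boundary cases need attention). After pulling the $\max_{x\in\Pi}$ outside, I would apply the minimax theorem once more to swap $\min_{\theta'}$ and $\max_{\xi'}$: the function $(\xi',\theta')\mapsto \langle\theta',h(x)\rangle\langle\theta,\xi'\rangle/\langle\theta',\xi'\rangle$ is quasilinear in each argument separately (fractional linear). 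For fixed $x,\theta'$, maximizing over $\xi'\in\Theta$ the quantity $\langle\theta,\xi'\rangle/\langle\theta',\xi'\rangle$ times the constant $\langle\theta',h(x)\rangle$: choosing $\xi'$ proportional to $h(x)$ shows this maximum is at least $\langle\theta,h(x)\rangle$, and then minimizing over $\theta'\in\Theta$ the value $\langle\theta,h(x)\rangle$ is attained at $\theta'=\theta$, so the inner $\min$-$\max$ equals exactly $\langle\theta,h(x)\rangle$. Taking $\max_{x\in\Pi}$ recovers $f(\theta)$.

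The main obstacle I expect is not any single minimax swap — those are routine given \cref{th:minimax} and the quasilinearity of fractional linear functions — but rather the careful handling of the degenerate cases where some $\xi_i = 0$: the restricted parameter space $\Theta(\xi)$, the "$+\infty$" convention noted in the paper's notation paragraph, and the continuity of $g$ near the boundary of $\Xi$ all require that one justifies replacing $\Theta(\xi)$ by $\Theta$ (or vice versa) at the right moments, using that $h$ is nonnegative and $f$ is bounded so the excluded directions never achieve the optimum. I would isolate this as a preliminary lemma so that the two minimax computations themselves stay clean.
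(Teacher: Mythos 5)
Your proof of the first identity and of the continuity/convexity claims is essentially the paper's argument: rewrite the inner minimum as $\min_{\theta\in\Theta(\xi)}\langle\theta,h(x)\rangle/\langle\theta,\xi\rangle$ via \cref{claim:opt-fractlinear}, then apply \cref{th:minimax} once. For the second identity, however, you take a genuinely different (and heavier) route than the published proof. You substitute the already-derived $\max$-$\min$ expression for $g$, normalize $\xi$ into the simplex, and invoke the minimax theorem a second time to swap $\min_{\theta'}$ and $\max_{\xi'}$ — which is exactly where the degenerate cases you flag (disjoint supports of $\theta'$ and $\xi'$ making $\langle\theta',\xi'\rangle=0$, so the fractional linear function is not continuous on the full product of simplices) become a real obligation rather than a formality. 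The paper sidesteps all of this: it plugs in only the \emph{definition} of $g(\xi)$, exchanges the two maximizations (trivial), and then shows directly, for each fixed $x$, that $\max_{\xi\in\Xi}\min_{i:\,\xi_i\neq 0}\frac{h_i(x)}{\xi_i}\langle\theta,\xi\rangle=\langle\theta,h(x)\rangle$ by the elementary term-by-term bound $\min_{i:\,\xi_i\neq 0}\frac{h_i(x)}{\xi_i}\,\xi_j\leq h_j(x)$ (using nonnegativity of $h_j$ when $\xi_j=0$), with equality at $\xi$ proportional to $h(x)$. This needs no second minimax, no normalization of $\Xi$, and no boundary analysis. Your approach can be made to work — you would need your "preliminary lemma" to restrict the second minimax to a domain where $\langle\theta',\xi'\rangle$ is bounded away from zero, or to argue by perturbation — but the direct evaluation is both shorter and cleaner, so I would recommend adopting it for the second identity.
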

\begin{proof}
  First of all, the function $f(\theta)$ is continuous and convex since it is the pointwise maximum of the equicontinuous and
  convex functions $\theta \mapsto \langle\theta, h(x)\rangle$.

  Next, we will prove that $g(\xi) = \min_{\theta \in \Theta(\xi)} f(\theta) / \langle\theta, \xi\rangle$.
  By definition we have
  \begin{equation}\label{eq:g}
  g(\xi) = \max_{x \in \Pi} \min_{i \in [k]} \frac{h_i(x)}{\xi_i} = \max_{x \in \Pi} \min_{i \in [k], \xi_i \neq 0} \frac{h_i(x)}{\xi_i}.
  \end{equation}
  From applying \cref{claim:opt-fractlinear} to the quasiconcave function $\theta \mapsto \langle\theta, h(x)\rangle/\langle\theta, \xi\rangle$ on the simplex $\Theta(\xi)$ we get the equation
  \[
  \min_{i \in [k], \xi_i \neq 0} \frac{h_i(x)}{\xi_i} = \min_{\theta \in \Theta(\xi)} \frac{\langle\theta, h(x)\rangle}{\langle\theta, \xi\rangle}
  \]
  and so
  \[
  g(\xi) = \max_{x \in \Pi} \min_{\theta \in \Theta(\xi)} \frac{\langle\theta, h(x)\rangle}{\langle\theta, \xi\rangle}.
  \]
  Then applying the minimax theorem (\cref{th:minimax}) gives
  \[
    g(\xi)
    = \min_{\theta \in \Theta(\xi)} \max_{x \in \Pi}
    \frac{\langle\theta, h(x)\rangle}{\langle\theta, \xi\rangle}
    = \min_{\theta \in \Theta(\xi)}
    \frac{f(\theta)}{\langle\theta, \xi\rangle}.
  \]
  This proves the claim. It also shows that $g(\xi)$ is continuous, since it is the 
  pointwise minimum of the equicontinuous family of continuous functions $\theta \mapsto f(\theta)/\langle \theta, \xi\rangle$.

  We will now prove $f(\theta) = \max_{\xi \in \Xi}\, g(\xi) \langle\theta, \xi\rangle$.
  
  Plugging \eqref{eq:g} into $\max_{\xi \in \Xi}\, g(\xi) \langle\theta, \xi\rangle$ we get
  \[
  \max_{\xi \in \Xi}\, g(\xi) \langle\theta, \xi\rangle
    = \max_{\xi \in \Xi} \max_{x \in \Pi} \min_{i \in [k], \xi_i \neq 0}
    \frac{h_i(x)}{\xi_i} \langle\theta, \xi\rangle.
  \]
  Exchanging the maximizations gives
  \[
    \max_{\xi \in \Xi}\, g(\xi) \langle\theta, \xi\rangle
    = \max_{x \in \Pi} \max_{\xi \in \Xi} \min_{i \in [k], \xi_i \neq 0}
    \frac{h_i(x)}{\xi_i} \langle\theta, \xi\rangle.
  \]
  
  Note that for each $j$ such that $\xi_j \neq 0$ we have 
  \[
  \min_{i \in [k], \xi_i \neq 0} \frac{h_i(x)}{\xi_i} \xi_j
  \leq \frac{h_j(x)}{\xi_j} \xi_j = h_j(x),
  \]
  and if $\xi_j = 0$, then the same inequality $\min_{i \in [k], \xi_i \neq 0} \frac{h_i(x)}{\xi_i} \xi_j \leq h_j(x)$ holds because $h_j(x)$ is nonnegative.
  Therefore
  \[
    \min_{i \in [k], \xi_i \neq 0} \frac{h_i(x)}{\xi_i} \langle\theta, \xi\rangle
    = \sum_{j = 1}^k \left(\theta_j \cdot \min_{i \in [k], \xi_i \neq 0} \frac{h_i(x)}{\xi_i} \xi_j\right)
    \leq \sum_{j = 1}^k \theta_j h_j(x) = \left<\theta, h(x)\right>.
  \]
  This upper bound is attained if $\xi$ is proportional to $h(x)$, that is, $\xi = \frac{h(x)}{\max_{i \in [k]} h_i(x)}$, or arbitrary $\xi$ if $h(x) = 0$.
  Thus
  \[
    \max_{\xi \in \Xi} \min_{i \in [k], \xi_i \neq 0}
    \frac{h_i(x)}{\xi_i} \langle\theta, \xi\rangle = \left<\theta, h(x)\right>
  \]
  and
  \[
    \max_{\xi \in \Xi}\, g(\xi) \langle\theta, \xi\rangle
    = \max_{x \in \Pi} \max_{\xi \in \Xi} \min_{i \in [k], \xi_i \neq 0}
    \frac{h_i(x)}{\xi_i} \langle\theta, \xi\rangle
    = \max_{x \in \Pi} \langle \theta, h(x) \rangle
    = f(\theta).
  \]
  We conclude that the claim holds.
\end{proof}

\subsection{Preordered semirings and the minimax correspondence}
We extend the ideas of the previous section to families of dual pairs that are parametrized by elements of a preordered semiring. This is mostly a notational extension. In our application, the preordered semiring will be the set of tensors with multiplication given by tensor product, addition given by direct sum and preorder given by restriction.  We will then discuss how the parametrization behaves under the minimax correspondence. 

\begin{definition}
Let $T \mapsto \Pi(T)\subseteq \R^n$ be a map that assigns a compact convex set to every element~$T$ from some index set that has defined on it a product $\otimes$, a  sum $\oplus$ and a preorder~$\leq$.\footnote{In our applications this index set will be the set of $k$-tensors, with product given by the tensor product, sum given by the direct sum and preorder given by restriction.} Let $h_1, \ldots, h_k$ be functions $\cup_T \Pi(T) \to \R_{\geq 0}$ that are continuous and quasiconcave, as before. For every $\theta \in \Theta$ and $\xi \in \Xi$ we define the functions
\begin{align*}
f_\theta(T) &= \max_{x \in \Pi(T)} \langle \theta, h(x)\rangle,\\
g_\xi(T) &= \max_{x \in \Pi(T)} \min_{i \in [k]} \frac{h_i(x)}{\xi_i}.
\end{align*}
Note that compared to the notation for dual pairs, in our new notation the parameters $\theta$ and $\xi$ have been moved into the subscript. 
For every $T$ the functions $\xi \mapsto g_\xi(T)$ and $\theta \mapsto f_\theta(T)$ form a dual pair. We say that $(f_\theta, g_\xi)$ is a \emph{dual pair family}.
\end{definition}

Again our main example is given by the quantum functionals and the asymptotic weighted slice ranks: they form a dual pair family.

If $(f_\theta, g_\xi)$ is a dual pair family, then by the minimax correspondence (\cref{legendrevar}) we know that~$f_\theta(T)$ is continuous in $\theta$, $g_\xi(T)$ is continuous in $\xi$, and $f_\theta(T)$ is convex in $\theta$.
Moreover, $f_\theta(T)$ and $g_\xi(T)$ are related via
\begin{align*}
g_\xi(T) & = \min_{\theta \in \Theta(\xi)} \frac{f_\theta(T)}{\langle\theta, \xi\rangle},\\
f_\theta(T) &= \max_{\xi \in \Xi}\, g_\xi(T) \langle\theta, \xi\rangle .
\end{align*}
We define $F_\theta(T) = 2^{f_\theta(T)}$ and $G_\xi(T) = 2^{g_\xi(T)}$. 
The minimax correspondence between $g_\xi(T)$ and~$f_\theta(T)$ allows us to transfer properties:

\begin{theorem}\label{conseq1}
Let $S$ and $T$ be fixed elements from the index set.
\begin{enumerate}[label=\upshape(\roman*)]
    \item\label{item:super-mult} If $F_\theta(S\otimes T) \geq F_\theta(S)F_\theta(T)$ for all $\theta$, then $G_\xi(S\otimes T) \geq G_\xi(S) G_\xi(T)$ for all $\xi$.
    \item\label{item:super-add} If $F_\theta(S\oplus T) \geq F_\theta(S) + F_\theta(T)$ for all $\theta$, then $G_\xi(S\oplus T)\geq G_\xi(S) + G_\xi(T)$ for all $\xi$.
    \item If $G_\xi(S\otimes T) \leq G_\xi(S) G_\xi(T)$ for all $\xi$, then $F_\theta(S\otimes T) \leq F_\theta(S) F_\theta(T)$ for all $\theta$.
    \item If $G_\xi(S\oplus T)\leq G_\xi(S) + G_\xi(T)$ for all $\xi$, then $F_\theta(S\oplus T) \leq F_\theta(S) + F_\theta(T)$ for all $\theta$.
    \item $F_\theta(S) \geq F_\theta(T)$ for all $\theta$ if and only if $G_\xi(S) \geq G_\xi(T)$ for all $\xi$.
    \item $F_\theta(T) = \alpha$ for all $\theta$ if and only if $G_\xi(T) = \alpha$ for all $\xi$.
\end{enumerate}
\end{theorem}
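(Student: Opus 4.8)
The plan is to pass to logarithms throughout: write $f_\theta(T) = \log_2 F_\theta(T)$ and $g_\xi(T) = \log_2 G_\xi(T)$, so that the minimax correspondence (\cref{legendrevar}) supplies the two identities $g_\xi(T) = \min_{\theta \in \Theta(\xi)} f_\theta(T)/\langle\theta,\xi\rangle$ and $f_\theta(T) = \max_{\xi \in \Xi} \langle\theta,\xi\rangle\, g_\xi(T)$. Everything is then deduced by substituting these two formulas and using a handful of elementary facts valid in any dual pair family: $f_\theta(T) \ge 0$ and $g_\xi(T) \ge 0$ (each $h_i$ is nonnegative); $0 \le \langle\theta,\xi\rangle \le 1$ whenever $\theta \in \Theta$, $\xi \in \Xi$ (since $\sum_i \theta_i = 1$ while every $\xi_i \le 1$); and $\max_{\theta \in \Theta(\xi)} \langle\theta,\xi\rangle = 1 = \max_{\xi \in \Xi} \langle\theta,\xi\rangle$ (attained at $\theta = e_j$ for any $j$ with $\xi_j = 1$, respectively at $\xi = (1,\dots,1)$).

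For items (i), (iii), (v), (vi) no new idea is needed. For (i) the hypothesis reads $f_\theta(S \otimes T) \ge f_\theta(S) + f_\theta(T)$, hence $g_\xi(S \otimes T) = \min_\theta \frac{f_\theta(S \otimes T)}{\langle\theta,\xi\rangle} \ge \min_\theta (\frac{f_\theta(S)}{\langle\theta,\xi\rangle} + \frac{f_\theta(T)}{\langle\theta,\xi\rangle}) \ge g_\xi(S) + g_\xi(T)$ by $\min(a+b) \ge \min a + \min b$, and exponentiating gives the claim; item (iii) is the mirror computation with $f_\theta(S \otimes T) = \max_\xi \langle\theta,\xi\rangle g_\xi(S \otimes T)$ and $\max(a+b) \le \max a + \max b$. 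Item (v) holds because both correspondence maps are monotone, being pointwise minima (respectively maxima) of the increasing functions $x \mapsto x^{1/\langle\theta,\xi\rangle}$ (respectively $x \mapsto x^{\langle\theta,\xi\rangle}$). For item (vi) one direction is the computation $G_\xi(T) = \min_{\theta \in \Theta(\xi)} \alpha^{1/\langle\theta,\xi\rangle} = \alpha$ (using $\alpha = F_\theta(T) \ge 1$, which holds since $F_\theta \ge 1$, together with $\max_{\theta\in\Theta(\xi)}\langle\theta,\xi\rangle = 1$), and the other is the symmetric $F_\theta(T) = \max_{\xi \in \Xi}\alpha^{\langle\theta,\xi\rangle} = \alpha$.

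The substance of the theorem, and the step I expect to be the main obstacle, is items (ii) and (iv): the direct-sum statements, where the asymmetry between the $\min$-over-$\theta$ and the $\max$-over-$\xi$ formula really matters. I would treat both using the binary entropy function $h(\eps) = -\eps\log_2\eps - (1-\eps)\log_2(1-\eps)$ and the elementary identity $a + b = \max_{0 \le \eps \le 1} a^\eps b^{1-\eps} 2^{h(\eps)}$ for $a, b \ge 0$, the maximum being attained at $\eps = a/(a+b)$. For item (ii): from the hypothesis, $G_\xi(S \oplus T) \ge \min_\theta (F_\theta(S) + F_\theta(T))^{1/\langle\theta,\xi\rangle} = \min_\theta \max_\eps (F_\theta(S)^\eps F_\theta(T)^{1-\eps} 2^{h(\eps)})^{1/\langle\theta,\xi\rangle}$; apply the free inequality $\min\max \ge \max\min$, then for fixed $\eps$ push the minimum over $\theta$ through the exponent using $\min(a+b+c) \ge \min a + \min b + \min c$, the nonnegativity of $\eps f_\theta(S)$, $(1-\eps)f_\theta(T)$, $h(\eps)$, and $\min_\theta 1/\langle\theta,\xi\rangle = 1$; this lands on $\max_\eps G_\xi(S)^\eps G_\xi(T)^{1-\eps} 2^{h(\eps)} = G_\xi(S) + G_\xi(T)$. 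For item (iv), dually: $F_\theta(S \oplus T) = \max_\xi G_\xi(S \oplus T)^{\langle\theta,\xi\rangle} \le \max_\xi \max_\eps (G_\xi(S)^\eps G_\xi(T)^{1-\eps} 2^{h(\eps)})^{\langle\theta,\xi\rangle}$, and expanding the exponent and invoking the pointwise bounds $\langle\theta,\xi\rangle g_\xi(S) \le f_\theta(S)$, $\langle\theta,\xi\rangle g_\xi(T) \le f_\theta(T)$ (immediate from $f_\theta = \max_\xi \langle\theta,\xi\rangle g_\xi$) together with $\langle\theta,\xi\rangle h(\eps) \le h(\eps)$, each term is at most $F_\theta(S)^\eps F_\theta(T)^{1-\eps} 2^{h(\eps)}$, so the whole is at most $\max_\eps F_\theta(S)^\eps F_\theta(T)^{1-\eps} 2^{h(\eps)} = F_\theta(S) + F_\theta(T)$. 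The genuinely delicate point in (ii) and (iv) is that the exponents $1/\langle\theta,\xi\rangle$ and $\langle\theta,\xi\rangle$ depend on the variable one is optimizing over, so the entropy smoothing and the min/max interchange must be set up so that only the monotone, sign-definite quantities $f_\theta$, $g_\xi$, $h$ ever get rescaled; it is precisely the inequalities $0 < \langle\theta,\xi\rangle \le 1$ and $h \ge 0$, $f_\theta, g_\xi \ge 0$ that make this bookkeeping close.
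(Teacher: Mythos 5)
Your proposal is correct and follows essentially the same route as the paper: the minimax correspondence identities for $f_\theta$ and $g_\xi$, separation of the minimization for the tensor-product items, and the identity $a+b=\max_{0\le\eps\le1}a^{\eps}b^{1-\eps}2^{h(\eps)}$ combined with an exchange of $\min_\theta$ and $\max_\eps$ for the direct-sum items. The one (harmless) deviation is that in item (ii) you use only the trivial weak-duality inequality $\min\max\ge\max\min$ where the paper invokes the full von Neumann minimax theorem; since only the $\ge$ direction is needed there, your version is a slight simplification, and you also spell out items (iii)--(vi), which the paper dismisses with ``the other proofs go similarly.''
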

\begin{proof}
  \ref{item:super-mult} We will show $G_\xi(S \otimes T) \geq G_\xi(S) G_\xi(T)$ by showing that $g_\xi(T_1 \otimes T_2) \geq g_\xi(T_1) + g_\xi(T_2)$. We apply the minimax correspondence and then the assumption $f_\theta(S \otimes T) \geq f_\theta(S) + f_\theta(T)$ to get
  \[
    \sr_{\xi}(S \otimes T)
    = \min_{\theta \in \Theta(\xi)} \frac{f_{\theta}(S \otimes T)}{\langle \theta, \xi\rangle}
    \geq \min_{\theta \in \Theta(\xi)} \frac{f_{\theta}(S) + f_{\theta}(T)}{\langle \theta, \xi\rangle}
  \]
    Separating the minimization over $\theta$ into two parts and then applying the minimax correspondence again, we get
  \begin{multline*}
    \min_{\theta \in \Theta(\xi)} \frac{f_{\theta}(S) + f_{\theta}(T)}{\langle \theta, \xi\rangle} = \min_{\theta \in \Theta(\xi)} \left[\frac{f_{\theta}(S)}{\langle \theta, \xi \rangle}
    + \frac{f_{\theta}(T)}{\langle \theta, \xi\rangle} \right] \\
    \geq \min_{\theta \in \Theta(\xi)} \frac{f_{\theta}(S)}{\langle \theta, \xi \rangle}
    + \min_{\theta \in \Theta(\xi)} \frac{f_{\theta}(T)}{\langle \theta, \xi \rangle}
    = \sr_{\xi}(S) + \sr_{\xi}(T).
  \end{multline*}
  This proves the claim.
  
  \ref{item:super-add} For this proof we use the general fact that $x + y = \max_{0 \leq \eps \leq 1} x^\eps y^{1-\eps} 2^{h(\eps)}$ for any nonnegative reals $x,y$ \cite[Eq.~2.13]{strassen1991degeneration}, where $h$ denotes the binary entropy as usual. We know that $F_\theta(T_1 \oplus T_2) \geq F_\theta(T_1) + F_\theta(T_2)$ and so
\begin{align*}
\SR_\xi(T_1 \oplus T_2) &= \min_{\theta \in \Theta(\xi)} F_\theta(T_1 \oplus T_2)^{\tfrac{1}{\langle \theta, \xi\rangle}} \geq \min_{\theta \in \Theta(\xi)} ( F_\theta(T_1) + F_\theta(T_2))^{\tfrac{1}{\langle \theta, \xi\rangle}}.
\end{align*}
From the aforementioned general fact it follows that
\[
\min_{\theta \in \Theta(\xi)} ( F_\theta(T_1) + F_\theta(T_2))^{\tfrac{1}{\langle \theta, \xi\rangle}}
= \min_{\theta \in \Theta(\xi)} \max_{0 \leq \eps \leq 1} (F_\theta(T_1)^\eps F_\theta(T_2)^{1-\eps} 2^{h(\eps)})^{\tfrac{1}{\langle \theta, \xi\rangle}}.
\]
Now we apply the von Neumann minimax theorem (after taking the logarithm) to get that 
\begin{align*}
\min_{\theta \in \Theta(\xi)} \max_{0 \leq \eps \leq 1} (F_\theta(T_1)^\eps F_\theta(T_2)^{1-\eps} 2^{h(\eps)})^{\tfrac{1}{\langle \theta, \xi\rangle}}
&= \max_{0 \leq \eps \leq 1} \min_{\theta \in \Theta(\xi)} (F_\theta(T_1)^\eps F_\theta(T_2)^{1-\eps} 2^{h(\eps)})^{\tfrac{1}{\langle \theta, \xi\rangle}}.
\end{align*}
We pull the minimization over $\theta$ inside the product and then apply the minimax correspondence to get
\begin{align*}
  &\phantom{=}\max_{0 \leq \eps \leq 1} \min_\theta (F_\theta(T_1)^\eps F_\theta(T_2)^{1-\eps} 2^{h(\eps)})^{\tfrac{1}{\langle \theta, \xi\rangle}} \\&\geq \max_{0 \leq \eps \leq 1} \min_\theta (F_\theta(T_1)^\eps)^{\tfrac{1}{\langle \theta, \xi\rangle}} \min_\theta(F_\theta(T_2)^{1-\eps})^{\tfrac{1}{\langle \theta, \xi\rangle}} \min_\theta(2^{h(\eps)})^{\tfrac{1}{\langle \theta, \xi\rangle}}\\
&= \max_{0 \leq \eps \leq 1} \SR_\xi(T_1)^\eps \SR_\xi(T_2)^{1-\eps}  \min_\theta (2^{h(\eps)})^{\tfrac{1}{\langle \theta, \xi\rangle}}\\
&\geq \max_{0 \leq \eps \leq 1} \SR_\xi(T_1)^\eps \SR_\xi(T_2)^{1-\eps}   2^{h(\eps)}.
\end{align*}
Another application of $x + y = \max_{0 \leq \eps \leq 1} x^\eps y^{1-\eps} 2^{h(\eps)}$ gives
\begin{align*}
\max_{0 \leq \eps \leq 1} \SR_\xi(T_1)^\eps \SR_\xi(T_2)^{1-\eps}   2^{h(\eps)}
&= \SR_\xi(T_1) + \SR_\xi(T_2).
\end{align*}
This proves the claim.

The proofs for the other claims of the theorem go similarly.
\end{proof}

\section{Quantum functionals over arbitrary fields}\label{sec:quantum-functionals}

The quantum functionals are a family of functions on complex tensors. We discussed their special properties already in the introduction and will recall them now. For $\theta \in \Theta$ and any complex tensor~$T$ the quantum functionals are defined by 
\[
F_\theta(T) \coloneqq \max_{p \in \Pi(T)} 2^{\langle \theta, H(p)\rangle}.
\]
Their key properties~\cite{DBLP:conf/stoc/ChristandlVZ18,cvzjournal} are
\begin{itemize}
\item
Monotonicity under restriction: $S \leq T \Rightarrow F_\theta(S) \leq F_\theta(T)$,
\item
Additivity under direct sum: $F_{\theta}(S \oplus T) = F_{\theta}(S) + F_{\theta}(T)$,
\item
Multiplicativity under tensor product: $F_{\theta}(S \otimes T) = F_{\theta}(S) \cdot F_{\theta}(T)$,
\item
Normalization on unit tensors: $F_{\theta}(I_n) = n$ for any $n \in \N$ where $I_n = \sum_{i=1}^n e_i \otimes e_i \otimes e_i$. 
\end{itemize}
Functions with the above properties are called \emph{universal spectral points} \cite{strassen1988asymptotic}. Such universal spectral points are powerful tools for the study of, for example, barriers for fast matrix multiplication~\cite{christandl2020barriers}, and other question about large tensor powers of tensors.

The above construction of universal spectral points does not allow immediate extension to tensors over other fields than the complex numbers. The purpose of this section is to construct a family of functionals on tensors over arbitrary fields which we conjecture to be universal spectral points, and give evidence for the validity of this conjecture.

The construction is as follows. Fix some field~$K$ and let $T$ be a tensor over~$K$.
Let $\SR_\xi(T)$ denote the asymptotic $\xi$-weighted slice rank of $T$. 
We define the \emph{quantum functionals over~$K$} by 
\[
F^K_{\theta}(T) \coloneqq \max_{\xi \in \Xi} \SR_{\xi}(T)^{\langle \theta, \xi\rangle}.
\]

\begin{proposition}\label{prop:mon-norm}
 The $F^K_{\theta}$ are monotone and normalized.
\end{proposition}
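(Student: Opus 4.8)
The plan is to verify the two properties separately, each reducing directly to basic facts about the weighted slice rank from \cref{sec:weighted-slice-rank} together with the elementary observation that $\max_{\xi \in \Xi} \langle\theta,\xi\rangle = 1$ for every $\theta \in \Theta$ (indeed $\langle\theta,\xi\rangle = \sum_i \theta_i\xi_i \leq \sum_i \theta_i = 1$, with equality attained at $\xi = (1,1,1) \in \Xi$).

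\textbf{Monotonicity.} Assume $S \leq T$. First I would recall that restriction is compatible with Kronecker powers, so $S^{\otimes n} \leq T^{\otimes n}$ for every $n$, and that the weighted slice rank is monotone under restriction (\cref{lem:sr-basic}\ref{lem:srmon}); combining these gives $S_{\xi}(S^{\otimes n}) \leq S_{\xi}(T^{\otimes n})$ for all $n$ and all $\xi \in \Xi$. Taking $n$-th roots and passing to the $\limsup$ yields $\SR_{\xi}(S) \leq \SR_{\xi}(T)$ for every $\xi$. Since $\langle\theta,\xi\rangle \geq 0$ and $\SR_{\xi} \geq 1$ on nonzero tensors, the map $x \mapsto x^{\langle\theta,\xi\rangle}$ is nondecreasing on $[1,\infty)$, so $\SR_{\xi}(S)^{\langle\theta,\xi\rangle} \leq \SR_{\xi}(T)^{\langle\theta,\xi\rangle}$; taking the maximum over $\xi \in \Xi$ gives $F^{K}_{\theta}(S) \leq F^{K}_{\theta}(T)$.

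\textbf{Normalization.} I would first observe that $\SR_{\xi}(I_n) = n$ for every $\xi \in \Xi$: since $I_n$ has format $n \times n \times n$ with $S_{(1,1,1)}(I_n) = n$, \cref{lem:sr-basic}\ref{lem:all-max} gives $S_{\xi}(I_n) = n$ for all $\xi \in \Xi$, and applying this to the identity $I_n^{\otimes k} = I_{n^k}$ gives $S_{\xi}(I_n^{\otimes k}) = n^k$, hence $\SR_{\xi}(I_n) = \lim_k (n^k)^{1/k} = n$. Therefore $F^{K}_{\theta}(I_n) = \max_{\xi \in \Xi} n^{\langle\theta,\xi\rangle}$, and since $n \geq 1$ and $\max_{\xi\in\Xi}\langle\theta,\xi\rangle = 1$, this maximum equals $n$.

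The argument involves no real difficulty; the only points needing a little care are the degenerate weightings with some $\xi_i = 0$ (where $S_{\xi}$ may take the value $+\infty$, but this never affects the inequalities above) and the trivial case of the zero tensor, together with checking that the value $\SR_{\xi}(I_n) = n$ is genuinely uniform in $\xi$. In particular there is no step here that I expect to be a real obstacle — the substance of extending the quantum functionals to arbitrary fields lies in \cref{conj:intro}, not in this proposition.
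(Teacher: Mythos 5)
Your proof is correct and follows the same route as the paper's (much terser) argument: monotonicity and normalization of the weighted slice rank $S_\xi$ (\cref{lem:sr-basic}) pass to the asymptotic version $\SR_\xi$ and then to $F^K_\theta$ via $\max_{\xi\in\Xi}\langle\theta,\xi\rangle=1$. You have simply spelled out the details that the paper leaves implicit.
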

\begin{proof}
We know that the weighted slice rank $S_\xi$ is monotone and normalized (\cref{subsec:basic-prop}). Therefore, the same holds for the asymptotic weighted slice rank $G_\xi$, and in turn the same holds for the function~$F_\theta^K$.
\end{proof}
\begin{conjecture}\label{conj:main}
 The $F^K_{\theta}$ are additive and multiplicative. 
\end{conjecture}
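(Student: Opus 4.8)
The plan is to reduce the conjecture to a single structural input --- a moment-polytope description of the asymptotic weighted slice rank over $K$ --- and then run the complex-field argument through the dual-pair machinery of \cref{sec:correspondence}. Concretely, the target is to produce, for every tensor $T$ over $K$, a compact convex set $\Pi^K(T) \subseteq \R^{m_1 + m_2 + m_3}$ of triples of probability vectors such that
\[
  \SR_{\xi}(T) = \max_{p \in \Pi^K(T)} \min_{i \in [3]} 2^{H(p_i)/\xi_i}
\]
for every $\xi \in \Xi$, playing the role that the moment polytope $\Pi(T)$ plays over $\C$ in \cref{thm:sr-polytope} and that $W(T)$ plays for tight tensors in \cref{thm:sr-tight}. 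Given such a description, for each fixed $T$ the pair $(\theta \mapsto \log F^K_{\theta}(T),\ \xi \mapsto \log \SR_{\xi}(T))$ is a dual pair with $\Pi = \Pi^K(T)$ and the concave maps $h_i(p) = H(p_i)$, so the minimax correspondence (\cref{legendrevar}) turns the definition $F^K_{\theta}(T) = \max_{\xi} \SR_{\xi}(T)^{\langle\theta,\xi\rangle}$ into the quantum-functional form $F^K_{\theta}(T) = \max_{p \in \Pi^K(T)} 2^{\langle \theta, H(p)\rangle}$.

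With $F^K_\theta$ in this form, additivity and multiplicativity follow exactly as for tight tensors. For multiplicativity, super-multiplicativity holds because $\Pi^K(S \otimes T)$ contains every Kronecker point $(p_1 \otimes q_1, p_2 \otimes q_2, p_3 \otimes q_3)$ with $p \in \Pi^K(S)$, $q \in \Pi^K(T)$, and $H(p_i \otimes q_i) = H(p_i) + H(q_i)$; sub-multiplicativity follows by \cref{conseq1} once one checks $\SR_\xi(S \otimes T) \le \SR_\xi(S)\,\SR_\xi(T)$ for all $\xi$, which via the polytope description reduces to the statement that every point of $\Pi^K(S \otimes T)$ is coordinate-wise entropy-dominated by a Kronecker point --- i.e.\ to subadditivity of Shannon entropy under marginalization. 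Additivity is the same argument over the direct sum, using that $\Pi^K(S \oplus T)$ is the ``scaled convex hull'' of $\Pi^K(S)$ and $\Pi^K(T)$ together with the identity $x + y = \max_{0 \le \eps \le 1} x^{\eps} y^{1-\eps} 2^{h(\eps)}$. Two instances of this plan already go through and should be recorded first: over $\C$ the description is \cref{thm:sr-polytope}, so $F^{\C}_\theta$ coincides with the complex quantum functional $F_\theta$ (by \cref{thm:f-sr-intro}) and inherits its additivity and multiplicativity; and for a tight tensor $T$ over any $K$ the description is \cref{thm:sr-tight} with $\Pi^K(T) = W(T)$, and the identities $\supp(S \otimes T) = \supp(S) \times \supp(T)$ and $\supp(S \oplus T) = \supp(S) \sqcup \supp(T)$ make all of the above polytope inclusions explicit --- the ``$\ge$'' directions from product distributions, the ``$\le$'' directions from pushing a distribution forward to its two factors.

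The hard part is the moment-polytope description of $\SR_\xi$ over a general field. In characteristic $0$ this can largely be sidestepped: a fixed tensor has entries in a finitely generated extension of $\Q$, which embeds into $\C$, and slice rank is insensitive to such field extensions, so the complex case covers it. The genuinely open case is positive characteristic, where the proof of \cref{thm:sr-polytope} collapses: it rests on Schur--Weyl duality and on the identification of the semigroup of nonzero isotypic components with the rational points of the moment polytope (\cref{thm:orbit-semigroup}), neither of which survives in characteristic $p$. The weight decomposition used for tight tensors in \cref{subsec:tight} is still available, but it only produces the basis-dependent polytope $W(T)$, and the matching lower bound (\cref{lem:sr-tight-lower}) genuinely uses tightness via the semistability test. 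I expect the main obstacle to be constructing a characteristic-$p$ substitute for the moment polytope: either a finer $S_k$-equivariant decomposition of $T^{\otimes n}$ whose occurring ``types'' form a semigroup cutting out the correct polytope and whose components are all semistable, or an algebraic-geometry argument showing that a generic change of basis over $\overline{K}$ shrinks the effective support of a suitable nonzero restriction of $T^{\otimes n}$ to exactly what the true moment polytope predicts --- in effect a modular replacement for the orbit-closure description $\Pi(T) = \{(r(S_{(1)}),r(S_{(2)}),r(S_{(3)})) : S \in \overline{G \cdot T},\ S \neq 0\}$. Short of that, a natural intermediate goal is to push the tight-tensor argument up to tensors with \emph{oblique} support (where the $\xi = (1,1,1)$ case is already known, cf.\ the remark after \cref{thm:sr-tight-intro}) and to other structured families, accumulating further evidence.
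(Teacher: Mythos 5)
The statement you are addressing is stated in the paper as a \emph{conjecture}, and the paper does not prove it; it only supplies two pieces of evidence, namely that $F^K_\theta$ is additive and multiplicative for $K=\C$ (via \cref{thm:f-sr}, i.e.\ the identification of $F^\C_\theta$ with the complex quantum functionals through the minimax correspondence) and for tight tensors over arbitrary fields (via \cref{thm:f-sr-tight}, identifying $F^K_\theta$ with Strassen's support functionals). Your proposal is likewise not a proof but a reduction plan, and you say so explicitly. The parts of your plan that actually close are exactly the paper's two pieces of evidence, obtained by exactly the paper's mechanism: exhibit a polytope description of $\SR_\xi(T)$ of the form $\max_{p}\min_i 2^{H(p_i)/\xi_i}$, recognize a dual pair, and invoke \cref{legendrevar} and \cref{conseq1}. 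Your diagnosis of where the general case breaks --- no Schur--Weyl decomposition or semigroup/moment-polytope identification in positive characteristic, and the weight decomposition only yielding the basis-dependent $W(T)$ with a lower bound that genuinely needs tightness --- is accurate and matches the structure of \cref{subsec:complex,subsec:tight}.

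One substantive claim in your proposal goes beyond the paper and is itself unjustified: that characteristic $0$ ``can largely be sidestepped'' because a tensor over a finitely generated extension of $\Q$ embeds into $\C$ and ``slice rank is insensitive to such field extensions.'' Extending the field can only enlarge the set of available slice decompositions, so $S_\xi$ (and hence $\SR_\xi$) can only decrease under extension; the reverse inequality is the nontrivial direction, and neither the paper nor standard references establish invariance of (asymptotic) weighted slice rank under field extension. The paper's evidence is for $K=\C$ specifically, not for arbitrary characteristic-$0$ fields, so this step would need its own argument. With that caveat, your proposal should be read as a correct account of the known cases plus an honest statement of the open problem, not as a proof of the conjecture.
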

\cref{prop:mon-norm} and \cref{conj:main} (if true) combined say that the $F_\theta^K$ are universal spectral points for tensors over $K$.

We discuss two pieces of evidence for \cref{conj:main} in the following subsections. The first is that the conjecture is true over the complex numbers, since then $F^\C_\theta$ coincides with the quantum functionals $F_\theta$ from~\cite{DBLP:conf/stoc/ChristandlVZ18,cvzjournal}. The second is that the conjecture is true for the subclass of tight tensors over arbitrary fields, since then $F^K_\theta$ coincides with a family of tensor parameters called the support functionals~\cite{strassen1991degeneration}. Both proofs will use the minimax correspondence of~\cref{sec:correspondence}.

Finally, in \cref{subsec:min-entropy}, as another application of the minimax correspondence, we introduce and discuss the min-entropy quantum functionals.

\subsection{Arbitrary tensors over the complex numbers}

\begin{theorem}\label{thm:f-sr}
  The quantum functionals and asymptotic slice ranks are related as follows.
  \[
    \SR_{\xi}(T) = \min_{\theta \in \Theta(\xi)}\, F_{\theta}(T)^{1/\langle \theta, \xi\rangle},\quad
    F_{\theta}(T) = \max_{\xi \in \Xi}\, \SR_{\xi}(T)^{\langle \theta, \xi\rangle}.
  \]
\end{theorem}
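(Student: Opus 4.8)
The plan is to deduce Theorem~\ref{thm:f-sr} directly from the moment-polytope description of the asymptotic weighted slice rank (\cref{thm:sr-polytope}) together with the abstract minimax correspondence for dual pairs (\cref{legendrevar}); essentially all the analytic content has already been isolated in those two results, so the remaining work is to observe that the quantum functionals and asymptotic weighted slice ranks literally fit the dual pair template.

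First I would fix the tensor $T$ and pass to logarithms. Writing $\sr_\xi(T) = \log_2 \SR_\xi(T)$ and $f_\theta(T) = \log_2 F_\theta(T)$, \cref{thm:sr-polytope} gives $\sr_\xi(T) = \max_{p \in \Pi(T)} \min_{i\in[3]} H(p_i)/\xi_i$, while the definition of the quantum functional gives $f_\theta(T) = \max_{p\in\Pi(T)} \langle \theta, (H(p_1),H(p_2),H(p_3))\rangle$. Thus, setting $\Pi \coloneqq \Pi(T)$ and $h_i(p) \coloneqq H(p_i)$ for $i\in[3]$, the functions $\theta \mapsto f_\theta(T)$ and $\xi \mapsto \sr_\xi(T)$ are precisely the two functions attached to the data $(\Pi, h_1, h_2, h_3)$ in \cref{def:conjugate-pair}.

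Next I would verify that this data meets the hypotheses of \cref{def:conjugate-pair}. The moment polytope $\Pi(T)$ is a compact convex set, being a polytope. Each $h_i$ is the composition of the linear projection $(p_1,p_2,p_3)\mapsto p_i$ with the Shannon entropy $H$, which on the probability simplex is continuous, concave and nonnegative; hence $h_i$ is continuous, concave and nonnegative on $\Pi(T)$. Therefore $(\,\theta\mapsto f_\theta(T)\,,\ \xi\mapsto \sr_\xi(T)\,)$ is a dual pair, and \cref{legendrevar} applies verbatim, yielding
\[
  \sr_\xi(T) = \min_{\theta\in\Theta(\xi)} \frac{f_\theta(T)}{\langle\theta,\xi\rangle}, \qquad
  f_\theta(T) = \max_{\xi\in\Xi} \sr_\xi(T)\,\langle\theta,\xi\rangle.
\]
Exponentiating, and using that $t\mapsto 2^t$ is increasing (so it commutes with the outer $\min$ and $\max$) together with $2^{c/\langle\theta,\xi\rangle} = (2^c)^{1/\langle\theta,\xi\rangle}$ and $2^{c\langle\theta,\xi\rangle} = (2^c)^{\langle\theta,\xi\rangle}$, gives the two claimed identities $\SR_\xi(T) = \min_{\theta\in\Theta(\xi)} F_\theta(T)^{1/\langle\theta,\xi\rangle}$ and $F_\theta(T) = \max_{\xi\in\Xi} \SR_\xi(T)^{\langle\theta,\xi\rangle}$.

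I do not expect any real obstacle here: the difficulty has been front-loaded into \cref{thm:sr-polytope} (the Schur--Weyl/representation-theoretic estimates and Fekete's lemma) and into \cref{legendrevar} (the von Neumann minimax theorem applied to the quasilinear ratios $\langle\theta,h(x)\rangle/\langle\theta,\xi\rangle$). The only points worth a sentence of care are the concavity of the marginal-entropy maps $h_i$, and the appearance of the restricted simplex $\Theta(\xi)$ rather than all of $\Theta$: the latter is exactly what keeps $\langle\theta,\xi\rangle$ nonzero in the denominator, and it is consistent with the convention in the Notation paragraph that terms with $\xi_i = 0$ are dropped (equivalently treated as $+\infty$).
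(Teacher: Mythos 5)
Your proposal is correct and follows exactly the paper's own route: identify $(\Pi(T), h_i = H(p_i))$ as the data of a dual pair, invoke \cref{thm:sr-polytope} to recognize $g(\xi)$ as $\sr_\xi(T)$, and apply \cref{legendrevar} before exponentiating. The extra care you take in checking concavity of the marginal entropies and the role of $\Theta(\xi)$ is consistent with, and slightly more explicit than, the paper's two-line argument.
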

\begin{proof}
We will use \cref{legendrevar}.
Let $\Pi$ be the moment polytope of $T$. For $p \in \Pi$ let $h_i(p) = H(p_i)$ be the Shannon entropy of the $i$th component of $p$. Then $h_i(p)$ is a continuous and concave function. We see that $f(\theta)$ is the logarithmic quantum functional and by Theorem \ref{thm:sr-polytope} we see that $g(\xi)$ is the asymptotic $\xi$-weighted slice rank. \cref{legendrevar} proves the claim.
\end{proof}
In particular, \cref{thm:f-sr} implies that \cref{conj:main} holds for $K = \C$.

Moreover, from \cref{thm:f-sr} and the general \cref{conseq1} it follows that:

\begin{corollary}\label{cor:sr-super}
  The asymptotic $\xi$-weighted slice rank $\SR_\xi$ on complex tensors is monotone, normalized, super-multiplicative, and super-additive.
\end{corollary}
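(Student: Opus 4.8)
The plan is to obtain all four properties as an immediate consequence of the minimax correspondence \cref{thm:f-sr} together with the general transfer principle \cref{conseq1}, feeding in the known properties of the quantum functionals from \cite{DBLP:conf/stoc/ChristandlVZ18}.

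First I would record that, by \cref{thm:sr-polytope} (equivalently by \cref{thm:f-sr}), the logarithmic quantum functionals $f_\theta(T) = \log_2 F_\theta(T)$ and the logarithmic asymptotic weighted slice ranks $g_\xi(T) = \log_2 \SR_\xi(T)$ form a dual pair family in the sense of \cref{sec:correspondence}: both are built from the moment polytope $\Pi(T)$ and the continuous concave functions $h_i(p) = H(p_i)$. Since the complex tensors form a preordered semiring under tensor product, direct sum, and restriction, \cref{conseq1} applies with $F_\theta$ the quantum functionals and $G_\xi = \SR_\xi$.

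Next I would invoke the main theorem on the quantum functionals \cite{DBLP:conf/stoc/ChristandlVZ18}: for all complex tensors $S, T$ and every $\theta \in \Theta$ one has $F_\theta(S \otimes T) = F_\theta(S) F_\theta(T)$, $F_\theta(S \oplus T) = F_\theta(S) + F_\theta(T)$, $S \leq T \Rightarrow F_\theta(S) \leq F_\theta(T)$, and $F_\theta(I_n) = n$. In particular the (inequality) hypotheses of \cref{conseq1}\ref{item:super-mult} and \cref{conseq1}\ref{item:super-add} are satisfied, and these items yield super-multiplicativity $\SR_\xi(S \otimes T) \geq \SR_\xi(S) \SR_\xi(T)$ and super-additivity $\SR_\xi(S \oplus T) \geq \SR_\xi(S) + \SR_\xi(T)$ for every $\xi \in \Xi$. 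Monotonicity follows from item~(v) of \cref{conseq1} applied to the monotonicity of $F_\theta$ (alternatively it is immediate from \cref{lem:sr-basic}\ref{lem:srmon} by taking Kronecker powers and passing to the limit), and normalization follows from item~(vi) of \cref{conseq1} applied to $F_\theta(I_n) = n$, giving $\SR_\xi(I_n) = n$ for all $\xi$ (also immediate, since $S_\xi(I_n^{\otimes k}) = S_\xi(I_{n^k}) = n^k$ as the slice rank of a diagonal tensor equals its size).

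There is no genuine obstacle here; the substantive work has already been carried out in \cref{thm:f-sr} and in the proof of \cref{conseq1}. The only steps that demand a line of care are the bookkeeping that identifies $(F_\theta, \SR_\xi)$ with the dual pair family to which \cref{conseq1} is applied, and the observation that only the inequality forms of multiplicativity and additivity of the $F_\theta$ are needed to run items~(i) and~(ii).
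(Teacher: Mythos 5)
Your proof is correct and follows exactly the paper's intended route: \cref{cor:sr-super} is stated there as an immediate consequence of \cref{thm:f-sr} combined with the transfer principle \cref{conseq1}, which is precisely the argument you give. The alternative direct verifications of monotonicity and normalization are a harmless bonus.
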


\subsection{Tight tensors over arbitrary fields}

Let $K$ be an arbitrary field. Recall that a tensor $T \in K^{m_1 \times m_2 \times m_3}$ is called \emph{tight} in the standard bases if there exist injective maps $u_i \colon [m_i] \to \Z$ such that the sum $u_1(j_1) + u_2(j_2) + u_3(j_3)$ is constant when we vary over all $(j_1, j_2, j_3) \in \supp(T)$.
For tight tensors over an arbitrary field 
the support functionals of \cite{strassen1991degeneration} are defined as
\[
\zeta_{\theta}(T) = \max_{p \in W(T)} 2^{\langle\theta, H(p)\rangle},
\]
where $W(T)$ is the polytope defined in \cref{subsec:tight}.
For tight tensors these functionals are spectral points, that is, they are monotone, multiplicative, additive and normalised \cite{strassen1991degeneration}\footnote{In fact, this is true for the slightly larger class of \emph{oblique} tensors. We refer to \cite{strassen1991degeneration} and \cite{cvzjournal} for details.}.

\begin{theorem}\label{thm:f-sr-tight}
  Let $T$ be a tight tensor.
  The support functionals and asymptotic slice ranks of $T$ are related as follows:
  \[
    \SR_{\xi}(T) = \min_{\theta \in \Theta(\xi)}\, \zeta_{\theta}(T)^{1/\langle \theta, \xi\rangle},\quad
    \zeta_{\theta}(T) = \max_{\xi \in \Xi}\, \SR_{\xi}(T)^{\langle \theta, \xi\rangle}.
  \]
\end{theorem}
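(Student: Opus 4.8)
The plan is to obtain \cref{thm:f-sr-tight} as an instance of the general minimax correspondence (\cref{legendrevar}), in exactly the same way that \cref{thm:f-sr} follows from it in the complex case, but with the moment polytope $\Pi(T)$ replaced by the support polytope $W(T)$. Concretely, I would fix the tight tensor $T$ and instantiate \cref{def:conjugate-pair} with $\Pi \coloneqq W(T)$ and, for $p = (p_1, p_2, p_3) \in W(T)$, with $h_i(p) \coloneqq H(p_i)$ the Shannon entropy of the $i$th marginal. The set $W(T)$ is a polytope, hence compact and convex: it is the image of the simplex of probability distributions on the finite set $\supp(T)$ under the three linear marginal maps. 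Each $h_i$ is continuous and concave, since Shannon entropy is. Thus the pair $(f, g)$ with $f(\theta) = \max_{p \in W(T)} \langle \theta, h(p)\rangle$ and $g(\xi) = \max_{p \in W(T)} \min_{i} h_i(p)/\xi_i$ is a dual pair.

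Next I would identify both members of this dual pair. From the definition of the support functionals we have $f(\theta) = \log_2 \zeta_\theta(T)$ directly. From \cref{thm:sr-tight} --- which is precisely the place where the tightness hypothesis is used --- we have $\SR_\xi(T) = \max_{p \in W(T)} \min_i 2^{H(p_i)/\xi_i}$, that is, $g(\xi) = \log_2 \SR_\xi(T)$. Feeding these identifications into the relations $g(\xi) = \min_{\theta \in \Theta(\xi)} f(\theta)/\langle\theta,\xi\rangle$ and $f(\theta) = \max_{\xi \in \Xi} g(\xi)\langle\theta,\xi\rangle$ provided by \cref{legendrevar}, and exponentiating base $2$, yields the two claimed relations between $\SR_\xi(T)$ and $\zeta_\theta(T)$.

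I do not expect any genuine obstacle: all of the substantive content has already been established, the hard tensor-theoretic input residing in \cref{thm:sr-tight} (the $W(T)$-description of the asymptotic weighted slice rank for tight tensors, which rests on the semistability test and the weight decomposition) and the remaining work being the convex-analytic machinery of \cref{legendrevar}. The only points that need checking --- and they are all immediate --- are that $W(T)$ and the maps $h_i = H(p_i)$ meet the hypotheses of \cref{def:conjugate-pair}: compactness and convexity of $W(T)$, and continuity and concavity of the $h_i$. In short, the proof will be a near-verbatim transcription of the proof of \cref{thm:f-sr}, with $\Pi(T)$ replaced by $W(T)$ and $F_\theta$ replaced by $\zeta_\theta$.
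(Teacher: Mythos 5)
Your proposal is correct and matches the paper's proof essentially verbatim: the paper likewise observes that \cref{thm:sr-tight} identifies $(\log\zeta_\theta, g_\xi)$ as a dual pair (family) on the polytope $W(T)$ with $h_i(p)=H(p_i)$, and then invokes the minimax correspondence of \cref{legendrevar}. Your additional checks (compactness and convexity of $W(T)$, continuity and concavity of the entropies) are exactly the hypotheses that need verifying and are all immediate, as you say.
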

\begin{proof}
The characterization of the asymptotic slice ranks of tight tensors proven in \cref{thm:sr-tight} shows that $(\log \zeta_{\theta}, g_{\xi})$ form a dual pair family. The claim follows from the minimax correspondence (\cref{legendrevar}).
\end{proof}

In particular, \cref{thm:f-sr-tight} implies that \cref{conj:main} holds for all tight tensors over an arbitrary field $K$.

Moreover, from \cref{thm:f-sr-tight} and the general \cref{conseq1} it follows that:

\begin{corollary} 
\label{corx}
  The asymptotic $\xi$-weighted slice rank $\SR_\xi$ on tight tensors over arbitrary fields is monotone, normalized, super-multiplicative, and super-additive.
\end{corollary}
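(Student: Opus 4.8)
The plan is to imitate the derivation of \cref{cor:sr-super}, replacing the quantum functionals by the support functionals $\zeta_\theta$ of Strassen and \cref{thm:f-sr} by \cref{thm:f-sr-tight}. The one genuinely new ingredient that needs checking is that the class of tensors with tight support is closed under $\otimes$ and $\oplus$, so that it forms a preordered semiring and the dual-pair-family machinery of \cref{sec:correspondence} (in particular the transfer principle \cref{conseq1}) applies. For the direct sum, given injective labelings $u_i$ and $v_i$ witnessing tightness of $S$ and $T$, I would place $u_i$ on the first block and a translate $v_i + c_i$ on the second block, choosing integers $c_i$ with $c_1 + c_2 + c_3 = 0$ but large in absolute value (for instance $c_1 = c_2 = N$, $c_3 = -2N$ with $N$ large) so that the two blocks receive disjoint labels; for the tensor product I would use $w_i(j,j') = u_i(j) + M v_i(j')$ with $M$ exceeding the spread of every $u_i$, which keeps each $w_i$ injective while making the weight sum constant on the support of $S \otimes T$.

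Once this closure is in place, \cref{thm:f-sr-tight} says that $(\log \zeta_\theta,\, g_\xi)$ is a dual pair family on tight tensors, where $g_\xi = \log \SR_\xi$, and so \cref{conseq1} is available. The inputs I would feed into it are the classical properties of the support functionals on tight (indeed \emph{oblique}) tensors \cite{strassen1991degeneration}: $\zeta_\theta$ is multiplicative, additive, monotone and normalized. Concretely, for all tight $S, T$ and all $\theta \in \Theta$ we have $\zeta_\theta(S \otimes T) \ge \zeta_\theta(S)\zeta_\theta(T)$ and $\zeta_\theta(S \oplus T) \ge \zeta_\theta(S) + \zeta_\theta(T)$; moreover $S \le T$ implies $\zeta_\theta(S) \le \zeta_\theta(T)$, and $\zeta_\theta(I_n) = n$.

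Then \cref{conseq1}\ref{item:super-mult} and \ref{item:super-add} immediately yield super-multiplicativity and super-additivity of $\SR_\xi$, item~(v) of \cref{conseq1} converts the monotonicity of $\zeta_\theta$ into $\SR_\xi(S) \le \SR_\xi(T)$ whenever $S \le T$, and item~(vi) with $\alpha = n$ converts $\zeta_\theta(I_n) = n$ into $\SR_\xi(I_n) = n$. (Monotonicity and normalization could alternatively be obtained with no machinery at all, directly from \cref{lem:sr-basic}\ref{lem:srmon} together with $S_\xi(I_n) = n$.) I expect the only real obstacle — and it is a mild one — to be making the closure of the tight class under the semiring operations precise and verifying that tightness, and not merely the weaker obliqueness, is preserved; beyond that the argument is entirely formal, being a verbatim repetition of the proof of \cref{cor:sr-super}.
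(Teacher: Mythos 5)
Your proposal is correct and follows exactly the paper's route: combine \cref{thm:f-sr-tight} with the transfer principle \cref{conseq1}, feeding in Strassen's multiplicativity, additivity, monotonicity and normalization of the support functionals $\zeta_\theta$ on tight tensors (with monotonicity and normalization also available directly from \cref{lem:sr-basic}). Your explicit verification that tightness is closed under $\otimes$ and $\oplus$ (via the labelings $u_i(j)+Mv_i(j')$ and the shifted disjoint labelings on the direct sum) is a detail the paper leaves implicit, and your construction of it is sound.
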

We do not know if there is a direct proof of \cref{corx} not involving the connection to the Strassen's functionals.

\subsection{Min-entropy quantum functionals}\label{subsec:min-entropy}

Let $F_{\theta, \infty}$ be the min-entropy quantum functionals as defined in \cref{subsec:intro-min-entropy}.

\begin{theorem}\label{th:min-entropy-basic}
The min-entropy quantum functionals are monotone, normalized and super-mul\-ti\-plica\-tive.
\end{theorem}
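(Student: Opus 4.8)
The plan is to establish the three properties of $F_{\theta,\infty}$ directly from the singular-value/norm description of the moment polytope, exactly mirroring the arguments for the ordinary quantum functionals in \cite{DBLP:conf/stoc/ChristandlVZ18}, but replacing the Shannon entropy $H$ by the min-entropy $H_\infty$ throughout. Recall that $F_{\theta,\infty}(T) = \sup_{S \in G\cdot T} \prod_{i=1}^3 \bigl(\norm[0]{S_{(i)}}_F^2 / \norm[0]{S_{(i)}}_2^2\bigr)^{\theta_i}$, so all three claims reduce to understanding how the Frobenius and spectral norms of the flattenings behave under the $G$-action, under tensor product, and under the unit tensor.

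First, \emph{normalization}: for the unit tensor $I_n = \sum_{i=1}^n e_i\otimes e_i\otimes e_i$, every flattening $(I_n)_{(i)}$ is (up to the identification $(\C^n)^{\otimes 2}\cong \C^{n^2}$) the $n\times n^2$ matrix whose rows are the standard basis vectors $e_i\otimes e_i$; all $n$ singular values equal $1$, so $r((I_n)_{(i)})$ is uniform on $n$ points and $H_\infty = \log_2 n$. One must also check that no element of $\overline{G\cdot I_n}$ does better; since $G\cdot I_n$ consists of tensors $\sum_i a_i\otimes b_i\otimes c_i$ with $\{a_i\},\{b_i\},\{c_i\}$ related by fixed invertible maps, and the moment polytope of $I_n$ is the single point $((1/n,\dots),(1/n,\dots),(1/n,\dots))$ — a known fact, also recoverable from $\SR_\xi(I_n)=\min_i n^{1/\xi_i}$ via \cref{thm:sr-polytope} — the sup equals $\prod_i n^{\theta_i} = n$.

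Second, \emph{monotonicity}: if $S \le T$ then $G\cdot S \subseteq \overline{G\cdot T}$ (the orbit closure of a restriction is contained in the orbit closure), so $\Pi(S)\subseteq\Pi(T)$, and taking the max over a smaller set can only decrease $F_{\theta,\infty}$. This is identical to the argument in \cite{DBLP:conf/stoc/ChristandlVZ18} and uses nothing about which entropy is chosen.

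Third, \emph{super-multiplicativity}: $F_{\theta,\infty}(S\otimes T) \ge F_{\theta,\infty}(S)\cdot F_{\theta,\infty}(T)$. Here the point is that for $S'\in G\cdot S$ and $T'\in G\cdot T$ the tensor $S'\otimes T'$ lies in $G\cdot(S\otimes T)$, and the flattenings satisfy $(S'\otimes T')_{(i)} = S'_{(i)}\otimes T'_{(i)}$ as matrices (Kronecker product); hence $\norm[0]{(S'\otimes T')_{(i)}}_F = \norm[0]{S'_{(i)}}_F\,\norm[0]{T'_{(i)}}_F$ and $\norm[0]{(S'\otimes T')_{(i)}}_2 = \norm[0]{S'_{(i)}}_2\,\norm[0]{T'_{(i)}}_2$, so the ratio inside the product is multiplicative. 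Taking the supremum over the product set $G\cdot S \times G\cdot T$ and restricting appropriately (or passing to closures by continuity of the norms) yields $\ge$. Equivalently, in entropy language, $H_\infty(p\otimes q) = H_\infty(p) + H_\infty(q)$ since the maximal coordinate of a product distribution is the product of the maximal coordinates, and $\Pi(S\otimes T)\supseteq \{(p_1\otimes q_1, p_2\otimes q_2, p_3\otimes q_3) \mid p\in\Pi(S), q\in\Pi(T)\}$.

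The main obstacle is the normalization claim: one must be sure that $\Pi(I_n)$ is genuinely the single uniform point and that no $S$ in the orbit closure has a flattening whose singular-value distribution is more spread out in the min-entropy sense than uniform. This is where one invokes either the explicit computation of the moment polytope of the unit tensor or the already-established \cref{thm:sr-polytope}; the other two properties are essentially formal once the norm formula for $F_{\theta,\infty}$ is in hand.
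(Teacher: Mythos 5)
Your monotonicity and super-multiplicativity arguments coincide with the paper's: monotonicity from $\Pi(S)\subseteq\Pi(T)$ under restriction, and super-multiplicativity from the multiplicativity of the Frobenius and spectral norms of the flattenings under the Kronecker product (the column permutation relating $(S'\otimes T')_{(i)}$ to $S'_{(i)}\otimes T'_{(i)}$ is harmless since it preserves singular values). Those two parts are fine.

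The normalization step, however, rests on a false claim. The moment polytope $\Pi(I_n)$ as defined in this paper is taken over the orbit \emph{closure} $\overline{G\cdot I_n}$, which contains every degeneration of $I_n$ --- in particular the rank-one tensor $e_1\otimes e_1\otimes e_1$, whose marginals are point masses with $H_\infty=0$. (Even the image of the orbit itself is not a single point: acting with $\diag(2,1,\dots,1)$ on one leg already skews the singular values away from uniform. For $n=2$ the polytope $\Pi(I_2)$ is in fact the full three-qubit polytope.) Your parenthetical fallback via \cref{thm:sr-polytope} only certifies that the uniform point \emph{lies in} $\Pi(I_n)$, i.e., the lower bound $F_{\theta,\infty}(I_n)\geq n$; it says nothing about the rest of the polytope and so cannot deliver the upper bound. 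Fortunately the upper bound you were worried about is trivial and needs no polytope computation: every $p_i\in\Pi(I_n)$ is a probability vector on $n$ outcomes, so $H_\infty(p_i)\leq \log_2 n$; equivalently, $H_\infty\leq H$ gives $F_{\theta,\infty}(I_n)\leq F_\theta(I_n)=n$, which is exactly the route the paper takes. With that substitution your argument becomes the paper's proof.
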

\begin{proof}
Monotonicity follows from the fact that if $S \leq T$, then $\Pi(S) \subseteq \Pi(T)$. For the normalization, for every $r \in \N$ it is clear that $F_{\theta, \infty}(I_r) \leq F_\theta(I_r) = r$. On the other hand, it follows directly from the spectral definition of the moment polytope that the flattenings of $I_r$ in the standard basis have uniform singular values and so $F_{\theta, \infty}(I_r) \geq r$. Super-multiplicativity follows from the characterization (see Equation~\eqref{eq:charac}) of the min-entropy quantum functionals as
\[
F_{\theta,\infty}(T) = \sup_{S \in G\cdot T} \prod_{i=1}^3 \Biggl( \frac{ \norm[0]{S_{(i)}}_F^2}{\norm[0]{S_{(i)}}_2^2} \Biggr)^{\!\!\theta_i},
\]
and the fact that the Frobenius norm and spectral norm are multiplicative under the Kronecker product.
\end{proof}

Recall that in the introduction we defined, for $\xi \in \Xi$, the function
\[
    G_{\xi, \infty}(T) = \max_{p \in \Pi(T)} \min_{i\in [3]}  2^{H_\infty(p_i) / \xi_i}.
\]
Since the min-entropy $H_\infty$ is continuous and concave, we find (analogous to the proof of \cref{thm:f-sr}) via the minimax correspondence (\cref{legendrevar}) that:

\begin{theorem}\label{thm:m-f-sr}
For every $\xi \in \Xi$ and every tensor $T$ we have
\[
    \SR_{\xi, \infty}(T) = \min_{\theta \in \Theta(\xi)} {F_{\theta, \infty}(T)}^{1/\langle \theta, \xi\rangle}.
\]
For every $\theta \in \Theta$ and every tensor $T$ we have
\[
    F_{\theta, \infty}(T) = \max_{\xi \in \Xi} \SR_{\xi, \infty}(T)^{\langle \theta, \xi\rangle}.
\]
\end{theorem}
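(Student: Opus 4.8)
The plan is to observe that, for complex tensors, the pair consisting of the logarithm of the min-entropy quantum functional and the logarithm of $\SR_{\xi,\infty}$ forms a dual pair family in the sense of \cref{sec:correspondence}, and then to invoke the minimax correspondence (\cref{legendrevar}) essentially as a black box. Concretely, I would take the index set to be the complex tensors equipped with the tensor product, direct sum and restriction preorder; for a tensor $T$ I would let $\Pi(T)$ be its moment polytope, which is compact and convex; and for $i \in [3]$ I would set $h_i(p) = H_\infty(p_i)$, the min-entropy of the $i$th block of a point $p \in \Pi(T)$. With these choices, unwinding the definitions of \cref{subsec:intro-min-entropy} shows that the associated functions $f_\theta(T) = \max_{p\in\Pi(T)}\langle\theta,h(p)\rangle$ and $g_\xi(T) = \max_{p\in\Pi(T)}\min_{i\in[3]} h_i(p)/\xi_i$ become precisely $\log_2 F_{\theta,\infty}(T)$ and $\log_2 \SR_{\xi,\infty}(T)$ respectively, so that the two identities to be proved are exactly the exponentials of the relations $g_\xi = \min_{\theta\in\Theta(\xi)} f_\theta/\langle\theta,\xi\rangle$ and $f_\theta = \max_{\xi\in\Xi} g_\xi\langle\theta,\xi\rangle$ provided by \cref{legendrevar}, using $2^{a/b} = (2^a)^{1/b}$.

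The one input that must be supplied by hand is that the functions $h_i$ meet the hypotheses of the dual-pair framework: nonnegativity, continuity, and concavity. Nonnegativity is immediate, since any probability vector $q$ has $\max_j q_j \le 1$ and hence $H_\infty(q) = -\log_2\max_j q_j \ge 0$; continuity on each fixed format is clear as well. The remaining point is the concavity of the min-entropy, which is what makes $p\mapsto\langle\theta,h(p)\rangle$ concave and hence quasiconcave — the property used by the von Neumann minimax theorem (\cref{th:minimax}) in the proof of \cref{legendrevar}. This I expect to be the main obstacle, since, unlike the Shannon-entropy case of the ordinary quantum functionals, it calls for a dedicated argument rather than a citation. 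It is worth noting that only the first identity, $\SR_{\xi,\infty}(T) = \min_{\theta\in\Theta(\xi)} F_{\theta,\infty}(T)^{1/\langle\theta,\xi\rangle}$, actually uses this concavity input; the second identity, $F_{\theta,\infty}(T) = \max_{\xi\in\Xi}\SR_{\xi,\infty}(T)^{\langle\theta,\xi\rangle}$, already follows from the nonnegativity and continuity of the $h_i$, exactly as in the second half of the proof of \cref{legendrevar}.

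Finally, in contrast with the ordinary asymptotic weighted slice rank there is no $\limsup$-versus-limit issue and no positive-characteristic subtlety to settle here, since $\SR_{\xi,\infty}$ is by definition an optimization over the moment polytope rather than an asymptotic limit of weighted slice ranks; the theorem is therefore a direct consequence of the abstract minimax machinery once the hypotheses above are checked. As a by-product, feeding the monotonicity, normalization and super-multiplicativity of $F_{\theta,\infty}$ from \cref{th:min-entropy-basic} into \cref{conseq1} recovers the corresponding properties of $\SR_{\xi,\infty}$.
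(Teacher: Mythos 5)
Your approach is exactly the paper's: the paper proves this theorem in one line by asserting that ``the min-entropy $H_\infty$ is continuous and concave'' and invoking \cref{legendrevar} with $\Pi(T)$ the moment polytope and $h_i(p)=H_\infty(p_i)$. You are also right that there is no $\limsup$ issue here and that the second identity only needs nonnegativity and continuity of the $h_i$, since the second half of the proof of \cref{legendrevar} never touches concavity.

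However, the one input you defer --- concavity of $H_\infty$ --- is not a technical loose end that a ``dedicated argument'' will supply: it is false. On the simplex in $\R^2$, take $p=(1,0)$ and $q=(\tfrac12,\tfrac12)$; then $H_\infty(p)=0$, $H_\infty(q)=1$, but $H_\infty(\tfrac12 p+\tfrac12 q)=H_\infty(\tfrac34,\tfrac14)=\log_2\tfrac43\approx 0.415<\tfrac12$. The min-entropy is only \emph{quasi}concave (its superlevel sets $\{q:\max_j q_j\le 2^{-c}\}$ are convex), and quasiconcavity of each $h_i$ is not enough for the step of \cref{legendrevar} that proves the first identity: the von Neumann minimax theorem (\cref{th:minimax}) needs $x\mapsto\langle\theta,h(x)\rangle/\langle\theta,\xi\rangle$ to be quasiconcave in $x$, and a nonnegative combination of quasiconcave functions need not be quasiconcave --- indeed $\tfrac12 H_\infty(p)+\tfrac12 H_\infty(q)$ fails quasiconcavity at the pair of points above. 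So the ``$\le$'' (weak duality) direction of the first identity and all of the second identity go through, but the ``$\ge$'' direction of $\SR_{\xi,\infty}(T)=\min_{\theta\in\Theta(\xi)}F_{\theta,\infty}(T)^{1/\langle\theta,\xi\rangle}$ is not established by this route. To close the gap you would need either a direct proof that $p\mapsto\langle\theta,H_\infty(p)\rangle$ is quasiconcave on the specific polytopes $\Pi(T)$ (which the counterexample makes doubtful), or a genuinely different argument for the existence of a saddle point. (To be fair, the paper's own one-line proof rests on the same unproven --- and as stated incorrect --- concavity claim, so you have reproduced its gap rather than introduced a new one; but flagging the hypothesis without verifying it does not discharge it.)
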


Again, via the general \cref{conseq1} we find that:

\begin{corollary}\label{cor:m-sr-super}
  For every $\xi \in \Xi$ the function $\SR_{\xi, \infty}$ is monotone, normalized and super-mul\-ti\-plica\-tive.
\end{corollary}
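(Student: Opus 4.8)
The plan is to derive \cref{cor:m-sr-super} from \cref{th:min-entropy-basic} and \cref{thm:m-f-sr} in exactly the way \cref{cor:sr-super} is obtained from the corresponding facts about the ordinary quantum functionals: one takes the already-known properties of the min-entropy quantum functionals $F_{\theta,\infty}$ and pushes them through the transfer theorem \cref{conseq1}. The enabling observation is that $(\log_2 F_{\theta,\infty},\, \log_2 \SR_{\xi,\infty})$ is a dual pair family in the sense of \cref{sec:correspondence}: the compact convex set attached to a complex tensor $T$ is the moment polytope $\Pi(T)$, and the coordinate functions $h_i(p)=H_\infty(p_i)$ are continuous and (quasi)concave, so that $\log_2 F_{\theta,\infty}(T)=\max_{p\in\Pi(T)}\langle\theta,h(p)\rangle$ and $\log_2 \SR_{\xi,\infty}(T)=\max_{p\in\Pi(T)}\min_i h_i(p)/\xi_i$ fit the template of \cref{def:conjugate-pair}. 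This is precisely what underlies \cref{thm:m-f-sr}, and it is what licenses applying \cref{conseq1} with $F_\theta=F_{\theta,\infty}$ and $G_\xi=\SR_{\xi,\infty}$.

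With this in hand the three properties follow immediately. For \emph{monotonicity}, $F_{\theta,\infty}$ is monotone by \cref{th:min-entropy-basic}, hence $\SR_{\xi,\infty}$ is monotone by part~(v) of \cref{conseq1}; equivalently, one reads it off $\SR_{\xi,\infty}(T)=\min_{\theta\in\Theta(\xi)}F_{\theta,\infty}(T)^{1/\langle\theta,\xi\rangle}$, a minimum of increasing functions of the monotone quantities $F_{\theta,\infty}(T)$. For \emph{normalization}, $F_{\theta,\infty}(I_n)=n$ for all $\theta$ by \cref{th:min-entropy-basic}, whence $\SR_{\xi,\infty}(I_n)=n$ for all $\xi$ by part~(vi) of \cref{conseq1}; directly, $\min_{\theta\in\Theta(\xi)}n^{1/\langle\theta,\xi\rangle}=n$ since $0<\langle\theta,\xi\rangle\le 1$ with equality attained at any $\theta$ supported on an index $i$ with $\xi_i=1$. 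For \emph{super-multiplicativity}, $F_{\theta,\infty}(S\otimes T)\ge F_{\theta,\infty}(S)F_{\theta,\infty}(T)$ for all $\theta$ by \cref{th:min-entropy-basic}, so $\SR_{\xi,\infty}(S\otimes T)\ge \SR_{\xi,\infty}(S)\SR_{\xi,\infty}(T)$ for all $\xi$ by part~(i) of \cref{conseq1}.

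I do not expect a genuine obstacle here, since the substantive work is already done: the Frobenius-norm/spectral-norm characterization of $F_{\theta,\infty}$ that yields super-multiplicativity in \cref{th:min-entropy-basic}, and the minimax machinery behind \cref{thm:m-f-sr}. The only point requiring a little care is to confirm that the min-entropy functionals really do fit the dual-pair-family set-up so that \cref{conseq1} applies verbatim --- i.e.\ that $H_\infty$ has the regularity assumed in \cref{sec:correspondence} --- but this is inherited from the same fact used to prove \cref{thm:m-f-sr}.
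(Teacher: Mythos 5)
Your proposal is correct and matches the paper's argument, which likewise obtains \cref{cor:m-sr-super} by combining \cref{th:min-entropy-basic} with the dual-pair-family structure underlying \cref{thm:m-f-sr} and applying the transfer theorem \cref{conseq1}. The extra direct verifications you give for monotonicity and normalization are fine but not needed.
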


The function $\SR_{\xi, \infty}(T)$ is similar to the G-stable rank that was introduced in \cite{derksen2020gstable}. However, our weighting is slightly different from the weighting using in the definition of G-stable rank.

\appendix

\section{Invariance, polystability and semistability}\label{sec:sstest}

We will prove the semistability test given in \cref{lem:irrsemis}. 
This lemma is essentially an adaptation of a result proven by Kempf~\cite{kempf1978instability} to the tensor action of the group $\SL(V_1) \times \SL(V_2) \times \SL(V_3)$.

The result of Kempf actually deals with a notion that is stronger than semistability, called polystability.
Given a representation $V$ of a reductive group $G$, any nonzero element $v \in V$ is called \emph{polystable} if the orbit~$G\cdot v$ is Zariski closed.
Clearly, polystable implies semistable.
When working in projective space, we call an element $[v] \in \bbP V$ polystable if and only if $v \in V$ is polystable.

Kempf proved the following result:

\begin{theorem}[{Kempf's polystability test~\cite[Cor.~5.1]{kempf1978instability}}]\label{th:kempf}
Let $V$ be a representation of a reductive group $G$.
Then $G$ acts on the projective space $\bbP V$ of lines in $V$.
Assume that $G$ has no nontrivial central $1$-parameter subgroup. If the stabilizer of an element $y \in \bbP V$ in $G$ is not contained in any proper parabolic subgroup of $G$, then $y$ is polystable under $G$.
\end{theorem}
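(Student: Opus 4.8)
This statement is \cite[Cor.~5.1]{kempf1978instability}, so strictly the plan is to cite it; but let me indicate how the proof goes, since it is a clean assembly of the Hilbert--Mumford criterion with the main technical results of \cite{kempf1978instability}. First I would set up the numerical framework. Fix a Weyl-invariant length function $\|\cdot\|$ on the set of one-parameter subgroups (1-PS) of $G$. For a nonzero $v \in V$ lying over $y \in \bbP V$ and a 1-PS $\lambda$, let $\mu(v,\lambda)$ be the minimal $\mathbb{G}_m$-weight occurring in $v$ under $\lambda$; so $\mu(v,\lambda)<0$ means $\lim_{t\to 0}\lambda(t)\cdot v = 0$, while $\mu(v,\lambda)=0$ with $\lim_{t\to 0}\lambda(t)\cdot v$ in a strictly smaller orbit means $\lambda$ degenerates $y$ without annihilating a lift. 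By Hilbert--Mumford, $y$ fails to be semistable iff some $\lambda$ has $\mu(v,\lambda)<0$; and $y$ is semistable but not polystable iff $G\cdot v$ is not closed in $V$, in which case the boundary of $G\cdot v$ contains a unique closed orbit, which is nonzero because $y$ is semistable, and after replacing $v$ by a nonzero point of it some $\lambda$ with $\mu(v,\lambda)=0$ moves $v$ off its own orbit. In either ``bad'' case there is a 1-PS that destabilizes $y$ inside $\bbP V$.

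The next step is to invoke Kempf's optimal-1-PS machinery. In either bad case, the infimum of $\mu(v,\lambda)/\|\lambda\|$ over destabilizing $\lambda$ is attained, and the optimal $\lambda$ form a single conjugacy class under a canonical parabolic subgroup $P = P(v)$, which coincides with the associated parabolic $P(\lambda) = \{\, g \in G : \lim_{t\to 0}\lambda(t)g\lambda(t)^{-1}\ \text{exists}\,\}$ of every optimal $\lambda$. Since $g \in \mathrm{Stab}_G(y)$ forces $g\cdot v = cv$ for a scalar $c$, conjugation by $g$ carries optimal 1-PS of $v$ to optimal 1-PS of $v$, hence preserves the canonical class, hence $\mathrm{Stab}_G(y)\subseteq P$. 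It then remains to see that $P$ is a \emph{proper} parabolic. An optimal destabilizing $\lambda$ is nontrivial, and here is where the hypothesis that $G$ has no nontrivial central 1-PS enters: it guarantees that no central 1-PS can have $\mu(v,\lambda)<0$ (a central 1-PS acts on $V$ by a single character and hence trivially on $\bbP V$), so the optimal destabilizing $\lambda$ is automatically non-central; and a nontrivial non-central 1-PS satisfies $P(\lambda)\subsetneq G$. Combining everything: if $y$ is not polystable --- either not semistable, or semistable with non-closed orbit --- then $\mathrm{Stab}_G(y)$ lies in the proper parabolic $P(v)$. The contrapositive is exactly the assertion of the theorem.

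The main obstacle, and the step I would not attempt to redo, is Kempf's construction of the canonical optimal destabilizing 1-PS and the proof that all optimal 1-PS lie in one parabolic containing the stabilizer; this is the convex-geometric analysis of the state function $\lambda\mapsto \mu(v,\lambda)/\|\lambda\|$ together with the structure theory of parabolic subgroups in \cite{kempf1978instability}, and in the non-polystable case one must also produce the weight-zero 1-PS reaching the closed orbit in the boundary. For the applications in this paper this may be treated as a black box: the only consequence we use is \cref{lem:irrsemis}, where the hypothesis ``$\mathrm{Stab}_G(y)$ is not contained in any proper parabolic'' will be forced by $y$ being $\Gamma$-invariant with the representations $V_i$ irreducible, so that the stabilizer surjects onto an image acting irreducibly on each factor and therefore cannot be confined to a proper parabolic.
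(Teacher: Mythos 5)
The paper offers no proof of this statement: it is imported verbatim as Corollary~5.1 of Kempf's paper~\cite{kempf1978instability}, exactly as you propose, so citing it is the "same approach." Your supplementary sketch of Kempf's argument (Hilbert--Mumford, the optimal destabilizing one-parameter subgroup and its canonical parabolic containing the projective stabilizer, and the no-nontrivial-central-1-PS hypothesis forcing that parabolic to be proper) is a faithful outline, apart from a sign slip --- you define $\mu(v,\lambda)$ as the minimal weight occurring in $v$ but then use Mumford's opposite convention in which $\mu(v,\lambda)<0$ means $\lim_{t\to 0}\lambda(t)\cdot v=0$.
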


We now apply Kempf's theorem to tensors. Let $V = V_1 \otimes V_2 \otimes V_3$ and let the (reductive) group $G = \SL(V_1) \times \SL(V_2) \times \SL(V_3)$ act in the usual way on $V$. We call a nonzero tensor $T \in V_1 \otimes V_2 \otimes V_3$ \emph{polystable} if the orbit $G \cdot T$ is closed for the group $G = \SL(V_1) \times \SL(V_2) \times \SL(V_3)$.
Since polystability implies semistability, the following lemma implies~\cref{lem:irrsemis}.

\begin{lemma}\label{lem:irrpolys}
Let $\Gamma$ be a group and let $V_1, V_2, V_3$ be irreducible representations of $\Gamma$. Consider the induced action of $\Gamma$ on the tensor product $V_1 \otimes V_2 \otimes V_3$.
Every nonzero tensor $T \in V_1 \otimes V_2 \otimes V_3$ invariant under $\Gamma$ is polystable.
\end{lemma}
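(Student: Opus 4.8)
The plan is to apply Kempf's polystability test (\cref{th:kempf}) to the reductive group $G = \SL(V_1)\times\SL(V_2)\times\SL(V_3)$ acting on $V = V_1\otimes V_2\otimes V_3$, with $y = [T]\in\bbP V$; we may assume $K$ is algebraically closed. Two of the three hypotheses of \cref{th:kempf} are immediate: $G$ is reductive, and its center consists of triples of scalar matrices that are roots of unity, hence is finite, so $G$ has no nontrivial central $1$-parameter subgroup (a $1$-parameter subgroup is a morphism out of the connected group $\mathbb{G}_m$, so a central one has trivial image). The work is to verify the remaining hypothesis: $\mathrm{Stab}_G([T])$ is contained in no proper parabolic subgroup of $G$.

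First I would record the structure of parabolics: since $G$ is a direct product, every parabolic subgroup of $G$ has the form $P_1\times P_2\times P_3$ with $P_i$ parabolic in $\SL(V_i)$, and it is proper precisely when some $P_i$ is proper; such a proper $P_i$ lies in a maximal parabolic and therefore stabilizes some nonzero proper subspace $W_i\subsetneq V_i$. Next I would show that the image of $\Gamma$ lies in $\mathrm{Stab}_G([T])$ up to scalars. Write $\rho_i\colon\Gamma\to\GL(V_i)$ for the given representations. Fix $\gamma\in\Gamma$; since $K$ is algebraically closed, choose $c_i(\gamma)\in K^\times$ with $c_i(\gamma)^{\dim V_i} = (\det\rho_i(\gamma))^{-1}$, so that $g_\gamma \coloneqq (c_1(\gamma)\rho_1(\gamma),\, c_2(\gamma)\rho_2(\gamma),\, c_3(\gamma)\rho_3(\gamma))\in G$. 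The $\Gamma$-invariance of $T$ gives $(\rho_1(\gamma)\otimes\rho_2(\gamma)\otimes\rho_3(\gamma))\,T = T$, whence $g_\gamma\cdot T = c_1(\gamma)c_2(\gamma)c_3(\gamma)\,T$, so $g_\gamma\in\mathrm{Stab}_G([T])$.

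Then I would argue by contradiction. Suppose $\mathrm{Stab}_G([T])\subseteq P_1\times P_2\times P_3$ with, say, $P_1$ a proper parabolic of $\SL(V_1)$ stabilizing a nonzero proper subspace $W_1\subsetneq V_1$. Then for every $\gamma\in\Gamma$ the first component $c_1(\gamma)\rho_1(\gamma)$ of $g_\gamma$ lies in $P_1$ and hence preserves $W_1$; since preservation of a subspace is unaffected by scaling, $\rho_1(\gamma)$ preserves $W_1$ for all $\gamma$, so $W_1$ is a nonzero proper $\Gamma$-invariant subspace of $V_1$, contradicting the irreducibility of $V_1$. The same argument applies to $P_2$ and $P_3$, so $\mathrm{Stab}_G([T])$ is contained in no proper parabolic of $G$. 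Kempf's test then gives that $[T]$ is polystable, i.e.\ that $G\cdot T$ is Zariski closed, which is the assertion of \cref{lem:irrpolys}; since polystability implies semistability, \cref{lem:irrsemis} follows as well.

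I expect the only genuinely delicate point --- the main obstacle --- to be the bookkeeping in the second step: $\rho(\Gamma)$ need not land in $\SL(V_1)\times\SL(V_2)\times\SL(V_3)$, so one has to rescale each tensor leg to produce honest elements of $G$ fixing the line $[T]$, and then observe that this rescaling does not interfere with the invariant-subspace argument. Everything else (the product structure of parabolics, reductivity, finiteness of the center) is standard.
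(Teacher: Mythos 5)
Your proposal is correct and follows essentially the same route as the paper's proof: verify the finiteness of the center of $\SL(V_1)\times\SL(V_2)\times\SL(V_3)$, reduce proper parabolics of the product to stabilizers of a nonzero proper subspace in one factor, rescale each $\rho_i(\gamma)$ by a $(\dim V_i)$-th root of the inverse determinant to land in the stabilizer of the line $[T]$, and invoke irreducibility plus Kempf's test. The point you flag as delicate (the determinant rescaling and its harmlessness for the invariant-subspace argument) is exactly the step the paper also carries out explicitly.
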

\begin{proof}
The center of $\SL_n$ is the finite subgroup $\{\alpha\,  \id \mid \alpha \in K,\, \alpha^n = 1\}$. Therefore, the center of $G = \SL(V_1) \times \SL(V_2) \times \SL(V_3)$ is also a finite group and thus does not contain a nontrivial $1$-parameter subgroup.

For any vector space $V$, the parabolic subgroups of $\SL(V)$ correspond to flags in $V$, in the following way.
If $\{0\} = U_0 \subset U_1 \subset \dots \subset U_k = V$ is a flag, then the corresponding parabolic subgroup consists of all maps in $\SL(V)$ that preserve the flag (that is, leave each space $U_i$ invariant).
Each proper parabolic subgroup is therefore contained in the subgroup $P_U$ which leaves some nonzero proper subspace $U$ invariant (these are maximal proper parabolic subgroups).
The parabolic subgroups of the product group $\SL(V_1) \times \SL(V_2) \times \SL(V_3)$ are precisely the products of parabolic subgroups of the $\SL(V_i)$.
Therefore, each proper parabolic subgroup of $\SL(V_1) \times \SL(V_2) \times \SL(V_3)$ leaves invariant a nonzero proper subspace $U_i \subset V_i$ in one of the three factors.

Each element $g \in \Gamma$ acts on $V_i$ via some linear map $\rho_i(g) \in \GL(V_i)$.
We can scale these linear maps to obtain elements
$\hat\rho_i(g) \coloneqq (\det \rho_i(g))^{-1/{\dim V_i}}\, \rho_i(g) \in \SL(V_i)$.
Note that for every group element $g \in \Gamma$ the corresponding triple $(\hat\rho_1(g), \hat\rho_2(g), \hat\rho_3(g)) \in \SL(V_1) \times \SL(V_2) \times \SL(V_3)$ leaves the line $[T] \in \bbP(V_1 \otimes V_2 \otimes V_3)$ invariant, because $T$ is invariant under $\Gamma$.

Since the representations $V_1$, $V_2$ and $V_3$ are irreducible, the only subspaces $U_i \subseteq V_i$ that are invariant under $\rho_i(g)$ for all $g \in \Gamma$ are $V_i$ itself and the zero subspace. This means that there is no proper parabolic subgroup of $\SL(V_1) \times \SL(V_2) \times \SL(V_3)$ containing $\{(\hat\rho_1(g), \hat\rho_2(g), \hat\rho_3(g)) \mid g \in \Gamma\}$.
Using Kempf's polystability test (\cref{th:kempf}) we conclude that $T$ is polystable.
\end{proof}

\paragraph{Acknowledgements}
MC and VL acknowledge financial support from the European Research Council (ERC Grant Agreement No.\ 818761) and VILLUM FONDEN via the QMATH Centre of Excellence (Grant No.\ 10059).
MC acknowledges financial support from the Novo Nordisk Foundation (grant NNF20OC0059939 “Quantum for Life”).
JZ was supported by a Simons Junior Fellowship and NWO Veni grant VI.Veni.212.284.
JZ thanks Visu Makam, Michael Walter, Yinan Li and Harold Nieuwboer for helpful discussions.
We thank Fabien Pazuki for the help with the French translation of the abstract.
The authors are grateful to the anonymous referees for helpful comments.


\bibliographystyle{elsarticle-num}
\bibliography{jmpa-22-3550.bib}

\end{document}